\newtheorem{thm}{Theorem}[section]
\newtheorem{cor}[thm]{Corollary}
\newtheorem{lem}[thm]{Lemma}
\newtheorem{prop}[thm]{Proposition}
\theoremstyle{definition}
\newtheorem{defn}[thm]{Definition}
\theoremstyle{remark}
\newtheorem{rem}[thm]{Remark}
\newtheorem{ex}[thm]{\bf Example}
\newcommand{\bt}{\begin{thm}}
\newcommand{\et}{\end{thm}}
\newcommand{\bc}{\begin{cor}}
\newcommand{\ec}{\end{cor}}
\newcommand{\bl}{\begin{lem}}
\newcommand{\el}{\end{lem}}
\newcommand{\bp}{\begin{prop}}
\newcommand{\ep}{\end{prop}}
\newcommand{\bd}{\begin{defn}}
\newcommand{\ed}{\end{defn}}
\newcommand{\br}{\begin{rem}}
\newcommand{\er}{\end{rem}}
\newcommand{\bpr}{\begin{proof}}
\newcommand{\epr}{\end{proof}}
\newcommand{\bex}{\begin{ex}}
\newcommand{\eex}{\end{ex}}
\newcommand{\bcd}{\begin{CD}}
\newcommand{\ecd}{\end{CD}}
\newcommand{\bi}{\begin{itemize}}
\newcommand{\ei}{\end{itemize}}
\newcommand{\be}{\begin{enumerate}}
\newcommand{\ee}{\end{enumerate}}
\newcommand{\ba}{\begin{array}}
\newcommand{\ea}{\end{array}}
\newcommand{\beq}{\begin{equation}}
\newcommand{\eeq}{\end{equation}}
\newcommand{\beqa}{\begin{eqnarray}}
\newcommand{\eeqa}{\end{eqnarray}}
\newcommand{\bca}{\begin{cases}}
\newcommand{\eca}{\end{cases}}
\newcommand{\bal}{\begin{aligned}}
\newcommand{\eal}{\end{aligned}}
\newcommand{\ds}{\displaystyle}
\newcommand{\ts}{\textstyle}
\newcommand{\N}{{\mathbb N}}
\newcommand{\Z}{{\mathbb Z}}
\newcommand{\R}{{\mathbb R}}
\newcommand{\C}{{\mathbb C}}
\newcommand{\T}{{\mathbb T}}
\newcommand{\D}{{\mathbb D}}
\newcommand{\cC}{{\mathcal  C}}
\newcommand{\cD}{{\mathcal  D}}
\newcommand{\frU}{{\mathfrak U}}
\newcommand{\bs}{\boldsymbol}
\newcommand{\bsalpha}{{\boldsymbol \alpha}}
\newcommand{\supp}{\mathrm{supp}}
\newcommand{\1}{{\bf 1}}
\newcommand{\0}{{\bf 0}}
\newcommand{\re}{\mathrm{Re}}
\newcommand{\im}{\mathrm{Im}}
\newcommand{\sgn}{\mathrm{sgn}}
\newcommand{\x}x
\newcommand{\<}{\langle}
\renewcommand{\>}{\rangle}
\newcommand{\uk}{|\kern-3.3pt\uparrow\>}
\newcommand{\dk}{|\kern-3.3pt\downarrow\>}
\newcommand{\ub}{\<\uparrow\kern-3.3pt|}
\newcommand{\db}{\<\downarrow\kern-3.3pt|}
\newcommand{\ua}{\uparrow}
\newcommand{\da}{\downarrow}
\begin{document}


\title{\bf One-dimensional quantum walks \break with one defect}

\author[CGMV]{M. J. Cantero, F. A. Gr\"unbaum, L. Moral, L. Vel\'azquez}

\address[F. A. Gr\"unbaum]{Department of Mathematics \\ University of California \\ Berkeley \\ CA \\ 94720}
\email[F. A. Gr\"unbaum]{grunbaum@math.berkeley.edu}

\address[M. J. Cantero, L. Moral, L. Vel\'azquez]{Departamento de Matemática Aplicada \\ Universidad de Zaragoza \\ Zaragoza \\ Spain}
\email[M. J. Cantero]{mjcante@unizar.es}
\email[L. Moral]{lmoral@unizar.es}
\email[L. Vel\'azquez]{velazque@unizar.es}

\thanks{The research of the second author was supported in part by the Applied Math. Sciences subprogram of the Office of Energy Research, USDOE,
        under Contract DE-AC03-76SF00098.}
\thanks{The research of the rest of the authors was partly supported by the Spanish grant from the Ministry of Science and Innovation, project code
        MTM2008-06689-C02-01, and by Project E-64 of Diputaci\'on General de Arag\'on (Spain).}

\subjclass[2000]{81P68, 47B36, 42C05}

\keywords{Quantum walks, localization, CGMV method, CMV matrices, scalar and matrix Laurent orthogonal polynomials on the unit circle}

\date{}

\begin{abstract}
The CGMV method allows for the general discussion of localization properties for the states of a one-dimensional quantum walk, both in the case of
the integers and in the case of the non negative integers. Using this method we classify, according to such localization properties, all the quantum
walks with one defect at the origin, providing explicit expressions for the asymptotic return probabilities to the origin.
\end{abstract}

\maketitle

\section{Introduction} \label{S:Int}

A quantum random walk can be considered as a quantum analog of the more familiar classical random walk on a lattice. In this much simpler case the
study of interesting ``return properties" can be said to have started with G. Pólya (1921), \cite{Po}, who proved that the simplest unbiased walk
will eventually return to the origin with probability one in dimension not greater than two. This holds in spite of the fact that the probability of
returning to the origin in $n$ steps, denoted by $p(n)$, converges to zero as $n$ tends to infinity.

In this paper we consider aspects of this problem in the context of quantum random walks (QWs). We give a method that allows us to analyze the
asymptotic behaviour of the quantity $p(n)$ for two-state one-dimensional QWs, leading to the discovery of general features of this asymptotics in
the case of distinct coins. We also apply this method to the QWs that are given by one arbitrary common coin at each site except for an arbitrary
``defective" coin at the origin. Before giving a summary of the results in the paper we give a brief review of the more standard case of a classical
random walk.

In this more traditional case, and for the simplest unbiased walk, one can obtain an expression for $p(n)$ in many different ways. This is basically
true since one is dealing with a translation invariant evolution. As soon as this condition is relaxed things become much harder. One of the methods
that can (at least in theory) give an expression for $p(n)$ goes back to the work of S. Karlin and J. McGregor (1959), \cite{KMcG}. Their method
applies to a birth-and-death process on the nonnegative integers but they themselves already contemplated extending their method to such processes on
the integers by using matrix valued objects. This has been implemented recently and independently by H. Dette et al. (2006), \cite{DRSZ},  as well as
by F.A. Gr\"unbaum (2007), \cite{G1}. In both cases one makes crucial use of the matrix valued orthogonal polynomials introduced by M.G. Krein
(1949), \cite{K1,K2}.

The Karlin-McGregor (KMcG) method alluded to above proceeds by starting from a very simple case, say a situation with no defects, considering its
orthogonality measure and related function theoretical objects such as its Stieltjes transform. One then introduces one defect and studies the effect
that this has on the orthogonality measure. In this way one can get an expression for the new value of $p(n)$. Clearly this process can be iterated a
finite number of times to obtain situations that are far from the initial basic case. The KMcG method proceeds by studying the effect of the defect
on the Stieltjes transform of the measure and it produces much more than an expression for the new value of $p(n)$. The interested reader can find a
host of examples treated in the fashion described above in the papers \cite{C,DRSZ,G1,G2,G3,G4,G5,GdI}.

In a previous paper, \cite{CGMV}, we found an analog of the KMcG method that can be used in the case of QWs on the integers and the non negative
integers. This method has recently been used by other authors, see \cite{KS}. We use their terminology and call this the CGMV method. In the CGMV
method the central role is played by a convenient spectral representation of the unitary operator that describes the dynamics of the walk. In that
sense this is very close in spirit to the mathematical foundations of Quantum Mechanics laid down by people like E. Schröedinger, W. Heisenberg, M.
Born and J. von Neumann.

The connection with QWs is embodied in the fact that a unitary operator has a nice five-diagonal representation, see \cite{CMV}. While this is true
for any unitary operator, the search for such a five-diagonal representation can be a difficult task in the infinite dimensional case. However, for a
standard two-state one-dimensional QW the five-diagonal representation comes from an appropriate reordering of the usual basis of states.

While the method itself is fairly similar to the one used by Karlin and McGregor, the results that one obtains in the quantum case are, regardless of
the method, often intrinsically different from the classical ones. For instance, the application of CGMV tools yields in \cite{CGMV} a number of
situations dealing with the notion of recurrence, where one sees that classical and quantum walks have little in common. Some of these differences
were previously discovered in \cite{SJK1,SJK2,SJK3} using a Fourier approach to quantum recurrence.

It is possible that many more such differences will become apparent as one develops good tools to study these kinds of walks. For instance,
appropriate mathematical techniques such as
the fractional moment method, \cite{AM}, have been key to prove for QWs with random coins a peculiarity of quantum systems in random environments
called ``Anderson localization": the wave packets stay trapped in a finite region of space for all time. There is a huge literature on this subject
that has its origin in a discrete model in solid state physics, see \cite{And}. The interested reader may get a guide to the literature as well as
very nice discussion of this notion by consulting \cite{HJS}. Concerning Anderson localization in QWs see \cite{JM}.

For an environment that is not too disordered, weaker ``localization" properties could distinguish quantum from classical random walks too. A
candidate for such a property is related to the asymptotics of the return probability to a given site. To be more precise we start at a qubit state
$(\alpha,\beta)$ at a site $k$ on the lattice and look at the probability $p_{\alpha,\beta}^{(k)}(n)$ that after time $n$ the state will again be
localized at site $k$ but with unspecified spin orientation. We adopt the terminology of \cite{K} and say that the qubit $(\alpha,\beta)$ at a site
$k$ exhibits localization when $p_{\alpha,\beta}^{(k)}(n)$ does not converge to zero as $n$ tends to infinity. We will deal with this notion of
localization of single qubit states, which should not be confused with such a global property of disordered quantum systems like Anderson
localization, and sometimes we will refer to it as ``single state localization".

Irreducible two-state QWs on the integers with a constant coin can not exhibit localization, see \cite{ABNVW} for instance. Nevertheless,
localization can appear for homogeneous QWs if we increase the internal degrees of freedom, \cite{IKS,IK}, or the dimension of the lattice,
\cite{IKK,WKKK}. A way to get localization in one dimension keeping a constant coin is to destroy the translation invariance by considering the
lattice of the non negative integers, see \cite{CGMV,KS}.

Another way of breaking the translation invariance is to introduce defects on the integers. In this context an important difference between classical
and quantum random walks has already been recognized concerning localization of single states. In \cite{K}, N. Konno has studied a perturbation of
the Hadamard QW on the integers and proved that in the case of a particular defect at the origin, and starting from a particular qubit too, the
quantity $p_{\alpha,\beta}^{(0)}(n)$ fails to converge to zero. The method used in \cite{K} is a path counting argument. One can think of our paper
as an effort to explore the phenomenon uncovered in \cite{K} in a more general setting by using the CGMV method.

Localization can depend on the initial state as well as on the coins of the QW. Our aim is to get a better understanding of these dependencies. This
appears to be a difficult task if one resorts to the usual approaches, like path counting or Fourier transform, which usually allow for the
computation of the asymptotics of $p_{\alpha,\beta}^{(k)}(n)$ at specific initial qubits or for very limited coin models.

The state and coin dependencies of localization seem to be more tractable from the CGMV point of view. This is specially true for the state
dependence, and we will be able to have a picture of it for a large class of QWs. The coin dependence, which is much more involved, will be discussed
in some explicit examples. More precisely, the CGMV approach will allow us to perform a complete classification, according to the localization
behaviour, of all the QWs with a coin which is constant except for a defect at the origin.

In addressing the study of the dynamics of any quantum system one should keep in mind that there is a large literature on the subject. The classical
book ``Non-relativistic quantum dynamics" by W. Amrein (1981) gives a good treatment. For a more recent account see the book ``Hilbert space methods
in quantum mechanics" (2009) by the same author.

Most of the work deals with the spectral properties of the unitary group that implements the quantum evolution and its dynamical consequences. There
are also several papers dealing with these issues of which we just mention two: at a fairly technical level one can consult \cite {L}, or later work
of this same author, and at a more approachable level see \cite{Kn}.

In a number of ways one can say that what we do in great detail is related to the bread and butter of Quantum Mechanics. What we call the return
probability $p_{\alpha,\beta}^{(k)}(n)$ is related (but not identical) to what Y. Last, \cite{L}, calls the ``survival probability", see (1.3) of his
paper. The main point of our paper is that we manage to compute these quantities explicitly and then study their asymptotic values as $n$ goes to
infinity.

We now try to give an account of the contents of the present paper, which deals with the localization properties of two-state QWs on the integers and
the non negative integers.

Just as in the KMcG method used for classical random walks one studies the effect that introducing defects on a simpler walk has on the Stieltjes
transform of the orthogonality measure, in the quantum case we need to study the so called Schur function of the measure. The analysis of the special
features of the Schur functions related to QWs with distinct coins, with special emphasis in the case of the integers, takes up a good part of
section \ref{S:CMV} in the present paper. The results of this section will be key for the rest of the paper.

The general problem of the single state localization within the CGMV language is studied in section \ref{S:Loc}. The main result is Theorem
\ref{T:Loc}, which establishes a connection between localization in a QW and the singular part of the corresponding orthogonality measure. In
particular, the absence of such a singular part leads to the absence of localized states, while the presence of mass points implies the existence of
states which exhibit localization.

Among other consequences the CGMV method shows that, despite its name, single state localization is in fact a quasi global property for a large class
of QWs so that a localization dichotomy holds in many cases: either no state exhibits localization or at most one state per site is localization
free. Such a dichotomy is ensured when the singular part of the measure is not purely continuous. That is, the state dependence of localization is
quite regular for a wide class of QWs. In this case we will refer to QWs with or without localization omitting any mention to the initial qubit
state.

In section \ref{S:Loc} we see that the localization dichotomy holds in particular for any QW with periodic coins up to a finite number of defective
coins. It is also shown that, among these QWs, the case of strictly periodic coins on the integers is somewhat special because it never gives
localization. QWs on the integers whose coins have period $P$ are the only one-dimensional QWs which are invariant with respect to right and left
translations of $P$ sites. Thus, we can state that the requirement of a (right and left) translation invariance for a one-dimensional QW forces the
absence of localization.

In sections \ref{S:defect} to \ref{S:Ret-defect-Z+} these results are made more specific, both for the non negative integers as well as for the
integers, in the case of the simplest perturbation of the constant coin model: the QWs with a coin which is constant except for one defect at the
origin. These QWs will serve as a laboratory to study the coin dependence of localization in QWs.

As we pointed out, localization already appears for two-state QWs with a constant coin on the non negative integers but not on the integers where the
related measure is absolutely continuous (see \cite{CGMV}). Therefore, the study of QWs with one defect acquires a special relevance in the case of
the integers because they are the nicest laboratory in which one can study localization on such a lattice. Nevertheless, we will perform also the
analysis of one defect on the non negative integers, which will allow us to compare with the case of the integers, thus showing the effects of the
boundary conditions on the localization behaviour.

As a particular case of the periodic QWs with a finite number of defects, the localization dichotomy holds for QWs with one defect. The analysis of
localization in these models becomes the study of the presence of mass points in the corresponding measure.

Section \ref{S:defect} deals with the general features of the orthogonality measure for QWs with one defect at the origin. It is shown that they fall
into groups with the same measure up to rotations, thus with the same localization behaviour. These groups are labelled by two parameters $a,b$ in
the unit disk for the non negative integers, while an additional labelling parameter $\omega$ in the unit circle appears in the case of the integers.

Section \ref{S:Loc-defect-Z} shows that $\omega$ actually plays no role in the presence or absence of localization, hence localization for one defect
at the origin only depends on two parameters $a,b$ which are defined by the coins of the QW (in a different fashion for the integers and the non
negative integers, see (\ref{E:ab-Z+}) and (\ref{E:ab-Z})). The parameter $a$ depends only on the unperturbed coin and the phases of the perturbed
one, while $b$ depends on the perturbed coin and the phases of the unperturbed one.

Sections \ref{S:Loc-defect-Z} and \ref{S:Ret-defect-Z} give a very exhaustive analysis of localization for one defect on the integers, and the same
in depth analysis is carried out in sections \ref{S:Loc-defect-Z+} and \ref{S:Ret-defect-Z+} for the case of the non negative integers. Sections
\ref{S:Loc-defect-Z} and \ref{S:Loc-defect-Z+} discuss the coin dependence, providing a characterization of localization in terms of the parameters
$a,b$. Sections \ref{S:Ret-defect-Z} and \ref{S:Ret-defect-Z+} yield explicit results for the asymptotic return probability to the origin for any
defect and any initial qubit state.

Different localization figures in the space of parameters $a,b$ are presented in sections \ref{S:Loc-defect-Z} and \ref{S:Loc-defect-Z+}. They
demonstrate that, in contrast to the classical case (see \cite[Section 6]{K}), localization is dominant under the presence of a defect. Nevertheless,
these figures also show that, at the same time, there are situations where the absence of localization is stable under small perturbations of $a$ and
$b$, i.e, under small perturbations of the coins. In particular, given $|a|$, the largest regions for the parameter $b$ without localization appear
when $a$ is imaginary, both for the integers and the non negative integers. Then there is no localization if $\im b \geq \im a > 0$ or $\im b \leq
\im a < 0$.

Concerning the return probabilities $p_{\alpha,\beta}^{(k)}(n)$, the CGMV method not only shows the dependence of its asymptotics on the initial
qubit $(\alpha,\beta)$, but also explains the reason for its possible oscillatory asymptotic behaviour: the presence of the factors $z^n$ in
(\ref{E:ARP}), where $z$ are the mass points of the measure. The return probabilities turn out to be convergent when the singular part of the measure
is a unique mass point or, in the case of several mass points, when some symmetries force the mutual cancellation of the cross terms in
(\ref{E:ARP}).

General reasons imply that the return probabilities $p_{\alpha,\beta}^{(k)}(2n-1)$ at odd time vanish for any QW on the integers, which is related to
the fact that the mass points on the integers always appear in pairs which are symmetric with respect to the origin. Therefore, in the presence of
localization on the integers we can not expect the convergence of $p_{\alpha,\beta}^{(k)}(n)$ but at most of $p_{\alpha,\beta}^{(k)}(2n)$. This
convergence takes place for sure when the singular part of the measure is a single pair of symmetric mass points.

QWs on the integers with one defect at the origin have a symmetry under reflection of the sites with respect to the origin which causes the
convergence of $p_{\alpha,\beta}^{(0)}(2n)$ regardless of the number of mass points. However, for one defect on the non negative integers with more
than one mass point, as well as when considering the return probability $p_{\alpha,\beta}^{(k)}(2n)$ to a site $k \neq 0$ on the integers with more
than two mass points, the asymptotic behaviour is in general oscillatory.

The state dependence can also disappear in some special situations like, for instance, one defect at the origin on the integers with an imaginary
value of $a$. In this case section \ref{S:Ret-defect-Z} proves that $p_{\alpha,\beta}^{(0)}(2n)$ actually converges to the same limit for any initial
qubit. This covers as a special case the result obtained in \cite{K} for a concrete perturbation of the Hadamard model and a specific initial state.
We not only prove that the result in \cite{K} is state independent, but we also extend this to a more general model with one defect since we find
that the state independence holds whenever the products of the diagonal coefficients of the perturbed and the unperturbed coins have the same phase.

Section \ref{S:Ret-defect-Z} gives explicitly $p_{\alpha,\beta}^{(0)}=\lim_{n\to\infty}p_{\alpha,\beta}^{(0)}(2n)$ for any QW on the integers with
one defect at the origin. The convergence of $p_{\alpha,\beta}^{(0)}(2n)$ allows for the analysis of the maximum asymptotic return probabilities to
the origin, both when running over the qubits $(\alpha,\beta)$ and also when running over the parameters $a,b$, i.e., over the coins of the model. We
find that $\max_{\alpha,\beta}p_{\alpha,\beta}^{(0)}$ approaches one when $|a|\to1$ provided that $|\im a - \im b|$ is bounded from below. The
consequence is that, given a defective coin, for most of the choices of the non defective coin there exist qubits which asymptotically return to the
origin with probability almost one, as long as the non defective coin is close enough to an anti-diagonal one.

In the special case of an imaginary $a$, the asymptotic return probability $p_{\alpha,\beta}^{(0)}$ does not depend on the state and approaches one
when $|a|\to1$ if $|a-b|$ is bounded from below. This implies that, when the products of the diagonal coefficients of both coins have similar phases,
all the qubits asymptotically return to the origin with probability almost one, provided that the non defective coin is close enough to an
anti-diagonal one.

The results above not only show the strength of the CGMV method for the analysis of localization in QWs, but they address new research lines with
could lead to new and surprising quantum effects. For instance, it could be very interesting to analyze the localization behaviour of QWs where the
localization dichotomy is not ensured, i.e., those whose measure has a singular part which is strictly continuous, and specially those with a purely
singular continuous measure. The physical consequences of singular continuous spectra in Quantum Mechanics is an active field of research, see for
instance \cite{L} and the references therein. The study of this problem in those models which can be considered the simplest realization of a
dynamical quantum system, i.e., the QWs on a lattice, could make it easier to understand the quantum meaning of a singular continuous spectrum and
its dynamical implications.

\section{QWs, CMV matrices and Schur functions} \label{S:CMV}

Throughout the paper we will deal with QWs on a state space with a basis $\{|k\!\ua\>,|k\!\da\>\}_{k\in\Z\text{ or }\Z_+}$, where
$\Z_+=\{0,1,2,\dots\}$. The quantum dynamics will be governed by unitary coins
$$
C_k = \begin{pmatrix} c_{11}^{(k)} & c_{12}^{(k)} \\ c_{21}^{(k)} & c_{22}^{(k)} \end{pmatrix},
\qquad c_{jj}^{(k)} \neq 0, \qquad j=1,2, \qquad k\in\Z \text{ or } \Z_+,
$$
so that the one-step transitions are given by the operator $\frU$ defined by
$$
\frU |k\!\ua\> = c_{11}^{(k)} |k+1\!\ua\> + c_{21}^{(k)} |k-1\!\da\>, \quad \frU |k\!\da\> = c_{12}^{(k)} |k+1\!\ua\> + c_{22}^{(k)} |k-1\!\da\>,
$$
except when $k=0$ for $\Z_+$, in which case the unitarity forces a transition
$$
\frU |0\!\ua\> = c_{11}^{(0)} |1\!\ua\> + c_{21}^{(0)} |0\!\ua\>, \qquad \frU |0\!\da\> = c_{12}^{(0)} |1\!\ua\> + c_{22}^{(0)} |0\!\ua\>.
$$
If a diagonal element of a coin were null, the QW would decouple into independent ones, so the requirement $c_{jj}^{(k)}\neq0$ is not really a
restriction, but it simply means that we are considering only irreducible QWs.

Once an order is chosen for the basis $\{|k\!\ua\>,|k\!\da\>\}_{k\in\Z\text{ or }\Z_+}$, it gives a matrix representation $\bs{U}$ of the transition
operator $\frU$, which we will call the transition matrix of the QW. If $|i\>$ denotes the $i$-th vector of such an ordered basis, we will use the
convention that $\bs{U}=(\bs{U}_{i,j})$ is defined by $\frU |i\> = \sum_j \bs{U}_{i,j} |j\>$, so that the one-step evolution $|\Psi\> \to \frU
|\Psi\>$ reads as $\psi \to \psi \bs{U}$ using the coordinates $\psi=(\psi_0,\psi_1,\dots)$ of $|\Psi\>=\sum_i\psi_i|i\>$.

It was shown in \cite{CGMV} that the order
\begin{equation} \label{E:order}
\renewcommand{\arraystretch}{1.5}
\begin{tabular}{|c|c|}
\hline
$\Z_+$ & $|0\!\ua\>,|0\!\da\>,|1\!\ua\>,|1\!\da\>,|2\!\ua\>,|2\!\da\>,\dots$
\\ \hline
$\Z$ & $|0\!\ua\>,|\!-\!1\!\da\>,|\!-\!1\!\ua\>,|0\!\da\>,|1\!\ua\>,|\!-\!2\!\da\>,|\!-\!2\!\ua\>,|1\kern-2pt\da\>,\dots$
\\ \hline
\end{tabular}
\end{equation}
gives a transition matrix $\bs{U} = \bs{\Lambda} \bs{\cC} \bs{\Lambda}^\dag$, with $\bs{\Lambda}=\text{diag}(\bs{\lambda}_0,\bs{\lambda}_1,\dots)$
diagonal unitary and
\begin{equation} \label{E:CMV}
\begin{gathered}
\bs{\mathcal{C}} = \text{\small$\begin{pmatrix}
\bs{\alpha}_0^\dag & \bs{0} & \bs{\rho}_0^L \\
\bs{\rho}_0^R & \bs{0} & -\bs{\alpha}_0 & \bs{0} \\
\bs{0} & \bs{\alpha}_2^\dag & \bs{0} & \bs{0} & \bs{\rho}_2^L \\
& \bs{\rho}_2^R & \bs{0} & \bs{0} & -\bs{\alpha}_2 & \bs{0} \\
& & \bs{0} & \bs{\alpha}_4^\dag & \bs{0} & \bs{0} & \bs{\rho}_4^L \\
& & & \bs{\rho}_4^R & \bs{0} & \bs{0} & -\bs{\alpha}_4 & \bs{0} \\
& & & & \ddots & \ddots & \ddots & \ddots & \ddots
\end{pmatrix}$},
\\
\|\bs{\alpha}_k\|<1,
\qquad
\bs{\rho}_k^L=(1-\bs{\alpha}_k^\dag\bs{\alpha}_k)^{1/2},
\qquad
\bs{\rho}_k^R=(1-\bs{\alpha}_k\bs{\alpha}_k^\dag)^{1/2},
\end{gathered}
\end{equation}
$\bs{\lambda}_k$, $\bs{\alpha}_k$ and $\bs{\rho}_k^{L,R}$ being scalars for $\Z_+$ and $2\times2$-matrices for $\Z$.

If we refer at the same time to both, scalar and matrix objects, we will use the boldface notation. However, whenever we wish to distinguish between
scalars and matrices we will reserve for the last ones the boldface notation, so in such a case we will denote by $\lambda_k$, $\alpha_k$ and
$\rho_k=\rho_k^R=\rho_k^L$ the scalars of a QW on $\Z_+$.

Denoting by $e^{i\sigma_j^{(k)}}$ the phase of $c_{jj}^{(k)}$, the coefficients $\bs{\lambda}_k$, $\bs{\alpha}_k$ and $\bs{\rho}_k$ are obtained from
the coins by means of
\begin{center}
\renewcommand{\arraystretch}{1}
{\Small
\begin{tabular}{|c|c|c|c|}
\hline
$\Z_+$
& $\begin{aligned}
   & \lambda_{-1}=\lambda_0=1 \\ & \lambda_{2k+1} = e^{i\sigma_2^{(k)}} \lambda_{2k-1} \\ & \lambda_{2k+2} = e^{-i\sigma_1^{(k)}} \lambda_{2k}
  \end{aligned}$
& $\ds \alpha_{2k} = \overline{c}_{21}^{(k)} \frac{\lambda_{2k}}{\lambda_{2k-1}}$
& $\rho_{2k} = \sqrt{1-|\alpha_{2k}|^2}$
\\ \hline
$\Z$
& $\begin{aligned}
   & \bs{\lambda}_{2k-1} = \begin{pmatrix} \lambda_{-2k} & \kern-7pt 0 \\ 0 & \kern-7pt \lambda_{2k-1} \end{pmatrix}
   \\
   & \bs{\lambda}_{2k} = \begin{pmatrix} \lambda_{2k} & \kern-7pt 0 \\ 0 & \kern-7pt \lambda_{-2k-1} \end{pmatrix}
  \end{aligned}$
& $\bs{\alpha}_{2k} = \begin{pmatrix} 0 & \kern-11pt -\overline{\alpha}_{-2k-2} \\ \alpha_{2k} & \kern-11pt 0 \end{pmatrix}$
& $\begin{aligned}
   & \bs{\rho}_{2k}^R = \begin{pmatrix} \rho_{-2k-2} & \kern-7pt 0 \\ 0 & \kern-7pt \rho_{2k} \end{pmatrix}
   \\
   & \bs{\rho}_{2k}^L = \begin{pmatrix} \rho_{2k} & \kern-7pt 0 \\ 0 & \kern-7pt \rho_{-2k-2} \end{pmatrix}
  \end{aligned}$
\\ \hline
\end{tabular}
}
\end{center}

The matrix $\bs{\cC}$ is a special case of a kind of matrices that play a central role in the theory of orthogonal polynomials (OP) on the unit
circle $\T=\{z\in\C : |z|=1\}$, the so called CMV matrices
$$
\text{\Small$\begin{pmatrix}
\bs{\alpha}_0^\dagger & \kern-3pt \bs{\rho}_0^L\bs{\alpha}_1^\dagger & \kern-3pt \bs{\rho}_0^L\bs{\rho}_1^L &
\kern-3pt \0 & \kern-3pt \0 &  \kern-3pt \0 & \kern-3pt \0 & \kern-3pt\dots
\\
\bs{\rho}_0^R & \kern-3pt -\bs{\alpha}_0\bs{\alpha}_1^\dagger & \kern-3pt -\bs{\alpha}_0\bs{\rho}_1^L &
\kern-3pt \0 & \kern-3pt \0 & \kern-3pt \0& \kern-3pt \0 & \kern-3pt\dots
\\
\0 & \kern-3pt \bs{\alpha}_2^\dagger\bs{\rho}_1^R & \kern-3pt -\bs{\alpha}_2^\dagger\bs{\alpha}_1 &
\kern-3pt \bs{\rho}_2^L\bs{\alpha}_3^\dagger & \kern-3pt \bs{\rho}_2^L\bs{\rho}_3^L & \kern-3pt \0 & \kern-3pt \0 & \kern-3pt\dots
\\
\0 & \kern-3pt \bs{\rho}_2^R\bs{\rho}_1^R & \kern-3pt -\bs{\rho}_2^R\bs{\alpha}_1 & \kern-3pt -\bs{\alpha}_2\bs{\alpha}_3^\dagger & \kern-3pt
-\bs{\alpha}_2\bs{\rho}_3^L & \kern-3pt \0 & \kern-3pt \0 & \kern-3pt\dots
\\
\0 & \kern-3pt \0 & \kern-3pt \0 & \kern-3pt \bs{\alpha}_4^\dagger\bs{\rho}_3^R & \kern-3pt -\bs{\alpha}_4^\dagger\bs{\alpha}_3 &
\kern-3pt \bs{\rho}_4^L\bs{\alpha}_5^\dagger & \kern-3pt \bs{\rho}_4^L \bs{\rho}_5^L & \kern-3pt\dots
\\
\0 & \kern-3pt \0& \kern-3pt \0 & \kern-3pt \bs{\rho}_4^R\bs{\rho}_3^R & \kern-3pt -\bs{\rho}_4^R\bs{\alpha}_3 &
\kern-3pt -\bs{\alpha}_4\bs{\alpha}_5^\dagger & \kern-3pt -\bs{\alpha}_4\bs{\rho}_5^L & \kern-3pt\dots
\\
\dots & \kern-3pt\dots & \kern-3pt\dots & \kern-3pt\dots & \kern-3pt\dots & \kern-3pt\dots & \kern-3pt\dots & \kern-3pt\dots
\end{pmatrix}$}.
$$
The case of a QW corresponds to $\bs{\alpha}_{2k+1}=\0$ so that $\bs{\rho}_{2k+1}^L=\bs{\rho}_{2k+1}^R=\1$ stands for the number 1 or the $2\times2$
identity matrix for $\Z_+$ and $\Z$ respectively.

The connection between the transition matrix $\bs{U}$ of a QW and the CMV matrices ensures that the Laurent polynomials $\bs{X}_k$ defined by
\begin{equation} \label{E:eig}
\bs{U}\bs{X}(z) = z \bs{X}(z), \quad \bs{X}=(\bs{X}_0,\bs{X}_1,\dots)^T, \quad \bs{X}_0(z)=\1,
\end{equation}
constitute a sequence of orthogonal Laurent polynomials (OLP) with respect to a probability measure $\bs{\mu}$ supported on the unit circle (see
\cite{CGMV}), i.e.,
$$
\int_\T \bs{X}_j(z) d\bs{\mu}(z) \bs{X}_k(z)^\dag = \1\delta_{j,k}.
$$
Such OLP and measures are scalar or $2\times2$-matrix valued for $\Z_+$ and $\Z$ respectively. The coefficients $\bs{\alpha}_k$ are known as the
Verblunsky or reflection coefficients of the measure $\bs{\mu}$.

The orthogonality and the ``eigenvalue" equation (\ref{E:eig}) yield a KMcG formula for the QW, i.e., an OLP representation of the $n$-step
transition amplitudes (see \cite[pages 479 and 483]{CGMV})
$$
(\bs{U}^n)_{j,k} = \int_\T z^n \bs{X}_j(z) d\bs{\mu}(z) \bs{X}_k(z)^\dag.
$$
Here $(\bullet)_{j,k}$ stands for the $(j,k)$-th element in $\Z_+$ and for the $(j,k)$-th $2\times2$-block in $\Z$. The KMcG formula is the
cornerstone of the CGMV method, which takes advantage of the OP techniques for the analysis of QWs.

Useful tools in the theory of OP on the unit circle are the so called Carathéodory and Schur functions related to $\bs{\mu}$, defined respectively by
$$
\bs{F}(z) = \int_\T \frac{t+z}{t-z} d\bs{\mu}(t), \quad \bs{f}(z) = z^{-1} (\bs{F}(z)-\1)(\bs{F}(z)+\1)^{-1}, \quad |z|<1.
$$
The Carathéodory and Schur functions of a probability measure on $\T$ can be characterized as the analytic functions on the unit disk
$\D=\{z\in\C:|z|<1\}$ such that $\bs{F}(0)=\1$, $\re\bs{F}(z)>0$ and $\|\bs{f}(z)\|<1$ for $z\in\D$ respectively, where $\re A =
\frac{1}{2}(A+A^\dag)$ and $\im A = \frac{1}{2i}(A-A^\dag)$ for any square matrix $A$. We will assume that $\bs{F}$ and $\bs{f}$ are extended to the
unit circle by their radial boundary values, which exist for almost every point of $\T$, so $\re\bs{F}(e^{i\theta})\geq0$ and
$\|\bs{f}(e^{i\theta})\|\leq1$ for a.e. $\theta$.

These analytic functions provide a shortcut between the measure $\bs{\mu}$ and the Verblunsky coefficients $\bs{\alpha}_k$ because both of them can
be recovered from $\bs{F}$ or $\bs{f}$. We will now point out the connections needed for the rest of the paper (see \cite{DaPuSi} for the general
matrix case).

If $d\bs{\mu}(e^{i\theta})=\bs{w}(\theta)\frac{d\theta}{2\pi}+d\bs{\mu}_s(e^{i\theta})$ is the Lebesgue decomposition of $\bs\mu$ into an absolutely
continuous and a singular part,
\begin{equation} \label{E:SC-measure}
\begin{aligned}
& \bs{w}(\theta) = \re\bs{F}(e^{i\theta}), \quad \text{ for a.e. } \theta,
\\
& \det\bs{w}(\theta) \neq 0 \Leftrightarrow \|\bs{f}(e^{i\theta})\| < 1, \quad \text{ for a.e. } \theta,
\\
& \supp\bs{\mu}_s \subset \{z\in\T : \ts \lim_{r\ua1}\text{tr}(\re\bs{F}(rz))=\infty\},
\\
& \bs{\mu}(\{z\}) = \ts \lim_{r\ua1} \frac{1-r}{2}\bs{F}(rz), \qquad z\in\T.
\end{aligned}
\end{equation}
In particular, if $z\in\T$ is a pole of an analytic extension of $\bs{F}$, such a pole must be of order one and $z$ is an isolated mass point of
$\bs{\mu}$ with mass $\bs{\mu}(\{z\})=-(2z)^{-1}\text{Res}(\bs{F};z)$.

These properties are well known in the case of scalar measures. Concerning matrix measures, the first two properties can be found in \cite{DaPuSi},
while the remaining ones can be reduced to the scalar case by noticing that $\bs{\mu}$ is absolutely continuous with respect to the scalar trace
measure tr$\bs{\mu}$.

On the other hand, starting at $\bs{f}_0=\bs{f}$, the Verblunsky coefficients $\bs{\alpha}_k=\bs{f}_k(0)$ can be recovered through the Schur
algorithm
$$
\begin{aligned}
\bs{f}_{k+1}(z) & = z^{-1} (\bs{\rho}_k^R)^{-1} (\bs{f}_k(z)-\bs{\alpha}_k)(\1-\bs{\alpha}_k^\dag\bs{f}_k(z))^{-1} \bs{\rho}_k^L
\\
& = z^{-1} \bs{\rho}_k^R \, (\1-\bs{f}_k(z)\bs{\alpha}_k^\dag)^{-1} (\bs{f}_k(z)-\bs{\alpha}_k) \, (\bs{\rho}_k^L)^{-1},
\end{aligned}
$$
which assigns to $\bs{f}$ a sequence of Schur functions $\bs{f}_k$. For this reason, the Verblunsky coefficients of $\bs{\mu}$ are also known as the
Schur parameters of $\bs{f}$. Obviously, the Schur parameters of each Schur iterate $\bs{f}_k$ are obtained deleting the first $k$ parameters of
$\bs{f}$: $\bs{\alpha}_k,\bs{\alpha}_{k+1},\dots$. The Schur algorithm can be inverted to give
$$
\begin{aligned}
\bs{f}_k(z) & = (\bs{\rho}_k^R)^{-1} (z\bs{f}_{k+1}(z)+\bs{\alpha}_k)(\1+\bs{\alpha}_k^\dag z\bs{f}_{k+1}(z))^{-1} \bs{\rho}_k^L
\\
& = \bs{\rho}_k^R \, (\1+z\bs{f}_{k+1}(z)\bs{\alpha}_k^\dag)^{-1} (z\bs{f}_{k+1}(z)+\bs{\alpha}_k) \, (\bs{\rho}_k^L)^{-1}.
\end{aligned}
$$
In the scalar case the factors $\bs{\rho}_k^{L,R}$ cancel each other in the above formulas due to the commutativity, giving
\begin{equation} \label{E:SchurAlg-scalar}
f_{k+1}(z) = \frac{1}{z} \frac{f_k(z)-\alpha_k}{1-\overline{\alpha}_kf_k(z)},
\qquad
f_k(z) = \frac{zf_{k+1}(z)+\alpha_k}{1+\overline{\alpha}_kzf_{k+1}(z)}.
\end{equation}

Both, the measure and the Schur function, are univocally determined by the Verblunsky coefficients. Some results relating Schur functions and Schur
parameters will be of interest for us.

\begin{prop} \label{P:Schur-1}
If $\bsalpha_0,\bsalpha_2,\bsalpha_4,\dots$ are the Schur parameters of the Schur function $\bs{f}(z)$, then
$\bsalpha_0,\0,\bsalpha_2,\0,\bsalpha_4,\0,\dots$ are the Schur parameters of $\bs{f}(z^2)$.
\end{prop}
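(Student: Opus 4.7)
The plan is to compute the Schur iterates of $\bs{g}(z) := \bs{f}(z^2)$ directly, showing by induction that every second iterate reproduces an iterate of $\bs{f}$ (in the variable $z^2$), while the intermediate iterates carry a simple factor of $z$.

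Concretely, write $\bs{f}_k$ for the Schur iterates of $\bs{f}$ (so $\bs{f}_k(0)=\bs{\alpha}_{2k}$ under the paper's labelling) and $\bs{g}_k$ for those of $\bs{g}$. I claim
\[
\bs{g}_{2k}(z)=\bs{f}_k(z^2), \qquad \bs{g}_{2k+1}(z)=z\,\bs{f}_{k+1}(z^2), \qquad k\ge 0.
\]
The base case $\bs{g}_0(z)=\bs{f}(z^2)=\bs{f}_0(z^2)$ is the definition of $\bs{g}$. Granting $\bs{g}_{2k}(z)=\bs{f}_k(z^2)$, evaluation at $z=0$ gives $\bs{g}_{2k}(0)=\bs{f}_k(0)=\bs{\alpha}_{2k}$, so the $(2k)$-th Schur parameter of $\bs{g}$ is indeed $\bs{\alpha}_{2k}$. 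Applying the Schur algorithm to $\bs{g}_{2k}$ and comparing it, after the substitution $w=z^2$, with the same algorithm applied to $\bs{f}_k$ yields
\[
\bs{g}_{2k+1}(z)
= z^{-1}(\bs{\rho}_{2k}^R)^{-1}(\bs{f}_k(z^2)-\bs{\alpha}_{2k})(\1-\bs{\alpha}_{2k}^\dag \bs{f}_k(z^2))^{-1}\bs{\rho}_{2k}^L
= z\,\bs{f}_{k+1}(z^2),
\]
which confirms the second half of the inductive claim and also shows $\bs{g}_{2k+1}(0)=\bs{0}$, i.e.\ the $(2k{+}1)$-th Schur parameter of $\bs{g}$ is $\bs{0}$.

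Because that Schur parameter vanishes, the corresponding $\bs{\rho}$-factors equal $\1$, and the next step of the Schur algorithm collapses to $\bs{g}_{2k+2}(z)=z^{-1}\bs{g}_{2k+1}(z)=\bs{f}_{k+1}(z^2)$, completing the induction and hence the proof.

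The calculation is essentially bookkeeping; the only point that demands care is the matrix Schur step. I would use the first (rather than the second) form of the algorithm, since there both $\bs{\rho}_{2k}^R$ and $\bs{\rho}_{2k}^L$ live on the correct sides and the substitution $w=z^2$ goes through unchanged, whereas mixing the two forms risks bracket mistakes. No other obstacle is anticipated: the proposition is really just the statement that the conformal map $w\mapsto w$ at stage $2k$ and the map $z\mapsto z^{-1}\cdot z=\id$ at stage $2k+1$ interleave cleanly, so no scalar/matrix distinction intrudes.
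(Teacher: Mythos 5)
Your proof is correct and is essentially the paper's own argument: the authors likewise set $\bs{g}_{2k}(z)=\bs{f}_k(z^2)$ and $\bs{g}_{2k\pm1}(z)=z\bs{f}_{k(+1)}(z^2)$ and observe that the substitution $z\to z^2$ in the Schur algorithm for $\bs{f}$, combined with $\bs{g}_{2k}(z)=z^{-1}\bs{g}_{2k-1}(z)$, reproduces the Schur algorithm for $\bs{f}(z^2)$. You have merely written out the induction and the single matrix Schur step that the paper leaves implicit.
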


\begin{proof}
Denote $\bs{g}_{2k-1}(z)=z\bs{f}_k(z^2)$ and $\bs{g}_{2k}(z)=\bs{f}_k(z^2)$. The substitution $z \to z^2$ in the Schur algorithm for $\bs{f}(z)$,
together with the relation $\bs{g}_{2k}(z)=z^{-1}\bs{g}_{2k-1}(z)$, gives the Schur algorithm for $\bs{g}(z)=\bs{f}(z^2)$.
\end{proof}

\begin{prop} \label{P:Schur-2}
If a Schur function $\bs{f}$ has Schur parameters $\bs{\alpha}_k$, then for any unitary matrices $V_1,V_2$ the Schur function $V_1\bs{f}V_2$ has
Schur parameters $V_1\bs{\alpha}_kV_2$.
\end{prop}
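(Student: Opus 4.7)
The plan is to show that the transformation $\bs{f} \mapsto V_1 \bs{f} V_2$ commutes with the Schur algorithm, so that it moves each Schur iterate to $V_1 \bs{f}_k V_2$ and hence each parameter $\bs{\alpha}_k = \bs{f}_k(0)$ to $V_1 \bs{\alpha}_k V_2$. First I would verify that $\tilde{\bs{f}} := V_1 \bs{f} V_2$ is actually a Schur function: it is analytic on $\D$ (composition of $\bs{f}$ with constant unitary multiplications), and $\|\tilde{\bs{f}}(z)\| = \|V_1 \bs{f}(z) V_2\| = \|\bs{f}(z)\| < 1$ since left and right multiplication by unitaries preserve the operator norm. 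In particular $\tilde{\bs{\alpha}}_0 := \tilde{\bs{f}}(0) = V_1 \bs{\alpha}_0 V_2$, giving the base case.

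For the inductive step, assume $\tilde{\bs{f}}_k = V_1 \bs{f}_k V_2$, so that $\tilde{\bs{\alpha}}_k = V_1 \bs{\alpha}_k V_2$. The key auxiliary computation is how the factors $\bs{\rho}_k^{L,R}$ transform. From
$$
\tilde{\bs{\alpha}}_k^\dag \tilde{\bs{\alpha}}_k = V_2^\dag \bs{\alpha}_k^\dag V_1^\dag V_1 \bs{\alpha}_k V_2 = V_2^\dag (\bs{\alpha}_k^\dag \bs{\alpha}_k) V_2,
$$
and the unitary invariance of functional calculus on positive operators, one obtains $\tilde{\bs{\rho}}_k^L = V_2^\dag \bs{\rho}_k^L V_2$; analogously $\tilde{\bs{\rho}}_k^R = V_1 \bs{\rho}_k^R V_1^\dag$. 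Now I would substitute these expressions into the first form of the Schur algorithm applied to $\tilde{\bs{f}}_k$:
$$
\tilde{\bs{f}}_{k+1}(z) = z^{-1}(\tilde{\bs{\rho}}_k^R)^{-1}\bigl(\tilde{\bs{f}}_k(z)-\tilde{\bs{\alpha}}_k\bigr)\bigl(\1-\tilde{\bs{\alpha}}_k^\dag\tilde{\bs{f}}_k(z)\bigr)^{-1}\tilde{\bs{\rho}}_k^L.
$$
The outer factors become $V_1(\bs{\rho}_k^R)^{-1} V_1^\dag$ and $V_2^\dag \bs{\rho}_k^L V_2$, while inside we get $V_1(\bs{f}_k-\bs{\alpha}_k)V_2$ and $V_2^\dag(\1-\bs{\alpha}_k^\dag\bs{f}_k)^{-1}V_2$. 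The pairs $V_1^\dag V_1$ and $V_2 V_2^\dag$ collapse to the identity, and what remains on the outside is exactly $V_1$ on the left and $V_2$ on the right, yielding $\tilde{\bs{f}}_{k+1}(z) = V_1 \bs{f}_{k+1}(z) V_2$. Evaluating at $z=0$ closes the induction.

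The only real subtlety is the identity $(V_2^\dag \bs{\alpha}_k^\dag \bs{\alpha}_k V_2)^{1/2} = V_2^\dag \bs{\rho}_k^L V_2$, i.e., that the positive square root respects conjugation by a unitary; this is a standard consequence of the spectral theorem (diagonalize $\bs{\alpha}_k^\dag\bs{\alpha}_k$, take positive square roots, and conjugate back), so no serious obstacle is expected. Apart from that, the proof is pure bookkeeping: the Schur recursion is a rational expression in $\bs{f}_k$, $\bs{\alpha}_k$, $\bs{\rho}_k^L$ and $\bs{\rho}_k^R$ in which left-multiplications by $V_1$ and right-multiplications by $V_2$ can be threaded through untouched, which is exactly what the proposition asserts.
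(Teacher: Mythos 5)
Your proof is correct and follows essentially the same route as the paper's: the paper's proof is precisely the one-line observation that the Schur algorithm is invariant under $\bs{f}\to V_1\bs{f}V_2$, $\bs{\alpha}_k\to V_1\bs{\alpha}_kV_2$ with $\bs{\rho}_k^L\to V_2^\dag\bs{\rho}_k^LV_2$ and $\bs{\rho}_k^R\to V_1\bs{\rho}_k^RV_1^\dag$, and you have simply filled in the bookkeeping (including the unitary-conjugation property of the positive square root) that the paper leaves to the reader.
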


\begin{proof}
Simply check that the Schur algorithm is invariant under the transformation $\bs{f} \to V_1\bs{f}V_2$, $\bs{\alpha}_k \to V_1\bs{\alpha}_kV_2$,
bearing in mind that it maps $\bs{\rho}_k^L \to V_2^\dag\bs{\rho}_k^LV_2$ and $\bs{\rho}_k^R \to V_1\bs{\rho}_k^RV_1^\dag$.
\end{proof}

Proposition \ref{P:Schur-1} states that the Schur functions whose odd Schur parameters vanish are the even Schur functions, i.e., those satisfying
$\bs{f}(-z)=\bs{f}(z)$. In terms of the Carathéodory function this condition reads as $\bs{F}(-z)\bs{F}(z)=\1$, as follows from the inverse relation
\begin{equation} \label{E:Ff}
\bs{F}(z)=(1+z\bs{f}(z))(1-z\bs{f}(z))^{-1}.
\end{equation}

Hence, the Schur function of any QW on $\Z$ or $\Z_+$ must be even. On the other hand, Proposition \ref{P:Schur-2} has the following consequences of
interest for QWs on $\Z$.

\begin{prop} \label{P:antidiag}
Given a sequence of $2\times2$ Schur parameters
$$
\bs{\alpha}_k = {\small \begin{pmatrix} 0 & \alpha_k^- \\ \alpha_k^+ & 0 \end{pmatrix}},
$$
the corresponding Schur and Carathéodory functions are
$$
\begin{aligned}
\bs{f}(z) & = {\small \begin{pmatrix} 0 & f_-(z) \\ f_+(z) & 0 \end{pmatrix}},
\\
\bs{F}(z) & = \frac{1}{1-g(z)} {\small \begin{pmatrix} 1+g(z) & 2zf_-(z) \\ 2zf_+(z) & 1+g(z) \end{pmatrix}}, \qquad g(z) = z^2f_+(z)f_-(z),
\end{aligned}
$$
where $f_\pm$ is the Schur function with Schur parameters $\alpha_k^\pm$.

If $d\bs{\mu}(e^{i\theta}) = \bs{w}(\theta)\frac{d\theta}{2\pi} + d\bs{\mu}_s(e^{i\theta})$ is the Lebesgue decomposition of the related measure,
$$
\det\bs{w}(e^{i\theta})\neq0 \Leftrightarrow |f_\pm(e^{i\theta})|<1, \quad \text{ for a.e. } \theta,
$$
and the singular part $\bs{\mu}_s$ is supported on the roots $z\in\T$ of $g(z)=1$. The mass points are those roots such that
$$
m(z) = \lim_{r\ua1}\frac{1-r}{1-g(rz)} \neq 0,
$$
and the corresponding mass is the singular matrix
$$
\bs{\mu}(\{z\}) = m(z) \begin{pmatrix} 1 & \eta(z) \\ \overline{\eta(z)} & 1 \end{pmatrix}, \qquad \eta(z) = zf_-(z) \in \T.
$$
In particular, if $g$ extends analytically to a neighbourhood of a root $z\in\T$ of $g(z)=1$, then $z$ is a simple isolated root\footnote{If the root
$z$ were not isolated, general principles would imply that $g=1$ on $\D$, which is not possible since $g$ must be a Schur function. Similar comments
for the roots of $h=1$ in the paragraph below Corollary \ref{C:sym}.} and also an isolated mass point with $m(z)=1/zg'(z)$.
\end{prop}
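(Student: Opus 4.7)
The plan is to first reduce the problem to the diagonal case via Proposition \ref{P:Schur-2}, then compute $\bs{F}$ algebraically from (\ref{E:Ff}), and finally extract the singular part from the formulas in (\ref{E:SC-measure}). For the reduction, I would factor the given Schur parameters as $\bs{\alpha}_k=VD_k$, where $V=\left(\begin{smallmatrix}0&1\\1&0\end{smallmatrix}\right)$ is unitary and $D_k=\diag(\alpha_k^+,\alpha_k^-)$. Diagonal Schur parameters give diagonal $\bs{\rho}_k^{L,R}$, so the Schur algorithm acts componentwise and the Schur function with parameters $\{D_k\}$ is simply $\diag(f_+,f_-)$. Proposition \ref{P:Schur-2} with $V_1=V$ and $V_2=\1$ then yields $\bs{f}=V\diag(f_+,f_-)$, which is the anti-diagonal form claimed. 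From (\ref{E:Ff}) I would then observe that $A:=z\bs{f}(z)$ satisfies $A^2=g(z)\1$, so $(\1-A)^{-1}=(1-g)^{-1}(\1+A)$ and $\bs{F}=(1-g)^{-1}(\1+A)^2=(1-g)^{-1}\bigl((1+g)\1+2A\bigr)$, exactly the displayed matrix.

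For the spectral content I would apply (\ref{E:SC-measure}). The identity $\bs{f}^\dag\bs{f}=\diag(|f_+|^2,|f_-|^2)$ gives $\|\bs{f}\|=\max(|f_+|,|f_-|)$, so the absolute continuity criterion collapses to $|f_\pm(e^{i\theta})|<1$ a.e. For the singular support, one computes $\tr(\re\bs{F})=2(1-|g|^2)/|1-g|^2$, which blows up as $r\uparrow1$ only when $g(rz)\to1$, forcing $\supp\bs{\mu}_s$ into the roots of $g=1$ on $\T$.

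For the masses, I would start from $\bs{\mu}(\{z\})=\lim_{r\uparrow1}\frac{1-r}{2}\bs{F}(rz)$ and factor out $m(z)=\lim_{r\uparrow1}(1-r)/(1-g(rz))$. At any $z\in\T$ with $g(z)=1$ (in the boundary-value sense), the equation $z^2f_+(z)f_-(z)=1$ combined with $|f_\pm(z)|\le1$ forces $|zf_\pm(z)|=1$ and $zf_+(z)=\overline{zf_-(z)}$, so the remaining matrix in the limit reduces to $\left(\begin{smallmatrix}1&\eta\\\overline\eta&1\end{smallmatrix}\right)$ with $\eta(z)=zf_-(z)\in\T$.

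Finally, if $g$ admits an analytic continuation to a neighbourhood of $z\in\T$ with $g(z)=1$, then $g\not\equiv1$ (as $g$ is a Schur function) implies that $z$ is isolated among the roots of $g=1$; the simple-pole statement in (\ref{E:SC-measure}) forces $g'(z)\ne0$, and a one-term expansion $1-g(rz)=(1-r)zg'(z)+O((1-r)^2)$ yields $m(z)=1/(zg'(z))$. The main obstacle I expect lies in steps involving boundary behaviour: justifying that the radial limit $g(rz)\to1$ is the correct criterion at a genuine mass point and that the boundary values $f_\pm(z)$ exist and satisfy the symmetry $zf_+(z)=\overline{zf_-(z)}$ there. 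Once the algebraic identity $A^2=g\1$ is in hand, everything else is careful bookkeeping within the Carathéodory--Schur dictionary of (\ref{E:SC-measure}).
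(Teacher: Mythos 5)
Your proposal is correct and follows essentially the same route as the paper: factor $\bs{\alpha}_k = V\,\mathrm{diag}(\alpha_k^+,\alpha_k^-)$ and invoke Proposition \ref{P:Schur-2} to get the anti-diagonal $\bs{f}$, compute $\bs{F}$ from (\ref{E:Ff}), and read off the weight, singular support and masses from (\ref{E:SC-measure}), using $|f_\pm(z)|=1$ at the roots of $g=1$ to obtain the singular mass matrix with $\eta(z)=zf_-(z)$ and $zf_+(z)=\overline{\eta(z)}$. The paper's proof is terser but identical in substance; your explicit use of $A^2=g\1$ and the trace computation merely fill in details the paper leaves to the reader.
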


\begin{proof}
The result for $\bs{f}$ is a direct consequence of Proposition \ref{P:Schur-2} and
$$
\bs{f} = V \begin{pmatrix}f_+&0\\0&f_-\end{pmatrix},
\qquad
\bs{\alpha}_k = V \begin{pmatrix}\alpha_k^+&0\\0&\alpha_k^-\end{pmatrix},
\qquad
V = \begin{pmatrix}0&1\\1&0\end{pmatrix}.
$$
Then, the expression of $\bs{F}$ follows from (\ref{E:Ff}).

The rest of the results are obtained from (\ref{E:SC-measure}) by using the actual form of the Schur and Carathéodory functions. Concerning the
singularity of the masses, simply take into account that the roots $z\in\T$ of $g(z)=1$ must satisfy $|f_\pm(z)|=1$ because $|f_\pm|\leq1$ in $\T$.
Hence, the non diagonal elements of the mass must be proportional to $\eta(z)=zf_-(z)\in\T$ and $zf_+(z)=(zf_-(z))^{-1}=\overline{\eta(z)}$.
\end{proof}

Proposition \ref{P:antidiag} holds for QWs on $\Z$ with $\alpha_{2k}^+=\alpha_{2k}$, $\alpha_{2k}^-=-\overline\alpha_{-2k-2}$ and
$\alpha_{2k+1}^+=\alpha_{2k+1}^-=0$, where $\alpha_{2k}=\overline{c}_{21}^{(k)}\lambda_{2k}/\lambda_{2k-1}$. Then, $f_+$ is the Schur function
associated with the Schur parameters with non negative indices, while $f_-$ corresponds to the Schur parameters with negative indices. Furthermore,
$f_\pm$ are even functions, so $g$ is even too, and this has the following consequence.

\begin{cor} \label{C:sym}
For any QW on $\Z$, the mass points of the corresponding measure appear in pairs $\pm z$ which are symmetric with respect to the origin, and the mass
is given by Proposition \ref{P:antidiag} with $m(-z)=m(z)$ and $\eta(-z)=-\eta(z)$.
\end{cor}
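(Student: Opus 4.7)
The plan is to exploit the evenness of the scalar Schur functions $f_\pm$ associated to the QW on $\Z$, and then read off all the claimed symmetries directly from the formulas in Proposition \ref{P:antidiag}.

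First I would observe that, for a QW on $\Z$, the identification given in the paragraph preceding the corollary yields $\alpha_{2k+1}^+ = \alpha_{2k+1}^- = 0$, so the odd Schur parameters of both scalar Schur functions $f_\pm$ vanish. By Proposition \ref{P:Schur-1}, this is equivalent to $f_\pm$ being even, i.e. $f_\pm(-z) = f_\pm(z)$. Consequently
\[
g(-z) = (-z)^2 f_+(-z) f_-(-z) = z^2 f_+(z) f_-(z) = g(z),
\]
so $g$ is itself an even function on $\D$ (and, by radial boundary values, on $\T$ a.e.).

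Next, by Proposition \ref{P:antidiag} the singular part of $\bs\mu$ is supported on the set of $z \in \T$ where $g(z) = 1$. The evenness of $g$ immediately implies that this set is invariant under $z \mapsto -z$, which proves that mass points occur in pairs $\pm z$. For the symmetry of the masses, I would just plug $-z$ into the formula for $m$:
\[
m(-z) = \lim_{r\ua 1} \frac{1-r}{1-g(-rz)} = \lim_{r\ua 1} \frac{1-r}{1-g(rz)} = m(z),
\]
again using $g(-rz) = g(rz)$. Finally, from $\eta(z) = z f_-(z)$ and the evenness of $f_-$,
\[
\eta(-z) = (-z) f_-(-z) = -z f_-(z) = -\eta(z).
\]

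There is essentially no obstacle here: once the reduction to evenness of $f_\pm$ via Proposition \ref{P:Schur-1} is noted, everything else is a direct substitution into the formulas already established in Proposition \ref{P:antidiag}. The only point requiring a little care is making sure that the relation $\alpha_{2k+1}^\pm = 0$ really does apply to \emph{both} $f_+$ and $f_-$ (it does, by inspection of the identification $\alpha_{2k}^+=\alpha_{2k}$, $\alpha_{2k}^-=-\overline\alpha_{-2k-2}$ together with the vanishing of all odd $\bs\alpha$-blocks for a QW), so that Proposition \ref{P:Schur-1} can be invoked simultaneously for $f_+$ and $f_-$.
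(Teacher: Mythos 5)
Your proposal is correct and follows essentially the same route as the paper: the text immediately preceding the corollary derives it from the evenness of $f_\pm$ (a consequence of Proposition \ref{P:Schur-1} applied to the vanishing odd Schur parameters), hence the evenness of $g$, and then reads the symmetries of the root set, of $m$, and of $\eta$ directly off the formulas in Proposition \ref{P:antidiag}. Nothing is missing.
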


The mass points of a QW on $\Z_+$ have no such a symmetry, despite the fact that the corresponding Schur function $f$ is even too. The reason is that
the Carathéodory function is given by
$$
F(z) = \frac{1+h(z)}{1-h(z)}, \qquad h(z)=zf(z),
$$
so the singular part of the measure is supported on the roots $z\in\T$ of the equation $h(z)=1$, which is not invariant under $z \to -z$ because $h$
is odd. The mass points are those roots such that
$$
\mu(\{z\}) = \lim_{r\ua1} \frac{1-r}{1-h(rz)} \neq 0.
$$
When $h$ has an analytic extension to a neighbourhood of a root $z\in\T$ of $h(z)=1$, such a root is simple and isolated, and is also an isolated
mass point with mass $\mu(\{z\})=1/zh'(z)$.

\section{Single state localization in QWs} \label{S:Loc}

Following \cite{K}, we will adopt the definition below for the localization of a state in a QW. It applies only to qubit states
$\alpha|k\!\ua\>+\beta|k\!\da\>$ at a given site $k$, and characterizes those states which have a non null probability of asymptotic return to the
same site where they are placed originally.

\begin{defn}
Given a QW on $\Z$ or $\Z_+$, let $p_{\alpha,\beta}^{(k)}(n)$ be the probability that the walker returns to the site $k$ in $n$ steps having started
at the qubit state $|\Psi_{\alpha,\beta}^{(k)}\> = \alpha|k\!\ua\>+\beta|k\!\da\>$ at the initial time. We will say that such a state exhibits localization
if $\limsup_{n\to\infty}p_{\alpha,\beta}^{(k)}(n)\neq0$.
\end{defn}

It is known that the structure of the transition matrix for the QWs on $\Z$ we are discussing always gives a null return probability for an odd number
of steps, i.e., $p_{\alpha,\beta}^{(k)}(2n-1)=0$. Therefore, the only quantity of interest in the case $\Z$ is the asymptotics of
$p_{\alpha,\beta}^{(k)}(2n)$.


If $\frU$ is the transition operator of the QW,
$$
p_{\alpha,\beta}^{(k)}(n) =
|\< \Psi_{1,0}^{(k)} | \frU^n | \Psi_{\alpha,\beta}^{(k)} \>|^2 + |\< \Psi_{0,1}^{(k)} | \frU^n | \Psi_{\alpha,\beta}^{(k)} \>|^2.
$$
The KMcG formula provides an alternative expression for this probability which is nicely adapted to study its asymptotics. Indeed, a simple extension
of the formula in \cite[page 497]{CGMV} to the case of two arbitrary states $| \Psi \>$, $| \tilde\Psi \>$ gives
\begin{equation} \label{E:KMcG-gen}
\< \tilde\Psi | \frU^n | \Psi \> = \psi \bs{U}^n \tilde\psi^\dag = \int_\T z^n \bs{\psi}(z) d\bs{\mu}(z) \tilde{\bs{\psi}}(z)^\dag,
\end{equation}
where $\bs{\psi}(z)$ is an $L^2_{\bs{\mu}}(\T)$ function associated with the state $|\Psi\>=\sum_i\psi_i|i\>$  ($|i\>$ is the $i$-th vector of the
ordered basis), which is a scalar function for $\Z_+$ and a 2-vector function for $\Z$. The general form of $\bs{\psi}(z)$ is given in the first
column of the following table, while the second column shows the particular case $\bs{\psi}_{\alpha,\beta}^{(k)}(z)$ for the qubit state
$|\Psi_{\alpha,\beta}^{(k)}\>$.
\begin{equation} \label{E:psi}
\renewcommand{\arraystretch}{2}
\kern-7pt {\Small
\begin{tabular}{|c|c|c|}
\hline
& $\bs{\psi}(z)$ & $\bs{\psi}^{(k)}_{\alpha,\beta}(z)$
\\ \hline
$\Z_+$ & $\ds \sum_k \psi_k X_k$ & $\alpha X_{2k} + \beta X_{2k+1}$
\\ \hline
$\Z$ & $\ds \sum_k (\psi_{2k},\psi_{2k+1}) \bs{X}_k$
& $\begin{cases} (\alpha,0)\bs{X}_{2j}+(0,\beta)\bs{X}_{2j+1} & k=j \\ (0,\beta)\bs{X}_{2j}+(\alpha,0)\bs{X}_{2j+1} & k=-j-1 \end{cases}
\quad j\geq0$
\\ \hline
\end{tabular}
}
\end{equation}
Relation (\ref{E:KMcG-gen}) gives the identity
$$
p_{\alpha,\beta}^{(k)}(n) = {\small \left| \int_\T z^n \bs{\psi}_{\alpha,\beta}^{(k)}(z) d\bs{\mu}(z) \bs{\psi}_{1,0}^{(k)}(z)^\dag \right|^2 +
\left| \int_\T z^n \bs{\psi}_{\alpha,\beta}^{(k)}(z) d\bs{\mu}(z) \bs{\psi}_{0,1}^{(k)}(z)^\dag \right|^2}.
$$

The importance of the KMcG formula in the study of the localization of the states in a QW was first pointed out by N. Konno et al in \cite{KS}.
There the authors use the Riemann-Lebesgue lemma to obtain the asymptotics of $p_{\alpha,\beta}^{(k)}(n)$ for the case of a constant coin on $\Z_+$.
The method can handle other QWs on $\Z_+$, as well as QWs on $\Z$.

For convenience, in what follows we will use the notation
$$
a_n\underset{n}{\sim}b_n \Leftrightarrow \lim_{n\to\infty}(a_n-b_n)=0.
$$

\begin{lem} \label{L:R-L}
If $\frU$ is the transition operator of a QW on $\Z$ or $\Z_+$ with measure $\bs{\mu}$,
$$
\< \tilde\Psi | \frU^n | \Psi \> \underset{n}{\sim} \int_\T z^n \bs{\psi}(z) d\bs{\mu}_s(z) \tilde{\bs{\psi}}(z)^\dag,
$$
where $\bs{\mu}_s$ is the singular part of $\bs{\mu}$.
\end{lem}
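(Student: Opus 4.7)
The plan is to combine the KMcG formula (\ref{E:KMcG-gen}) with the Lebesgue decomposition of $\bs\mu$ and finish with the Riemann--Lebesgue lemma. Writing $d\bs{\mu}(e^{i\theta})=\bs{w}(\theta)\frac{d\theta}{2\pi}+d\bs{\mu}_s(e^{i\theta})$ and splitting the KMcG integral accordingly,
$$
\<\tilde\Psi|\frU^n|\Psi\> = \int_0^{2\pi}\! e^{in\theta}\,\bs{\psi}(e^{i\theta})\,\bs{w}(\theta)\,\tilde{\bs{\psi}}(e^{i\theta})^\dag\,\frac{d\theta}{2\pi} + \int_\T z^n\,\bs{\psi}(z)\,d\bs{\mu}_s(z)\,\tilde{\bs{\psi}}(z)^\dag,
$$
so the lemma reduces to showing that the absolutely continuous piece is $o(1)$ as $n\to\infty$.

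For this, I would first verify that the scalar function $F(\theta):=\bs{\psi}(e^{i\theta})\,\bs{w}(\theta)\,\tilde{\bs{\psi}}(e^{i\theta})^\dag$ lies in $L^1(0,2\pi)$. Since $|\Psi\>$ and $|\tilde\Psi\>$ are vectors of finite norm, the KMcG formula applied with $n=0$ and $\tilde\Psi=\Psi$ (resp.\ $\tilde\Psi=\tilde\Psi$) gives $\bs{\psi},\tilde{\bs{\psi}}\in L^2_{\bs\mu}(\T)$, hence both lie in $L^2$ with respect to the nonnegative form $u\,\bs{w}\,v^\dag$. Pointwise Cauchy--Schwarz for this semi-inner product together with the integral Cauchy--Schwarz then yields
$$
\int_0^{2\pi}|F(\theta)|\,\frac{d\theta}{2\pi} \leq \left(\int_0^{2\pi}\bs{\psi}\,\bs{w}\,\bs{\psi}^\dag\,\frac{d\theta}{2\pi}\right)^{\!1/2}\!\left(\int_0^{2\pi}\tilde{\bs{\psi}}\,\bs{w}\,\tilde{\bs{\psi}}^\dag\,\frac{d\theta}{2\pi}\right)^{\!1/2}<\infty.
$$

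Once $F\in L^1$, the classical Riemann--Lebesgue lemma gives $\int_0^{2\pi} e^{in\theta}F(\theta)\,\frac{d\theta}{2\pi}\to 0$, which is exactly the vanishing of the absolutely continuous term, and the desired asymptotic equivalence follows.

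The only mild subtlety is the matrix case ($\Z$), where $\bs{\psi},\tilde{\bs{\psi}}$ are $1\times 2$ row-vector functions, $\bs{w}$ is $2\times 2$ Hermitian positive semidefinite, and care is needed to interpret the Cauchy--Schwarz step. This is routine since $(u,v)\mapsto u\,\bs{w}\,v^\dag$ is a genuine semi-inner product on $\mathbb{C}^2$ pointwise in $\theta$; alternatively one reduces to the scalar setting by observing that $\bs\mu$ is absolutely continuous with respect to the scalar trace measure $\tr\bs\mu$, on which scalar Riemann--Lebesgue applies entrywise. No other obstacle is expected.
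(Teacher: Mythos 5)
Your proof follows exactly the paper's route: split the KMcG integral by the Lebesgue decomposition and kill the absolutely continuous part with the Riemann--Lebesgue lemma. The only difference is that you spell out, via Cauchy--Schwarz and the $L^2_{\bs\mu}$ membership of $\bs\psi$ and $\tilde{\bs\psi}$, the integrability of $\bs{\psi}\,\bs{w}\,\tilde{\bs{\psi}}^\dag$ that the paper simply asserts, which is a correct and welcome addition.
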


\begin{proof}
Let $d\bs{\mu}(e^{i\theta}) = \bs{w}(\theta)\frac{d\theta}{2\pi} + d\bs{\mu}_s(e^{i\theta})$ be the Lebesgue decomposition of the measure $\bs{\mu}$.
The Riemann-Lebesgue lemma implies that
$$
\lim_{n\to\infty} \int_0^{2\pi} e^{in\theta} \bs{\psi}(e^{i\theta}) \bs{w}(\theta) \tilde{\bs{\psi}}(e^{i\theta})^\dag \frac{d\theta}{2\pi} = 0
$$
because $\bs{\psi}(e^{i\theta}) \bs{w}(\theta) \tilde{\bs{\psi}}(e^{i\theta})^\dag$ is integrable with respect to the Lebesgue measure. This gives
the result, bearing in mind the KMcG formula.
\end{proof}

As a consequence, no state can exhibit localization in a QW with an absolutely continuous measure. As for the singular part, it always can be
decomposed into mass points and a singular continuous part. As we will see, due to Wiener's theorem, the presence of mass points will always give
localized states, regardless of the presence of a singular continuous part. However, if the singular part is exclusively continuous the situation is
more involved because the Riemann-Lebesgue lemma holds for some singular continuous measures, but not for all of them.

To obtain the strongest results about localization for QWs on $\Z$ one is greatly aided by using the freedom in renumbering the sites $k \to k+k_0$,
$k_0\in\Z$. The consequence of this freedom is that, for any QW on $\Z$, there are infinitely many orders of the basis giving a CMV-shape transition
matrix. Just as good as the initial order would be to take
$$
|k_0\!\ua\>,|k_0-1\!\da\>,|k_0-1\!\ua\>,|k_0\!\da\>,|k_0+1\!\ua\>,|k_0-2\!\da\>,|k_0-2\!\ua\>,|k_0+1\!\da\>,\dots
$$
where $k_0$ is an arbitrary integer. The order of the basis given originally in (\ref{E:order}) for QWs on $\Z$ can be understood as a folding of
$\Z$ at site 0, so these other possibilities correspond to foldings at an arbitrary site $k_0$.

These new foldings lead to different CMV matrices, measures and OLP, any of them could be used to study a QW on $\Z$. Since the presence of
localization in a QW has to do with the Lebesgue decomposition of the measure, it is important to know how the measure changes with the renumbering
of the sites. This is answered by the following result.

\begin{lem} \label{L:folding}
Given a QW on $\Z$, the measures $\bs{\mu}$, $\tilde{\bs{\mu}}$ corresponding to different foldings are related by
$$
d\tilde{\bs{\mu}}(z) = \bs{A}(z) d\bs{\mu}(z) \bs{A}(z)^\dag,
$$
where $\bs{A}$ is a $2\times2$-matrix polynomial.
\end{lem}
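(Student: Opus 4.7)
The plan is to apply the KMcG formula (\ref{E:KMcG-gen}) once under each folding and compare the two resulting expressions for the same matrix element $\<\tilde\Psi|\frU^n|\Psi\>$. Fix $k_0\in\Z$ and consider the folding at $k_0$, i.e.\ the reordering whose first block is $|k_0\!\ua\>,|k_0\!-\!1\!\da\>$; this yields a new CMV-shape transition matrix and hence a new $2\times 2$ matrix measure $\tilde{\bs\mu}$ on $\T$, with OLP $\tilde{\bs X}_k$ normalized by $\tilde{\bs X}_0=\1$.

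The key observation is that a generic state in the new first block, $|\Psi\>=\alpha|k_0\!\ua\>+\beta|k_0\!-\!1\!\da\>$, admits two different row-vector representations in the sense of (\ref{E:psi}). Under the new folding it is simply the constant row $(\alpha,\beta)$, since $\tilde{\bs X}_0=\1$; under the original folding, the table in (\ref{E:psi}) assigns to each of $|k_0\!\ua\>$ and $|k_0\!-\!1\!\da\>$ a row of the form $(c_1,c_2)\bs X_m(z)$, where the index $m$ depends on the sign of $k_0$. For $k_0\ge 1$ one reads off $(1,0)\bs X_{2k_0}$ and $(0,1)\bs X_{2k_0-1}$; for $k_0\le 0$ the analogous rows come from the second line of the table. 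In every case, stacking these two row-vectors as the rows of a $2\times 2$ matrix produces a matrix Laurent polynomial $\bs A(z)$ such that $|\Psi\>\mapsto(\alpha,\beta)\bs A(z)$ is its original-folding representation.

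Applying (\ref{E:KMcG-gen}) under both foldings to $\<\tilde\Psi|\frU^n|\Psi\>$ with $|\Psi\>,|\tilde\Psi\>$ ranging over $\spn\{|k_0\!\ua\>,|k_0\!-\!1\!\da\>\}$, and using the unitarity of $\frU$ to allow $n\in\Z$, gives
$$
\int_\T z^n\, d\tilde{\bs\mu}(z)\;=\;\int_\T z^n\, \bs A(z)\, d\bs\mu(z)\, \bs A(z)^\dag, \qquad n\in\Z.
$$
A matrix measure on $\T$ is determined by its trigonometric moments (reducing to the scalar case via the trace measure $\tr\bs\mu$, as already used after (\ref{E:SC-measure})), so the pointwise identity $d\tilde{\bs\mu}(z)=\bs A(z)\,d\bs\mu(z)\,\bs A(z)^\dag$ follows. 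Finally, multiplying $\bs A$ by a sufficiently large scalar power $z^N$ converts it into an honest matrix polynomial while preserving the relation, since $|z|=1$ on $\T$.

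The main obstacle is bookkeeping: one must execute the case analysis of (\ref{E:psi}) uniformly for $k_0\ge 1$ and $k_0\le 0$ in order to check in every regime that the two rows of $\bs A$ are genuinely finite linear combinations of the original OLP. The trivial case $k_0=0$ recovers $\bs A=\1$ and $\tilde{\bs\mu}=\bs\mu$, as expected.
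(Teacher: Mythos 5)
Your proposal is correct and follows essentially the same route as the paper: both identify $\bs{A}$ (up to a factor $z^l$) as the $2\times2$ matrix whose rows are the rows of the original OLP attached to the two basis states heading the new folding, and both deduce the relation between the measures from the KMcG formula. The only cosmetic difference is that the paper compares the Carathéodory functions $((\bs{U}\pm z\1)(\bs{U}-z\1)^{-1})_{0,0}$ after conjugating by the permutation matrix, whereas you compare the trigonometric moments directly and invoke their determinacy; these are equivalent.
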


\begin{proof}
The transition matrices $\bs{U}$, $\tilde{\bs{U}}$ related to different foldings are representations of the same transition operator with respect to
basis which only differ in the order. Thus they are related by conjugation with a permutation matrix $\Pi$, i.e., $\tilde{\bs{U}} = \Pi^\dag \bs{U}
\Pi$.

On the other hand, the KMcG formula ensures that
$$
((\bs{U}+z\1)(\bs{U}-z\1)^{-1})_{j,k} = \int_\T \frac{t+z}{t-z} \, \bs{X}_j(t) d\bs{\mu}(t) \bs{X}_k(t)^\dag,
$$
where $(\bullet)_{j,k}$ stands for the $(j,k)$-th $2\times2$-block. Thus, the Carathéodory function of the measure $\bs{\mu}$ is given by
$$
\bs{F}(z) = \int_\T \frac{t+z}{t-z}\,d\bs{\mu}(t) = ((\bs{U}+z\1)(\bs{U}-z\1)^{-1})_{0,0},
$$
and similarly for the Carathéodory function $\tilde{\bs{F}}$ of $\tilde{\bs{\mu}}$.

Each subindex $k$ stands for a pair of indices which we will denote by $k_s$, $s=+,-$. If $\Pi$ transforms the indices $0_+$ and $0_-$ into $j_r$ and
$k_s$ respectively, then $\Pi_{i,0_+}=\delta_{i,j_r}$, $\Pi_{i,0_-}=\delta_{i,k_s}$, and
$$
\begin{aligned}
\tilde{\bs{F}}(z) & = (\Pi^\dag(\bs{U}+z\1)(\bs{U}-z\1)^{-1}\Pi)_{0,0}
\\
& = \begin{pmatrix}
((\bs{U}+z\1)(\bs{U}-z\1)^{-1})_{j_r,j_r} & ((\bs{U}+z\1)(\bs{U}-z\1)^{-1})_{j_r,k_s}
\\
((\bs{U}+z\1)(\bs{U}-z\1)^{-1})_{k_s,j_r} & ((\bs{U}+z\1)(\bs{U}-z\1)^{-1})_{k_s,k_s}
\end{pmatrix} =
\\
& = \int_\T \frac{t+z}{t-z} \,
\begin{pmatrix} \bs{X}_j^r(t) \\ \bs{X}_k^s(t) \end{pmatrix}
d\bs{\mu}(t)
\begin{pmatrix} \bs{X}_j^r(t)^\dag & \bs{X}_k^s(t)^\dag \end{pmatrix},
\end{aligned}
$$
where $\bs{X}_k^+$ and $\bs{X}_k^-$ stand for the upper and lower row of $\bs{X}_k$ respectively. This proves the proposition with
$$
\bs{A}(z) = z^l \begin{pmatrix} \bs{X}_j^r(z) \\ \bs{X}_k^s(z) \end{pmatrix}, \quad \text{ for some } l\geq0,
$$
since $\bs{X}_j^r$ and $\bs{X}_k^s$ are 2-vector Laurent polynomials.
\end{proof}

We are interested in the following consequence of the lemma above.

\begin{cor} \label{C:folding}
Two measures of the same QW on $\Z$ with respect to different foldings have the same mass points, and the support of their absolutely continuous and
singular continuous parts coincide.
\end{cor}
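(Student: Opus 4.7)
The plan is to exploit Lemma \ref{L:folding} twice: since the roles of the two foldings can be reversed, not only is $d\tilde{\bs{\mu}}=\bs{A}(z)\,d\bs{\mu}(z)\,\bs{A}(z)^\dag$ for a $2\times2$-matrix polynomial $\bs{A}$, but also $d\bs{\mu}=\bs{B}(z)\,d\tilde{\bs{\mu}}(z)\,\bs{B}(z)^\dag$ for another matrix polynomial $\bs{B}$. This bidirectional polynomial linking will be the backbone of the argument, giving each support inclusion with equality.

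Next I would show that conjugation by a matrix polynomial (hence by a bounded continuous function on the compact torus) respects the Lebesgue decomposition $d\bs{\mu}=d\bs{\mu}_{ac}+d\bs{\mu}_{sc}+d\bs{\mu}_{pp}$ term by term. The measure $\bs{A}\,d\bs{\mu}_{ac}\,\bs{A}^\dag$ has density $\bs{A}(e^{i\theta})\bs{w}(\theta)\bs{A}(e^{i\theta})^\dag$ with respect to Lebesgue measure and is therefore absolutely continuous; $\bs{A}\,d\bs{\mu}_{pp}\,\bs{A}^\dag$ is again pure point since its support is contained in the countable set of original atoms; and $\bs{A}\,d\bs{\mu}_{sc}\,\bs{A}^\dag$ is still atomless (continuous functions cannot create Dirac masses) and still supported on a Lebesgue-null set, hence singular continuous. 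Uniqueness of the Lebesgue decomposition then forces
$$
d\tilde{\bs{\mu}}_{ac}=\bs{A}\,d\bs{\mu}_{ac}\,\bs{A}^\dag,\qquad d\tilde{\bs{\mu}}_{sc}=\bs{A}\,d\bs{\mu}_{sc}\,\bs{A}^\dag,\qquad d\tilde{\bs{\mu}}_{pp}=\bs{A}\,d\bs{\mu}_{pp}\,\bs{A}^\dag,
$$
together with the analogous identities obtained by interchanging $(\bs{A},\bs{\mu})$ with $(\bs{B},\tilde{\bs{\mu}})$.

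From these term-by-term identities, each inclusion $\supp\tilde{\bs{\mu}}_\ast\subseteq\supp\bs{\mu}_\ast$ for $\ast\in\{ac,sc,pp\}$ is immediate, because conjugation by a continuous function cannot enlarge the support of a measure; the reverse inclusion follows from the $\bs{B}$ version, so the three supports coincide. For mass points the same scheme works: if $z_0$ were an atom of $\bs{\mu}$ with mass $\bs{M}\neq0$ but $\bs{A}(z_0)\bs{M}\bs{A}(z_0)^\dag=0$, then inserting this identity into the reverse formula would give $\bs{M}=\bs{B}(z_0)\tilde{\bs{\mu}}(\{z_0\})\bs{B}(z_0)^\dag=0$, a contradiction; so every atom of $\bs{\mu}$ survives, and by symmetry so does every atom of $\tilde{\bs{\mu}}$.

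The main obstacle is exactly the possibility that a zero of the matrix polynomial $\bs{A}$ annihilates part of the support or kills an atom (a rank-$1$ matrix could easily sit in the kernel of $\bs{A}(z_0)$). A one-sided application of Lemma \ref{L:folding} would only yield support inclusions; the bidirectional use is what turns them into equalities and is the crucial point of the argument.
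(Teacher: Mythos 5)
Your proposal is correct and follows essentially the same route as the paper: both rest on applying Lemma \ref{L:folding} in each direction and on the fact that conjugation $d\bs{\nu}\to\bs{A}\,d\bs{\nu}\,\bs{A}^\dag$ by a matrix polynomial preserves the absolutely continuous, singular continuous and pure point character of each piece of the Lebesgue decomposition. Your write-up merely makes explicit the term-by-term identification and the atom-killing issue that the paper's shorter proof leaves implicit.
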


\begin{proof}
Let $\bs{\mu}$, $\tilde{\bs{\mu}}$ be such measures. Then, $d\tilde{\bs{\mu}} = \bs{A} d\bs{\mu} \bs{A}^\dag$ for some matrix polynomial $\bs{A}$,
which implies that $\tilde{\bs{\mu}}(\{z\})=0$ whenever $\bs{\mu}(\{z\})=0$. Since there must be another polynomial $\tilde{\bs{A}}$ such that
$d\bs{\mu} = \tilde{\bs{A}} d\tilde{\bs{\mu}} \tilde{\bs{A}}^\dag$, we conclude that the mass points of $\bs{\mu}$ and $\tilde{\bs{\mu}}$ coincide.
The rest of the assertions follow similarly from Proposition \ref{L:folding} and the invariance of the absolutely continuous and singular character
of a matrix measure on $\T$ under the transformation $d{\bs\nu} \to \bs{A} d\bs{\nu} \bs{A}^\dag$ for any matrix polynomial $\bs{A}$.
\end{proof}

Unless we state specifically a different folding, the measure of a QW on $\Z$ means for us that one related to the folding at site 0 given in
(\ref{E:order}). Nevertheless, the previous corollary ensures that we can refer to some characteristics of the measure without indicating any folding
because they are common for all of them.

Besides exploiting different foldings, the general results about localization for QWs on $\Z$ also require the use of Wiener's theorem on the unit
circle (see \cite[Theorem 12.4.7]{Si2}): for any scalar measure $\mu$ on $\T$
$$
\lim_{N\to\infty} \frac{1}{2N+1} \sum_{n=-N}^N |\mu_n|^2 = \sum_{z\in\T} |\mu(\{z\})|^2, \qquad \mu_k = \int_\T z^k d\mu(z).
$$
Thus, $\mu$ has no mass points if and only if $\lim_{N\to\infty} \frac{1}{2N+1} \sum_{n=-N}^N |\mu_n|^2 = 0$, which is satisfied in particular when
$\lim_{n\to\infty}\mu_n=0$. The complex numbers $\mu_n$ are known as the moments of the measure $\mu$.

The following theorem is the main result of this section. It gives an interpretation of the single state localization in terms of the measure of the
QW.

\begin{thm} \label{T:Loc}

Let a QW on $\Z$ or $\Z_+$ with transition matrix $\bs{U}$ and measure $\bs{\mu}$.

\begin{itemize}

\item[(a)] If $\bs{\mu}$ is absolutely continuous, no state exhibits localization.

\item[(b)] If $\bs{\mu}$ has a mass point, all the states $|\Psi_{\alpha,\beta}^{(k)}\>$ exhibit localization except at most one state at each site
$k$ which must have $\alpha,\beta\neq0$. The existence of such a non localized state is mandatory when the singular part of the measure is a single
mass point in $\Z_+$ or a single pair of opposite mass points in $\Z$.

\item[(c)] The states $|\Psi\>$ which do not exhibit localization must satisfy
\begin{equation} \label{E:NoLoc}
\bs{\psi}(z) \bs{\mu}(\{z\}) = 0, \quad  \forall z\in\T.
\end{equation}

\item[(d)] If $\bs{\mu}$ has no singular continuous part:

\begin{itemize}

\item[(i)] No state exhibits localization $\Leftrightarrow$ $\bs{\mu}$ has no mass points
$\Leftrightarrow$
\newline
$\Leftrightarrow \ds\lim_{n\to\infty}\psi\bs{U}^n\tilde{\psi}^\dag=0$ for all $\psi,\tilde{\psi} \in L^2(\Z)$.

\item[(ii)] $|\Psi\>$ does not exhibit localization
$\Leftrightarrow (\ref{E:NoLoc}) \Leftrightarrow \ds\lim_{n\to\infty} \psi \bs{U}^n \psi^\dag = 0$.

\end{itemize}

\end{itemize}

\end{thm}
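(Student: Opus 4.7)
The plan is to deduce everything from two general facts: Lemma~\ref{L:R-L}, which via Riemann--Lebesgue reduces the asymptotics of $\<\tilde\Psi|\frU^n|\Psi\>$ to the singular part $\bs{\mu}_s$ of the measure, and Wiener's theorem on $\T$, which reads off the point masses of a scalar (complex) measure from the Ces\`aro mean of $|\mu_n|^2$. Part (a) is then immediate: if $\bs{\mu}$ is absolutely continuous then $\bs{\mu}_s=\bs{0}$, Lemma~\ref{L:R-L} forces $\<\tilde\Psi|\frU^n|\Psi\>\to 0$ for every pair of states, and consequently $p_{\alpha,\beta}^{(k)}(n)\to 0$.

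For (c), fix a non-localized qubit $|\Psi\>=|\Psi_{\alpha,\beta}^{(k)}\>$ and, for each basis qubit $(\gamma,\delta)\in\{(1,0),(0,1)\}$ at the same site, form the scalar complex measure
$$
d\nu_{\gamma,\delta}(z)=\bs{\psi}_{\gamma,\delta}^{(k)}(z)\,d\bs{\mu}(z)\,\bs{\psi}(z)^\dag.
$$
By the KMcG formula, its $n$-th moment is precisely $\<\Psi_{\gamma,\delta}^{(k)}|\frU^n|\Psi\>$. Since $|\Psi\>$ is not localized, both these moments tend to $0$ as $n\to\infty$, hence their Ces\`aro means of square moduli vanish, and Wiener's theorem forces $\nu_{\gamma,\delta}(\{z\})=0$ for every $z\in\T$, i.e. $\bs{\psi}_{\gamma,\delta}^{(k)}(z)\bs{\mu}(\{z\})\bs{\psi}(z)^\dag=0$. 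Using $\bs{\psi}=\alpha\bs{\psi}_{1,0}^{(k)}+\beta\bs{\psi}_{0,1}^{(k)}$, the linear combination with weights $(\alpha,\beta)$ gives $\bs{\psi}(z)\bs{\mu}(\{z\})\bs{\psi}(z)^\dag=0$ for all $z$. Since $\bs{\mu}(\{z\})$ is Hermitian positive semi-definite (scalar on $\Z_+$, of rank at most one on $\Z$ by Proposition~\ref{P:antidiag}), writing $\bs{\mu}(\{z\})=B^\dag B$ turns the vanishing of $\bs{\psi}\bs{\mu}(\{z\})\bs{\psi}^\dag=\|\bs{\psi}B^\dag\|^2$ into $\bs{\psi}(z)\bs{\mu}(\{z\})=0$, proving (c).

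For (b), condition~(\ref{E:NoLoc}) evaluated at each mass point $z_0$ gives a homogeneous linear equation in $(\alpha,\beta)$: on $\Z_+$ it is $\alpha X_{2k}(z_0)+\beta X_{2k+1}(z_0)=0$, while on $\Z$ it is the requirement that $\bs{\psi}_{\alpha,\beta}^{(k)}(z_0)$ lie in the one-dimensional left kernel of the rank-one matrix $\bs{\mu}(\{z_0\})$ supplied by Proposition~\ref{P:antidiag}. Provided at least one mass point furnishes a non-trivial equation, the non-localization subspace at site $k$ has dimension at most one. To rule out $\alpha\beta=0$ I would argue directly: if $|\Psi_{1,0}^{(k)}\>$ were non-localized, the KMcG formula expresses $\<\Psi_{1,0}^{(k)}|\frU^n|\Psi_{1,0}^{(k)}\>$ as the $n$-th Fourier moment of the positive scalar measure $|X_{2k}(z)|^2 d\mu(z)$ on $\Z_+$ (and analogously on $\Z$), so Wiener's theorem together with $\mu(\{z_0\})>0$ would force $X_{2k}$ (resp.\ the relevant row of $\bs{X}_{2k}$) to vanish at every mass point; this is then ruled out by the orthonormal expansion $\chi_{\{z_0\}}=\mu(\{z_0\})\sum_n\overline{X_n(z_0)}X_n$ in $L^2(\bs\mu)$ coupled with the CMV eigenvector relation $\bs{U}\bs{X}(z_0)=z_0\bs{X}(z_0)$ and the non-vanishing entries $\bs{U}_{n,n+2}$, which prevent the prescribed OLP from vanishing identically on the support of $\bs{\mu}_{pp}$. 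The symmetric argument handles $|\Psi_{0,1}^{(k)}\>$. Finally, the mandatory-existence clause: when the singular part is a single mass point on $\Z_+$ there is exactly one equation in two unknowns, so the one-dimensional non-localization direction exists; on $\Z$ the minimal singular part is a pair $\pm z_0$ by Corollary~\ref{C:sym}, and the identities $m(-z)=m(z)$, $\eta(-z)=-\eta(z)$ combined with the parity of $\bs{X}_{2j},\bs{X}_{2j+1}$ inherited from Proposition~\ref{P:Schur-1} applied to the even Schur function of a QW on $\Z$ make the two equations at $\pm z_0$ linearly dependent, again producing exactly one non-localized direction. That this direction actually consists of non-localized states uses the sufficiency half of (d)(ii) below, applicable because the singular part is purely pointwise.

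For (d), assume $\bs{\mu}_s=\bs{\mu}_{pp}$. Lemma~\ref{L:R-L} then reduces $\<\tilde\Psi|\frU^n|\Psi\>$ to the sum $\sum_{z_m}z_m^n\bs{\psi}(z_m)\bs{\mu}(\{z_m\})\tilde{\bs{\psi}}(z_m)^\dag$, so (\ref{E:NoLoc}) annihilates every summand and gives $\<\tilde\Psi|\frU^n|\Psi\>\to 0$; this proves non-localization of $|\Psi\>$ and in particular $\psi\bs{U}^n\psi^\dag\to 0$. Conversely, $\psi\bs{U}^n\psi^\dag\to 0$ says that the Fourier moments of the positive scalar measure $\bs{\psi}\,d\bs{\mu}\,\bs{\psi}^\dag$ vanish; Wiener's theorem then yields $\bs{\psi}(z)\bs{\mu}(\{z\})\bs{\psi}(z)^\dag=0$ for all $z$, equivalent as in (c) to (\ref{E:NoLoc}). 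This closes (ii); (i) follows by combining (a), the contrapositive of (b), and the $L^2$-version of (ii) obtained by polarization. The main obstacle I anticipate is the fine structure of (b)---the exclusion of $\alpha\beta=0$ and the mandatory-existence clause on $\Z$---since these require pinning down precisely how the parity of the OLP under $z\mapsto -z$, the symmetries of Corollary~\ref{C:sym}, and the row structure of $\bs{\psi}_{\alpha,\beta}^{(k)}$ interact; everything else reduces to a direct application of Lemma~\ref{L:R-L} and Wiener's theorem.
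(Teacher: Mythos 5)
Your overall architecture coincides with the paper's: parts (a), (c) and (d) are obtained, exactly as in the paper, from Lemma \ref{L:R-L} plus Wiener's theorem applied to the positive scalar measure $\bs{\psi}\,d\bs{\mu}\,\bs{\psi}^\dag$ (your detour through the complex measures $\nu_{\gamma,\delta}$ is harmless and equivalent), together with the positive semidefiniteness of $\bs{\mu}(\{z\})$ to pass from $\bs{\psi}\bs{\mu}(\{z\})\bs{\psi}^\dag=0$ to (\ref{E:NoLoc}). Those parts are fine.

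Part (b) is where your proposal has two genuine gaps. First, your mechanism for excluding $\alpha\beta=0$ does not work: the expansion $\chi_{\{z_0\}}=\mu(\{z_0\})\sum_n\overline{X_n(z_0)}X_n$ only shows that the $X_n(z_0)$ cannot \emph{all} vanish, and the five-term eigenvector recurrence does not forbid the single value $X_{2k}(z_0)$ from vanishing (a contradiction via the recurrence would require two consecutive entries to vanish). What is actually needed on $\Z_+$ is the standard fact, which the paper simply invokes, that the scalar OLP have \emph{no} zeros on $\T$ (because the underlying OPUC $\varphi_n$ have all zeros in $\D$ and $\varphi_n^*$ in $\C\setminus\overline{\D}$); this makes both coefficients of $\alpha X_{2k}(z_0)+\beta X_{2k+1}(z_0)=0$ nonzero at once. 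Second, on $\Z$ you never control the equation at a general site $k\neq0$: saying that $\bs{\psi}^{(k)}_{\alpha,\beta}(z_0)$ must lie in the left kernel of the rank-one mass is correct but vacuous until one shows the resulting linear condition on $(\alpha,\beta)$ is non-trivial and excludes $(1,0)$ and $(0,1)$, which requires explicit knowledge of $\bs{X}_{2j}$, $\bs{X}_{2j+1}$. The paper resolves this with a device absent from your proposal: it computes $\bs{X}_1$ explicitly from (\ref{E:eig}) to reduce the $k=0$ condition to $\alpha+\beta\kappa(z)=0$ with $\kappa(z)\neq0$ (using $|c_{12}^{(0)}|<1$, $|\det C_0|=1$ and $|z\eta(z)|=1$), and then transfers this to arbitrary $k$ by re-folding the lattice at site $k$, invoking Lemma \ref{L:folding} and Corollary \ref{C:folding} to guarantee that the mass points are unchanged under the new folding. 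Without the zero-free property and the re-folding argument, the ``at most one exceptional state per site, necessarily with $\alpha,\beta\neq0$'' clause of (b) is not established. The remaining ingredients of your (b) --- the $z\to-z$ invariance of the no-localization equation from Corollary \ref{C:sym} for the mandatory-existence clause on $\Z$, and the appeal to the sufficiency half of (d)(ii) --- do match the paper.
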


\begin{proof}
Statement (a) follows directly from Lemma \ref{L:R-L}.

$|\Psi\>=|\Psi_{\alpha,\beta}^{(k)}\>$ does not exhibit localization if and only if
$\lim_{n\to\infty}\psi\bs{U}^n(\psi_{1,0}^{(k)})^\dag=\lim_{n\to\infty}\psi\bs{U}^n(\psi_{0,1}^{(k)})^\dag=0$, which obviously implies that
$\lim_{n\to\infty}\psi\bs{U}^n\psi^\dag=0$, i.e,
$$
\lim_{n\to\infty} \int_\T z^n \bs{\psi}(z) d\bs{\mu}(z) \bs{\psi}(z)^\dag = 0.
$$
In other words, the $n$-th moment of the scalar measure $d\mu_\psi = \bs{\psi} d\bs{\mu} \bs{\psi}^\dag$ converges to zero as $n\to\infty$. Wiener's
theorem ensures that $\mu_\psi$ has no mass points, which means that $\bs{\psi}(z) \bs{\mu}(\{z\}) \bs{\psi}(z)^\dag = 0$ for any $z \in \T$. Bearing
in mind that $\bs{\mu}(\{z\})$ is positive semidefinite, we get (c).

According to (\ref{E:psi}), given a QW on $\Z_+$, condition (\ref{E:NoLoc}) becomes $\bs{\psi}(z) = \alpha X_{2k}(z) + \beta X_{2k+1}(z) = 0$ for any
mass point $z$ of $\mu$. Since the OLP have no zeros on $\T$, when $\mu$ has a mass point this equation has a one-dimensional subspace of solutions
$(\alpha,\beta)$, $\alpha,\beta\neq0$, which represent the same quantum state. The existence of more than one mass point or a singular continuous
part of the measure can give incompatible equations for $\alpha,\beta$, so the presence of states which do not exhibit localization is only ensured
in the case of at most one mass point. This proves (b) for $\Z_+$.

For a QW on $\Z$, (\ref{E:psi}) implies that the no localization condition (\ref{E:NoLoc}) at site $k=0$ reads as $\bs{\psi}(z)\bs{\mu}(\{z\}) =
(\alpha\bs{X}_0^++\beta\bs{X}_1^-) \bs{\mu}(\{z\}) = 0$ for any mass point $z$ of $\bs{\mu}$, with $\bs{X}_k^\pm$ the upper and lower rows of
$\bs{X}_k$. We know that $\bs{X}_0=\1$, while $\bs{X}_1$ can be calculated from the first two $2\times2$-block equations of (\ref{E:eig}),
$$
\begin{aligned}
& \begin{pmatrix} -z & c_{21}^{(0)} \\ c_{12}^{(-1)} & -z \end{pmatrix} \bs{X}_0(z) +
\begin{pmatrix} c_{11}^{(0)} & 0 \\ 0 & c_{22}^{(-1)} \end{pmatrix} \bs{X}_2(z) = 0,
\\
& \begin{pmatrix} c_{11}^{(-1)} & 0 \\ 0 & c_{22}^{(0)} \end{pmatrix} \bs{X}_0(z) - z \bs{X}_1(z) +
\begin{pmatrix} 0 & c_{21}^{(-1)} \\ c_{12}^{(0)} & 0 \end{pmatrix} \bs{X}_2(z) = 0,
\end{aligned}
$$
giving
\begin{equation} \label{E:X1-Z}
\bs{X}_1(z) = \begin{pmatrix}
z^{-1}(\det C_{-1})/c_{22}^{(-1)} & c_{21}^{(-1)}/c_{22}^{(-1)}
\\
c_{12}^{(0)}/c_{11}^{(0)} & z^{-1}(\det C_0)/c_{11}^{(0)}
\end{pmatrix}.
\end{equation}

On the other hand, the mass of any mass point $z$ is the singular matrix given in Proposition \ref{P:antidiag}. Combining these results we find that
$$
\bs{\psi}(z)\bs{\mu}(\{z\})=0 \kern7pt \Leftrightarrow \kern7pt \alpha+\frac{\beta}{c_{11}^{(0)}}(c_{12}^{(0)}+\overline{z\eta(z)}\det C_0)=0.
$$
The coins are unitary, so $|\det C_0|=1$. Besides, the assumption of the irreducibility for the QW implies that $c_{jj}^{(0)}\neq0$, so
$|c_{12}^{(0)}|^2=1-|c_{jj}^{(0)}|^2 < 1$. Since $|z\eta(z)|=1$ for any mass point $z$, the above equation becomes $\alpha+\beta\kappa(z)=0$ with
$\kappa(z)\neq0$. This equation is invariant under the reflection $z \to -z$ due to Corollary \ref{C:sym}. Therefore, if there is a single pair of
opposite mass points, such an equation has a one-dimensional subspace of solutions $(\alpha,\beta)$, $\alpha,\beta\neq0$, which represent the same
quantum state. The presence at site $k=0$ of non localized states is ensured only when the singular part consists in at most a single pair of mass
points, otherwise incompatibilities can appear between different equations for $(\alpha,\beta)$.

To generalize this results for $k\neq0$ is enough to use the folding at site $k$. Corollary \ref{C:folding} states that the mass points will not
change when choosing this new folding. Thus, the previous discussion remains unchanged, but the conclusions are now about the states at the site $k$,
which plays the role of the origin with this new folding. Therefore (b) is proved for $\Z$ too.

Let us return to the general case of a QW on $\Z$ or $\Z_+$, and assume that the measure has no singular continuous part, which will be true no
matter which folding we choose in $\Z$, again due to Corollary \ref{C:folding}. Then (a) and (b) are the only options. In case (a), Lemma \ref{L:R-L}
states that $\lim_{n\to\infty}\psi\bs{U}^n\tilde{\psi}^\dag=0$ for all $\psi,\tilde{\psi}$. Also, as we pointed out at the beginning of the proof,
the condition $\lim_{n\to\infty} \psi \bs{U}^n \psi^\dag = 0$ is not only a consequence of the fact that $|\Psi\>$ does not exhibit localization, but
also yields (\ref{E:NoLoc}). On the other hand, (\ref{E:NoLoc}) gives $\int_\T\bs{\psi}(z)d\bs{\mu}_s(z)\tilde{\bs{\psi}}(z)^\dag=0$ for any
$\tilde{\bs{\psi}}$ because the mass points constitute all the singular part of the measure, thus implying that $|\Psi\>$ does not exhibit
localization. This finishes the proof of (d).
\end{proof}

The previous theorem indicates that the more mass points the measure exhibits, the less possibilities for non localized states because, apart from
the conditions associated with the singular continuous part of the measure, there are as many no localization equations (\ref{E:NoLoc}) as mass
points, which makes it more difficult to have non localized states as the number of mass points increases.

The proof of Theorem \ref{T:Loc} shows that, in $\Z$, the no localization equation (\ref{E:NoLoc}) is invariant under the reflection $z \to -z$,
hence, only one of such equations must be taken into account for each pair of opposite mass points.

There are situations in which it is known that the singular part of the measure is not purely continuous (or even no singular continuous part
appears). Theorem \ref{T:Loc} provides in such a case a {\bf localization dichotomy}: either no mass points and no localized states exist, or there
are mass points and ``almost" any state (at most one exception per site) exhibits localization. When this dichotomy works we can talk about QWs with
or without localization because then localization becomes an ``almost" global property.

Moreover, Lemma \ref{L:R-L} shows that in the absence of a singular continuous part of the measure the asymptotic return probability can be computed
exactly through
\begin{equation} \label{E:ARP}
\kern-5pt p_{\alpha,\beta}^{(k)}(n) \underset{n}{\sim} \text{\SMALL
$\left|\sum_{z\in\T} z^n\bs{\psi}_{\alpha,\beta}^{(k)}(z)\bs{\mu}(\{z\})\bs{\psi}_{1,0}^{(k)}(z)^\dag\right|^2 +
\left|\sum_{z\in\T} z^n\bs{\psi}_{\alpha,\beta}^{(k)}(z)\bs{\mu}(\{z\})\bs{\psi}_{0,1}^{(k)}(z)^\dag\right|^2$},
\end{equation}
where the sums are in fact over the mass points $z$ of $\bs{\mu}$.

\subsection{Periodic QWs with finite defects} \label{SS:periodic}

Among the QWs where the localization dichotomy works are those with periodic coins, with or without a finite number of defects.

\begin{prop} \label{P:periodic-defects}

If the coins $C_k$ of a QW on $\Z$ or $\Z_+$ satisfy $C_{k+p}=C_k$, $p\in\N$, for all but a finite number of sites $k$, the corresponding measure has
no singular continuous part and thus the localization dichotomy holds.

\end{prop}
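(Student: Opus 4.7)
The plan is to reduce to the case of purely periodic Verblunsky coefficients via the Schur algorithm, and then to exploit the known algebraic character of the Schur function in that setting. From the table that determines $\bs{\alpha}_k$ in terms of the coins, one sees that outside a finite range of indices the odd Verblunsky coefficients vanish (as for any QW) and the even ones $\bs{\alpha}_{2k}$ are periodic in $k$ with period $p$, possibly up to an overall phase drift that can be absorbed into the gauge freedom afforded by the diagonal unitary $\bs{\Lambda}$ in $\bs{U}=\bs{\Lambda}\bs{\mathcal{C}}\bs{\Lambda}^\dag$. Pick $N$ large enough that all Schur parameters of $\bs{f}$ with index $\geq N$ are purely periodic, so that the $N$-th Schur iterate $\bs{f}_N$ is the Schur function of a measure $\bs{\mu}_N$ whose Verblunsky coefficients are purely periodic.

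For purely periodic Verblunsky coefficients of period $p$, applying the Schur algorithm $p$ times returns $\bs{f}_N$ to itself, yielding a functional equation
\[
\bs{f}_N(z) \;=\; M_p\bigl(z,\bs{f}_N(z)\bigr),
\]
where $M_p(z,\cdot)$ is a finite composition of the inverse Schur step, which is a matrix Möbius transform with coefficients polynomial in $z$. Expressing this through the associated Wall polynomials (or transfer matrices) turns it into an algebraic identity for $\bs{f}_N(z)$, so $\bs{f}_N$ is algebraic and, via (\ref{E:Ff}), so is its Carathéodory function. Consequently this Carathéodory function admits a meromorphic continuation across $\T$ off a finite set of branch points and poles, has real-analytic boundary values outside that finite set, and (\ref{E:SC-measure}) shows that $\bs{\mu}_N$ decomposes as an absolutely continuous part on its spectral bands together with at most finitely many mass points, with no singular continuous part. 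This is the classical periodic theorem for CMV matrices (see Simon's OPUC treatise).

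To transfer the conclusion to $\bs{\mu}$, iterate the inverse Schur step $N$ times. Each step expresses $\bs{f}_k$ as a matrix Möbius transform of $\bs{f}_{k+1}$ with coefficients linear in $z$, so after $N$ compositions
\[
\bs{f}(z) \;=\; \bigl(\bs{A}(z)+\bs{B}(z)\bs{f}_N(z)\bigr)\bigl(\bs{C}(z)+\bs{D}(z)\bs{f}_N(z)\bigr)^{-1}
\]
with $\bs{A},\bs{B},\bs{C},\bs{D}$ matrix Laurent polynomials. Since $\bs{f}_N$ is algebraic, this rational construction keeps $\bs{f}$, and hence $\bs{F}$, algebraic, so $\bs{F}$ also admits a meromorphic continuation across $\T$ off a finite set, with real-analytic boundary values elsewhere. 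By (\ref{E:SC-measure}) the measure $\bs{\mu}$ has no singular continuous part, and Theorem \ref{T:Loc}(d) supplies the localization dichotomy.

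The main obstacle is the purely periodic step: one must argue that the algebraic equation for $\bs{f}_N$ genuinely forbids singular continuous boundary behavior—a fact rooted in the Floquet/Bloch analysis of periodic CMV operators—and then verify in the matrix ($\Z$) case that the rational coefficients in the passage from $\bs{f}_N$ to $\bs{f}$ do not conspire to produce a singular continuous part. The latter is guaranteed because the iterated inverse Schur transformation is built from finitely many transfer matrices with polynomial entries in $z,z^{-1}$, invertible off a finite subset of $\overline{\D}$, so no essentially singular boundary set can be generated.
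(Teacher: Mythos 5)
Your skeleton is the paper's: rotate the Verblunsky coefficients so that the tail becomes genuinely periodic, deduce from the identity $\hat{f}_{2p}=\hat{f}$ (for the periodic tail) that the corresponding Schur function satisfies a polynomial equation with polynomial coefficients in $z$ and is therefore algebraic, and pull this back through finitely many inverse Schur steps (\ref{E:SchurAlg-scalar}), each rational, so that the Schur function of the full walk is algebraic too. Where you diverge is in how you convert algebraicity into the absence of a singular continuous part, and this is precisely the step you yourself flag as ``the main obstacle.'' The paper closes it with a one-line observation that requires no Floquet/Bloch analysis, no meromorphic continuation, and no band structure: by (\ref{E:SC-measure}) and (\ref{E:Ff}) the singular part of the measure is supported on the set of $z\in\T$ where $z\hat{f}(z)=1$ (on $\Z$, where $z^2f_+(z)f_-(z)=1$, after reducing the matrix Schur function to the two scalar ones $f_\pm$ via Proposition \ref{P:antidiag}), and an algebraic function which is not identically $1$ --- impossible for a Schur function --- takes the value $1$ at only finitely many points. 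A singular measure with finite support is purely atomic, so there is nothing left to check; your concern that the inverse Schur steps might ``conspire'' to produce a singular continuous part disappears for the same reason, since the final function is still algebraic and the singular support is still finite.

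Two smaller points. The phase drift in the even Verblunsky coefficients is not absorbed into the diagonal gauge $\bs{\Lambda}$, which has already been spent putting $\bs{U}$ into CMV form; the paper removes it by the substitution $\hat{\alpha}_k=\alpha_ke^{i(k+1)\vartheta}$ of (\ref{E:rot-per}), which corresponds to a rotation of the measure and hence preserves the Lebesgue decomposition type --- the same harmlessness you assert, but by a different mechanism. And on $\Z$ the paper does not manipulate matrix M\"obius transforms at all: it works with the scalar functions $f_+$ and $f_-$ attached to the nonnegative- and negative-index parameters, each algebraic by the one-sided argument, which is what makes the criterion ``singular part supported on the roots of $z^2f_+f_-=1$'' immediately available.
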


\begin{proof}
Consider first the case of strictly periodic coins on $\Z_+$ with period $p$, i.e., $C_{k+p}=C_k$ for all $k\in\Z_+$. The related measure $\mu$ has
Verblunsky coefficients
$$
\alpha_0=\overline{c}_{21}^{(0)}, \qquad \alpha_{2k}=\overline{c}_{21}^{(k)}e^{-i(\sigma^{(0)}+\cdots+\sigma^{(k-1)})},  \qquad \alpha_{2k-1} = 0,
\qquad k\geq1,
$$
where $\sigma^{(k)}=\sigma_1^{(k)}+\sigma_2^{(k)}$. The fact that $c_{21}^{(k)}$ and $\sigma^{(k)}$ have period $p$ ensures that the new Verblunsky
coefficients
\begin{equation} \label{E:rot-per}
\hat{\alpha}_k = \alpha_k e^{i(k+1)\vartheta}, \qquad \vartheta=\frac{1}{2p}(\sigma^{(0)}+\cdots+\sigma^{(p-1)}),
\end{equation}
have period $2p$.

Let $\hat\mu$ and $\hat{f}$ be the measure and Schur function associated with the Schur parameters $\hat\alpha_k$. As a consequence of the
periodicity of $\hat\alpha_k$, the Schur iterate $\hat{f}_{2p}$ has the same Schur parameters $\hat\alpha_k$ as $\hat{f}$, hence
$\hat{f}_{2p}=\hat{f}$. Bearing in mind that any step of the Schur algorithm (\ref{E:SchurAlg-scalar}) is a rational transformation, the relation
$\hat{f}_{2p}=\hat{f}$ can be written as a polynomial equation for $\hat{f}(z)$ with polynomial coefficients in $z$. Therefore $\hat{f}(z)$, and thus
$z\hat{f}(z)$, are algebraic functions of $z$, which implies that the equation $z\hat{f}(z)=1$ has a finite number of roots. This means that the
singular part of $\hat\mu$ has a finite support, so it can not have a continuous part.

Relation (\ref{E:rot-per}) between $\alpha_k$ and $\hat\alpha_k$ implies that the corresponding measures $\mu$, $\hat\mu$ are connected by a rotation
(see \cite[page 473]{CGMV}), $d\mu(z)=d\hat{\mu}(e^{-i\vartheta}z)$, thus $\mu$ has no singular continuous part neither. Besides, from the link
between the measure $\mu$ and its Schur function $f$ we find that $f(z)=e^{i\vartheta}\hat{f}(e^{-i\vartheta}z)$, thus $f$ is algebraic too.

Now suppose that we modify a finite number of coins $C_k$, so that $C_{k+p}=C_k$ only holds for $k \geq k_0$. Then, the sequence $(\hat\alpha_k)_{k
\geq k_0}$ is periodic with period $2p$, and the corresponding Schur function, which is $\hat{f}_{k_0}$, must be algebraic. The Schur function
$\hat{f}$ is obtained from $\hat{f}_{k_0}$ by $k_0$ steps of the inverse Schur algorithm (\ref{E:SchurAlg-scalar}), each of them preserving the
algebraic character. Hence, $\hat{f}$ and $f$ are algebraic too, and the measures $\hat\mu$ and $\mu$ have no singular continuous part, just as in
the strictly periodic case.

With regard to QWs on $\Z$, the periodicity of the coins $C_k$ for any $k\in\Z$ with $|k| \geq k_0$ implies again the periodicity of the Schur
parameters $\hat\alpha_k$ given in (\ref{E:rot-per}) for the same range of indices. Therefore, the previous arguments show that the Schur functions
$f_+$, $f_-$ associated respectively with the Schur parameters $\alpha_k^+=\alpha_k$, $\alpha_k^-=-\overline{\alpha}_{-k-2}$ are algebraic. Since the
singular part of the matrix measure $\bs\mu$ of the QW is supported on the roots of $z^2f_+(z)f_-(z)=1$, the result follows from the fact that
$z^2f_+(z)f_-(z)$ is algebraic.
\end{proof}

In the case of QWs on $\Z$ with strictly periodic coins, stronger results can be achieved.

\begin{prop} \label{P:pure-periodic}
Any QW on $\Z$ with strictly periodic coins is free of localized states.
\end{prop}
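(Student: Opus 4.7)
The plan is to show that the matrix-valued measure $\bs\mu$ of a strictly periodic QW on $\Z$ has no mass points; combined with Proposition \ref{P:periodic-defects}, which rules out any singular continuous part, Theorem \ref{T:Loc}(d)(i) then delivers the absence of localized states.

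I would proceed by Bloch--Floquet. Strict periodicity $C_{k+p}=C_k$ for every $k\in\Z$ makes $\frU$ commute with the $p$-site shift $T_p|k,s\>=|k+p,s\>$, $s\in\{\ua,\da\}$. The joint spectral decomposition adapted to $T_p$ gives a direct integral $\ell^2(\Z)\otimes\C^2=\int_0^{2\pi}\cH_\theta\,\frac{d\theta}{2\pi}$ with $\cH_\theta$ the $2p$-dimensional space of quasi-periodic sequences $\Phi(k+p)=e^{i\theta}\Phi(k)$, on which $\frU$ is represented by a real-analytic family of $2p\times 2p$ unitaries $\frU(\theta)$. Any $\ell^2$ eigenvector of $\frU$ at a point $z\in\T$ corresponds to a measurable section $\theta\mapsto\Phi_\theta$ which is nonzero on a set of positive measure, with $\Phi_\theta$ lying in the $z$-eigenspace of $\frU(\theta)$ on that set. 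Since the eigenvalues of $\frU(\theta)$ trace out finitely many real-analytic bands in $\theta$, any fixed $z$ is attained on a measure-zero set of $\theta$ unless one band is flat at $z$. Modulo this flat band exclusion, $\frU$ has empty point spectrum, and by cyclicity of the first CMV block for $\bs U$ the spectral measure $\bs\mu$ has no mass points.

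The main obstacle is the flat band step. I would handle it via the one-period transfer matrix for the eigenvalue equation $\frU\Phi=z\Phi$: writing the QW recurrence as a two-step relation between successive sites and aggregating over one period gives a $2\times 2$ matrix $M(\theta,z)$ with $|\det M|=1$ whose characteristic equation controls the bands. The irreducibility $c_{jj}^{(k)}\neq 0$ guarantees that $M$ is well defined and that $\text{tr}\,M(\theta,z)$ depends nontrivially on $z\in\T$, so no root of the characteristic equation is independent of $\theta$ and no band is flat.

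An alternative staying inside the paper's OPUC machinery would use that, after the rotation (\ref{E:rot-per}), the scalar Schur parameters attached via Proposition \ref{P:antidiag} to $f_\pm$ are strictly periodic; classical results on periodic OPUC would then have to be combined with a compatibility argument exploiting the symmetry $\alpha_k^-=-\overline{\alpha}_{-k-2}$ to upgrade the scalar spectral information to the absence of roots of $g(z)=z^2 f_+(z)f_-(z)=1$ on $\T$ with nonzero residue $m(z)$ in the sense of Proposition \ref{P:antidiag}. This variant is considerably more technical, so I would implement the Floquet approach.
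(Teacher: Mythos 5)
Your strategy is sound but genuinely different from the paper's. The paper does not argue via Bloch--Floquet at all: it observes that the block CMV matrix $\hat{\bs\cC}$ is the folding of a \emph{scalar} two-sided CMV matrix $\hat\cC$ whose Verblunsky coefficients $\hat\alpha_k$ are periodic over all of $\Z$, invokes the Damanik--Killip--Simon result that such two-sided periodic CMV matrices have purely absolutely continuous spectrum, deduces that every scalar spectral measure $\hat\mu_\psi$ is absolutely continuous, and closes with Theorem \ref{T:Loc}.d.ii. That route is shorter and gets pure absolute continuity in one stroke, at the price of importing a nontrivial external theorem. Your route (rule out point spectrum by the band picture, rule out the singular continuous part by Proposition \ref{P:periodic-defects}, then apply Theorem \ref{T:Loc}.d.i) is more self-contained and makes visible \emph{why} the result holds for two-state coins with $c_{jj}^{(k)}\neq0$: the eigenvalue equation reduces to a $2\times2$ spatial transfer matrix, so flat bands are impossible. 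Both arguments are valid and reach the same conclusion.

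One step you should tighten is the flat-band exclusion. The one-period transfer matrix for $\frU\Phi=z\Phi$ depends only on $z$ (and the coins), not on $\theta$; the quasi-momentum enters solely through the Floquet boundary condition, i.e.\ $z\in\spec\frU(\theta)$ iff $e^{i\theta}\in\spec M(z)$. A flat band at $z_0$ would therefore force the single $2\times2$ matrix $M(z_0)$ to have every point of a positive-measure subset of $\T$ as an eigenvalue, which is absurd. This is cleaner and strictly stronger than your stated criterion that $\tr M$ ``depends nontrivially on $z$,'' which is not quite the relevant condition; the irreducibility hypothesis is used only to ensure $M(z)$ is well defined and that the solution space of the eigenvalue equation is exactly two-dimensional.
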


\begin{proof}
If a QW on $\Z$ has strictly periodic coins, the full sequence $(\hat\alpha_k)_{k\in\Z}$ appearing in the proof of the previous proposition is
periodic too. The matrix measure $\bs\mu$ of the QW is a rotation of the measure $\hat{\bs\mu}$ with Verblunsky coefficients
$$
\hat{\bs\alpha}_k = \begin{pmatrix} 0 & -\overline{\hat\alpha}_{-k-2} \\ \hat\alpha_k & 0 \end{pmatrix}.
$$

The block CMV matrix $\hat{\bs\cC}$ with Verblunsky coefficients $\hat{\bs\alpha}_k$ is obtained by folding a two-sided CMV matrix $\hat\cC$ with
scalar Verblunsky coefficients $\hat\alpha_k$ (see \cite{CGMV}). Like any two-sided CMV matrix with periodic Verblunsky coefficients, $\hat\cC$ has
an absolutely continuous spectrum (see \cite{DaKiSi}), and the same holds for $\hat{\bs\cC}$ because it is related to $\hat\cC$ by a simple
reordering of the basis. This means that the scalar measure $\hat{\mu}_\psi$ defined by $\psi \, \hat{\bs\cC}^n \! \psi^\dag = \int_\T z^n
d\hat{\mu}_\psi(z)$, $n\in\Z$, is absolutely continuous for all $\psi$. Then, $\lim_{n\to\infty}\psi \, \hat{\bs\cC}^n \! \psi^\dag=0$ for any $\psi$
and Theorem \ref{T:Loc}.d.ii implies that no state exhibits localization.
\end{proof}


\subsection{Quasi-deterministic QWs} \label{SS:quasi-det}

QWs with diagonal coins $C_k$ for any $k$ are deterministic because the one-step transitions
$$
\begin{aligned}
& \Z \kern30pt |k\!\ua\> \to |k+1\!\ua\> \qquad |k\!\da\> \to |k-1\!\da\>
\\
& \Z_+ \kern20pt \cdots \to |2\!\da\> \to |1\!\da\> \to |0\!\da\> \to |0\!\ua\> \to |1\!\ua\> \to |2\!\ua\> \to \cdots
\end{aligned}
$$
take place with probability one. These QWs exhibit no localization, even when a finite number of defects appear, regardless of the number and details
of the defective coins.

The presence of a finite number of defects means that $C_k$ is diagonal for all but a finite number of sites $k$. In such a case, the related measure
$\bs\mu$ has null Verblunsky coefficients $\bs\alpha_k$ except for a finite number of indices $k$. Hence, $\bs\alpha_k=0$ for $k \geq k_0$ and
$\bs\mu$ is a Bernstein-Szeg\H{o} measure which can be expressed using the OLP as (see \cite{DaPuSi} for the general matrix case)
$$
d\bs\mu(e^{i\theta}) = [\bs{X}_{k_0}(e^{i\theta})^\dag \bs{X}_{k_0}(e^{i\theta})]^{-1} \frac{d\theta}{2\pi}.
$$
Since $\bs\mu$ is absolutely continuous, no state exhibits localization.

\medskip

Concerning the possibility of having a singular continuous part in the measure, it is known that sparse sequences of Verblunsky coefficients on
$\Z_+$ can give a measure which is exclusively singular continuous (see \cite{Go} and \cite[Section 12.5]{Si2}). This shows that such a pathological
situation can appear surprisingly close to the deterministic case corresponding to diagonal coins.

Another source of singular continuous measures are those measures supported on a Cantor type set (see \cite{Fa} and \cite[Section 2.12]{Si1}) or
those given by appropriate infinite Riesz products (see \cite{RIESZ} and \cite[Section 2.11]{Si1}). It would be interesting to search for QWs
corresponding to these kinds of measures, as well as to study the localization properties of such rather pathological situations. This could shed
light on the general picture for the localization properties of QWs with a measure whose singular part is purely continuous.

The models which we will analyze in detail are the QWs with a constant coin up to one defect at the origin. They are a special case of periodic QWs
with finite defects, so the localization dichotomy works for them. We will make an exhaustive analysis of localization in these examples, both on
$\Z$ and $\Z_+$, to illustrate the effectiveness the CGMV method beyond the case of a constant coin, and to understand how the single state
localization depends on the parameters of the models.

\section{QWs with one defect} \label{S:defect}

We will consider a general QW with coins $C_k$ which are constant except for the site $k=0$, i.e.,
\begin{equation} \label{E:C-D}
C_k = C = \begin{pmatrix} c_{11} & c_{12} \\ c_{21} & c_{22} \end{pmatrix}, \quad k\neq0,
\qquad
C_0 = D = \begin{pmatrix} d_{11} & d_{12} \\ d_{21} & d_{22} \end{pmatrix}.
\end{equation}
We will refer to this as a QW with one defect at the origin. To place the defect at the origin is simply a convention for the numbering of the sites
in $\Z$, but it is real restriction in $\Z_+$.

Remember that we only need to consider irreducible QWs, which means that we can assume without loss of generality that $c_{jj},d_{jj}\neq0$. We will
use the notation
$$
\sigma = \sigma_1+\sigma_2, \qquad \tau = \tau_1+\tau_2, \qquad \vartheta=\frac{\sigma}{2},
$$
where $e^{i\sigma_j}$ and $e^{i\tau_j}$ are the phases of $c_{jj}$ and $d_{jj}$ respectively.

The case of a diagonal coin $C$ is somewhat special. We know that it leads to an absolutely continuous Bernstein-Szeg\H{o} measure which, therefore,
yields a QW with no localization. Thus, for the general discussion we will suppose that $c_{21}\neq0$.

\subsection{QWs with one defect on the non negative integers} \label{SS:defect-Z+}

Let us consider first the coins (\ref{E:C-D}) on $\Z_+$. According to the results described in Section \ref{S:CMV}, the order indicated in
(\ref{E:order}) gives a transition matrix $U=\Lambda\cC\Lambda^\dag$, $\Lambda=\text{diag}(1,\lambda_1,\lambda_2,\dots)$, with
$$
\lambda_{2k-1} = e^{i(\tau_2+(k-1)\sigma_2)},
\qquad
\lambda_{2k} = e^{-i(\tau_1+(k-1)\sigma_1)},
\qquad
k \geq 1,
$$
and $\cC=\cC(\alpha_k)$ a CMV matrix with Verblunsky coefficients
$$
\alpha_{2k} = \begin{cases} \overline{d}_{21}, & \text{ if } k=0, \\ \overline{c}_{21}e^{-i(\tau+(k-1)\sigma)}, & \text{ if } k>0, \end{cases}
\qquad
\alpha_{2k+1}=0,
\qquad
k \geq0.
$$

The Verblunsky coefficients can be written as $\alpha_k = \hat{\alpha}_k e^{-i(k+1)\vartheta}$ with
\begin{equation} \label{E:ab-Z+}
(\hat{\alpha}_k) = (b,0,a,0,a,0,a,0,\dots), \quad a=\overline{c}_{21}e^{i(\frac{3}{2}\sigma-\tau)}, \quad b=\overline{d}_{21}e^{i\frac{\sigma}{2}}.
\end{equation}
This means that the measure $\mu$, the Carathéodory function $F$ and the OLP $X_k$ of the QW are related to those ones of
$\hat{\cC}=\cC(\hat{\alpha}_k)$ by (see \cite[page 473]{CGMV})
\begin{equation} \label{E:rot-Z+}
\kern-5pt d\mu(z)=d\hat{\mu}(e^{-i\vartheta}z),
\kern9pt
F(z)=\hat{F}(e^{-i\vartheta}z),
\kern9pt
X_k(z)=\hat{\lambda}_k\hat{X}_k(e^{-i\vartheta}z),
\end{equation}
$$
\hat{\lambda}_0=1, \qquad
\begin{cases}
\hat{\lambda}_{2k-1} = \lambda_{2k-1} e^{-ik\vartheta} = e^{i(k\frac{\sigma_2-\sigma_1}{2}+\tau_2-\sigma_2)},
\\
\hat{\lambda}_{2k} = \lambda_{2k} e^{ik\vartheta} = e^{i(k\frac{\sigma_2-\sigma_1}{2}+\sigma_1-\tau_1)},
\end{cases}
k\geq1,
$$
with an obvious notation for the elements corresponding to $\hat{\cC}$. In other words, the OLP of the QW are, up to a change of phases, a rotation
by an angle $\vartheta$ of those corresponding to a CMV matrix with Verblunsky coefficients $(b,0,a,0,a,0,a,0,\dots)$.

The related Schur function $\hat{f}=f_{a,b}$ has Schur parameters $(b,0,a,0,a,0,a,0,\dots)$. Its second Schur iterate $\hat{f}_2$ is the Schur
function $f_a$ whose Schur parameters are $(a,0,a,0,a,0,\dots)$, so from (\ref{E:SchurAlg-scalar}) we find the relations
\begin{equation} \label{E:fa-fab}
f_a(z) = \frac{1}{z^2} \frac{f_{a,b}(z)-b}{1-\overline{b}f_{a,b}(z)}, \qquad f_{a,b}(z) = \frac{z^2f_a(z)+b}{1+\overline{b}z^2f_a(z)}.
\end{equation}
In particular, setting $b=a$, $f_{a,b}$ becomes $f_a$. This leads to the quadratic equation $\overline{a}z^2f_a(z)^2+(1-z^2)f_a(z)-a=0$ for $f_a$
which yields the expression
\begin{equation} \label{E:fa}
\qquad f_a(z) = \frac{z^2-1 + \sqrt{\Delta_a(z)}}{2\overline{a}z^2}, \qquad \Delta_a(z) = (z^2-1)^2+4|a|^2z^2.
\end{equation}
Since $f_a$ is analytic in $\D$, the choice for the square root must result in the branch such that $\sqrt{\Delta_a(z)}\xrightarrow{z\to0}1$. Such a
choice implies that the boundary values of $f_a$ on the unit circle are\footnote{See \cite[Appendix]{CGMV} for a discussion about the boundary values
of $\sqrt{\Delta_a}$ on $\T$.}
\begin{equation} \label{E:fa-T}
\begin{gathered}
f_a(e^{i\theta}) =  \frac{e^{-i\theta}}{\overline{a}} (R_a(\theta)+i\sin\theta),
\\
R_a(\theta) =
\begin{cases}
\sgn(\cos\theta) \sqrt{|a|^2-\sin^2\theta}, & \text{ if } |\sin\theta|\leq|a|,
\\
-i \, \sgn(\sin\theta) \sqrt{\sin^2\theta-|a|^2}, & \text{ if } |\sin\theta|>|a|.
\end{cases}
\end{gathered}
\end{equation}

Thus, $|f_a(e^{i\theta})|=1$ if  $|\sin\theta|\leq|a|$ and $|f_a(e^{i\theta})|<1$ if $|\sin\theta|>|a|$. This also holds for $\hat{f}=f_{a,b}$
because any step of the Schur algorithm preserves the relations $|f(z)|<1$ and $|f(z)|=1$ at any point $z\in\T$. Therefore, according to
(\ref{E:SC-measure}), the weight $\hat{w}(\theta)$ of the measure $\hat{\mu}$ lives on $|\sin\theta|>|a|$, which defines two arcs which are symmetric
with respect to the real axis. The singular part is supported on the finite number of roots $z\in\T$ of $z\hat{f}(z)=1$, so it can have only mass
points. When these mass points are present, they must lie on any of the two complementary arcs given by
$$
\Gamma_a=\{e^{i\theta}:|\sin\theta|\leq|a|\},
$$
which are symmetric with respect to the imaginary axis. This is because the equality $z\hat{f}(z)=1$ implies $|\hat{f}(z)|=1$ for any $z\in\T$. The
consequences of these conclusions for the measure of the QW are obvious because $\mu$ is obtained simply rotating $\hat{\mu}$ by an angle
$\vartheta$.

Different coins (\ref{E:C-D}) giving the same pair $a,b\in\D$ have measures which only differ in a rotation. Concerning the relative values of $a$
and $b$, when the defect disappears, i.e., $D=C$, we get $b=a=c_{21}e^{i\frac{\sigma}{2}}$. Nevertheless, the defect not only changes the first Schur
parameter from $a$ to $b$, but it affects also the value of $a$ which acquires an extra phase $e^{i(\sigma-\tau)}$. Due to this, the equality $b=a$
can happen even with $D \neq C$, indeed it is equivalent to $d_{21}=c_{21}e^{i(\tau-\sigma)}$. The restriction $c_{21}\neq0$ that excludes the
special case of a diagonal coin $C$ means that we are considering $a\neq0$.

\subsection{QWs with one defect on the integers} \label{SS:defect-Z}

Assume now that we have the coins (\ref{E:C-D}) in $\Z$. From the general results of Section \ref{S:CMV} we find that the order indicated in
(\ref{E:order}) gives a transition matrix $\bs{U}=\bs{\Lambda}\bs{\cC}\bs{\Lambda^\dag}$,
$\bs{\Lambda}=\text{diag}(\1,\bs{\lambda}_1,\bs{\lambda_2},\dots)$, where
$$
\bs{\lambda}_{2k-1} = {\small \begin{pmatrix} e^{ik\sigma_1} & \kern-9pt 0 \\ 0 & \kern-9pt e^{i(\tau_2+(k-1)\sigma_2)} \end{pmatrix}},
\quad
\bs{\lambda}_{2k} = {\small \begin{pmatrix} e^{-i(\tau_1+(k-1)\sigma_1)} & \kern-9pt 0 \\ 0 & \kern-9pt e^{-ik\sigma_2} \end{pmatrix}},
\quad
k\geq1.
$$
and $\bs{\cC}=\bs{\cC}(\bs{\alpha}_k)$ is the CMV matrix with Verblunsky coefficients
$$
\begin{gathered}
\bs\alpha_{2k} = \begin{pmatrix} 0 & -\overline\alpha_{-2k-2} \\ \alpha_{2k} & 0 \end{pmatrix}, \qquad \bs\alpha_{2k+1}=\0, \qquad k\geq0,
\\
\alpha_{2k} =
\begin{cases} \overline{d}_{21}, & \text{ if } k=0,
\\
\overline{c}_{21} e^{-i(\tau+(k-1)\sigma)}, & \text{ if } k>0,
\\
\overline{c}_{21} e^{-ik\sigma}, & \text{ if } k\leq0,
\end{cases}
\end{gathered}
$$

As in the case of $\Z_+$, a rotation plays a useful role in our analysis. Defining,
\begin{equation} \label{E:ab-Z}
a=i|c_{21}|e^{i\frac{\sigma-\tau}{2}},
\quad
b=i\frac{c_{21}}{|c_{21}|}e^{i\frac{\tau-\sigma}{2}}\overline{d}_{21},
\quad
\omega=i\frac{c_{21}}{|c_{21}|}e^{i(\frac{\tau}{2}-\sigma)},
\end{equation}
we can rewrite $\bs{\alpha}_k = e^{-i(k+1)\vartheta} \hat{\bs{\alpha}}_k$, where
$$
(\hat{\bs{\alpha}}_k) = (\bs{\beta},\0,\bs{\alpha},\0,\bs{\alpha},\0,\bs{\alpha},\0,\dots),
\kern7pt
\bs{\alpha}=\begin{pmatrix}0&\omega a\\\overline{\omega}a&0\end{pmatrix},
\kern7pt
\bs{\beta}=\begin{pmatrix}0&\omega a\\\overline{\omega}b&0\end{pmatrix}.
$$

As a consequence, the measure $\bs{\mu}$, the Carathéodory function $\bs{F}$ and the OLP $\bs{X}_k$ of the QW are given by
\begin{equation} \label{E:rot-Z}
\kern-9pt d\bs{\mu}(z)=d\hat{\bs{\mu}}(e^{-i\vartheta}z),
\kern7pt
\bs{F}(z)=\hat{\bs{F}}(e^{-i\vartheta}z),
\kern7pt
\bs{X}_k(z)=\hat{\bs{\lambda}}_k\hat{\bs{X}}_k(e^{-i\vartheta}z),
\end{equation}
$$
\hat{\bs{\lambda}}_0=\1, \quad
\begin{cases}
\hat{\bs{\lambda}}_{2k-1} = \bs{\lambda}_{2k-1} e^{-ik\vartheta} =
\begin{pmatrix} e^{ik\frac{\sigma_1-\sigma_2}{2}} & \kern-15pt 0 \\ 0 & \kern-15pt e^{i(k\frac{\sigma_2-\sigma_1}{2}+\tau_2-\sigma_2)} \end{pmatrix},
\\
\hat{\bs{\lambda}}_{2k} = \bs{\lambda}_{2k} e^{ik\vartheta} =
\begin{pmatrix} e^{i(k\frac{\sigma_2-\sigma_1}{2}+\sigma_1-\tau_1)} & \kern-15pt 0 \\ 0 & \kern-15pt e^{ik\frac{\sigma_1-\sigma_2}{2}} \end{pmatrix},
\end{cases}
k\geq1,
$$
where elements with a hat are related to $\hat{\bs{\cC}}=\bs{\cC}(\hat{\bs{\alpha}}_k)$. Therefore, just as in the case of $\Z_+$, up to phases, the
OLP of the QW are obtained rotating by an angle $\vartheta$ the OLP going along with a one defect sequence of Verblunsky coefficients
$(\bs{\beta},\0,\bs{\alpha},\0,\bs{\alpha},\0,\bs{\alpha},\0,\dots)$. We should remark that the defect is only at the (2,1)-th coefficient of
$\bs{\beta}$.

The corresponding Schur function $\hat{\bs{f}}$ has as Schur parameters the antidiagonal sequence
$(\bs{\beta},\0,\bs{\alpha},\0,\bs{\alpha},\0,\bs{\alpha},\0,\dots)$. Applying Propositions \ref{P:antidiag} and \ref{P:Schur-2} we conclude that
$$
\hat{\bs{f}} = \begin{pmatrix} 0 & \omega f_a \\ \overline{\omega} f_{a,b} & 0 \end{pmatrix}, \qquad \omega\in\T,
$$
where $f_a$ and $f_{a,b}$ are the scalar Schur functions introduced in the previous subsection, although here $a$ and $b$ bear a different relation
to the coefficients of the coins.

The measure $\bs{\mu}$ of the QW is obtained as a simple rotation of $\hat{\bs{\mu}}$, so we only need to discuss this last one. We know that
$|f_a|=|f_{a,b}|=1$ in the two closed arcs $\Gamma_a$ and $|f_a|,|f_{a,b}|<1$ in the two open arcs $\T\setminus\Gamma_a$. Therefore, the same result
holds for $\|\hat{\bs{f}}\|=\max\{|f_a|,|f_{a,b}|\}$. Using (\ref{E:SC-measure}), we find that, for a.e. $\theta$, the weight $\hat{\bs{w}}(\theta)$
of $\hat{\bs{\mu}}$ is singular if and only if $e^{i\theta}\in\Gamma_a$. Furthermore, (\ref{E:SC-measure}) also yields for a.e. $\theta$
$$
\begin{aligned}
\hat{\bs{w}}(\theta) & = \re[(\1+e^{i\theta}\hat{\bs{f}})(\1-e^{i\theta}\hat{\bs{f}})^{-1}]
\\
& = (\1-e^{-i\theta}\hat{\bs{f}}^\dag)^{-1} (\1-\hat{\bs{f}}(e^{i\theta})^\dag\hat{\bs{f}}(e^{i\theta})) (\1-e^{i\theta}\hat{\bs{f}})^{-1}.
\end{aligned}
$$
The equality
$$
\1-\hat{\bs{f}}^\dag\hat{\bs{f}} = \begin{pmatrix} 1-|f_{a,b}|^2 & 0 \\ 0 & 1-|f_a|^2 \end{pmatrix}
$$
shows that $\det\hat{\bs{w}}(\theta)=0$ implies $|f_{a,b}(e^{i\theta})|=1$ or $|f_{a,b}(e^{i\theta})|=1$. Since these two conditions hold
simultaneously, $\det\hat{\bs{w}}(\theta)=0$ necessarily gives $\hat{\bs{f}}(e^{i\theta})^\dag\hat{\bs{f}}(e^{i\theta})=\1$ and thus
$\hat{\bs{w}}(\theta)=0$. We conclude that $\hat{\bs{w}}$ is zero in $\Gamma_a$ and non singular in $\T\setminus\Gamma_a$.

As for the singular part of $\hat{\bs{\mu}}$, it is supported on a finite number of points, the roots $z\in\T$ of $z^2f_a(z)f_{a,b}(z)=1$, and any of
these roots must satisfy $|f_a(z)f_{a,b}(z)|=1$. Hence, the singular part only can have mass points located at $\Gamma_a$.

In contrast to the case of $\Z_+$, three parameters $a,b\in\D$, $\omega\in\T$ characterize now the coins (\ref{E:C-D}) with the same measure up to
rotations. On the other hand, just as in the case of $\Z_+$, the equality $b=a$ does not hold only for $D=C$ because, remarkably, it is equivalent to
the same condition $d_{21}=c_{21}e^{i(\tau-\sigma)}$ appearing for the non negative integers. Also, the consequences of the defect are not only
encoded in $b$, but the imaginary value $a=i|c_{21}|$ for a constant coin $C$ acquires with the defect an extra phase $e^{i\frac{\sigma-\tau}{2}}$
which is the square root of the similar extra phase for the case of $\Z_+$. As in $\Z_+$, we only need to consider $a\neq0$ because we know that
$a=0$ yields no localization.

\section{Localization: one defect on $\Z$} \label{S:Loc-defect-Z}

The previous discussions indicate that the study of localization in QWs leads to the analysis of the mass points (in general, the singular part) of
the corresponding measure. This analysis is worth doing specially for QWs where the localization dichotomy works, such as periodic QWs with a finite
number of perturbations. The QWs with one defect are just the simplest examples of this case. As we pointed out, in such situations we will talk
about QWs with or without localization because then localization can be viewed as a global property: it holds for no state or for almost any state.

Surprisingly, for QWs with one defect, the mass points analysis is simpler for $\Z$ than for $\Z_+$, among other reasons, due to the symmetry of the
mass points with respect to the origin, which is lost for $\Z_+$. Hence, we will study first localization in a QW on $\Z$ with coins (\ref{E:C-D}).


Concerning previous related results, N. Konno has proved in \cite{K} that the state $\frac{1}{\sqrt{2}}|0\!\ua\>+\frac{i}{\sqrt{2}}|0\!\da\>$
exhibits localization in $\Z$ for the perturbation of the constant Hadamard coin
$$
H = \frac{1}{\sqrt{2}}\begin{pmatrix}1&1\\1&-1\end{pmatrix}
$$
given by
\begin{equation} \label{E:Konno2}
C = H, \qquad D = \frac{1}{\sqrt{2}}\begin{pmatrix}1&e^{i\phi}\\e^{-i\phi}&-1\end{pmatrix},
\end{equation}
whenever $e^{i\phi}\neq1$. We will recover this result as a particular case of our analysis but, assuming it for the moment, observe that we have a
stronger result: the dichotomy implies that all the states should exhibit localization up to, at most, one state per site. Indeed, we will see that
in this model any state $\alpha|0\!\ua\>+\beta|0\!\da\>$ exhibits localization.

The aim of this section is to perform a systematic study of localization for any QW with one defect on $\Z$, which can reveal in the simplest
examples the coin dependence of localization properties.

Therefore, our purpose is to determine which coins (\ref{E:C-D}) give in $\Z$ a measure with mass points. Subsection \ref{SS:defect-Z} shows that
these models fall into groups with a common measure up to rotations, each such a group characterized by the three parameters $a,b\in\D$,
$\omega\in\T$ given in (\ref{E:ab-Z}). A canonical representative of the measures in a given group is that one $\hat{\bs{\mu}}=\bs{\mu}_{a,b}^\omega$
associated with the common CMV matrix $\hat{\bs{\cC}}=\bs{\cC}(\hat{\bs{\alpha}}_k)$ of the group given in Subsection \ref{SS:defect-Z}, whose weight
and mass points live in $\T\setminus\Gamma_a$ and $\Gamma_a$ respectively.

Coins (\ref{E:C-D}) with the same values of these parameters have the same mass points up to rotations and, therefore, the corresponding QWs have the
same localization character. For instance, in $\Z$, the constant Hadamard coin has the same values $a=b=\frac{i}{\sqrt{2}}$, $\omega=1$ as its
perturbation
$$
C = H, \qquad D = \frac{1}{\sqrt{2}}\begin{pmatrix}e^{i\phi}&1\\1&-e^{-i\phi}\end{pmatrix},
$$
which proves in a very simple way that no state in such a QW exhibits localization (see \cite{Ko} for a different proof in the particular case of the
state $\frac{1}{\sqrt{2}}|0\!\ua\>+\frac{i}{\sqrt{2}}|0\!\da\>$).

\subsection{Mass points of $\bs{\mu}_{a,b}^\omega$} \label{SS:roots-Z}

Bearing in mind that localization in a QW with one defect on $\Z$ only depends on the associated parameters $a,b,\omega$, we can restrict our
attention to the canonical representative $\bs{\mu}_{a,b}^\omega$. Proposition \ref{P:antidiag} states that the corresponding mass points are the
roots $z\in\T$ of $g_{a,b}(z)=z^2f_a(z)f_{a,b}(z)=1$ such that
\begin{equation} \label{E:mab}
m_{a,b}(z) = \lim_{r\ua1} \frac{1-r}{1-g_{a,b}(rz)} \neq 0.
\end{equation}
These conditions do not depend on $\omega$, so the mass points, and thus the localization behaviour, only depend on $a,b$ but not on $\omega$.

The choice of the square root $\sqrt{\Delta_a}$ makes $f_a$ analytic in $\T$ except at the branch points, i.e., the solutions of $\Delta_a=0$, which
are the four boundary points $\partial\Gamma_a$ of the two arcs $\Gamma_a$,
$$
\partial\Gamma_a = \{\pm z_a, \pm\overline{z}_a\}, \qquad z_a=\rho_a+i|a|, \qquad \rho_a=\sqrt{1-|a|^2}.
$$
The relation (\ref{E:fa-fab}) between $f_a$ and $f_{a,b}$ shows that $f_{a,b}$ is analytic in $\T\setminus\partial\Gamma_a$ too, and so the same is
true for $g_{a,b}$. In consequence, we find from Proposition \ref{P:antidiag} that the measure $\bs{\mu}_{a,b}^\omega$ has a mass point at any root
$z\in\T\setminus\partial\Gamma_a$ of $g_{a,b}(z)=1$. Indeed, we know that these roots must be in the interior $\Gamma_a^0$ of $\Gamma_a$ because
there is no root in $\T\setminus\Gamma_a$.

Therefore, the roots $z\in\T$ of $g_{a,b}(z)=1$ can lie on $\Gamma_a^0$, and then they are mass points for sure, or they can be on
$\partial\Gamma_a$, in which case we should check (\ref{E:mab}) to decide if they are mass points or not.

\subsubsection{Mass points on $\partial\Gamma_a$} \label{SSS:mass-boundary-Z}

We will prove that, although the points of $\partial\Gamma_a$ can be roots of $g_{a,b}(z)=1$, they are never mass points of $\bs{\mu}_{a,b}^\omega$
because condition (\ref{E:mab}) is not satisfied. We will only consider $z_a$, the analysis for the remaining three points is similar.

First, let us find the values of $b$ which make $z_a$ a root of $g_{a,b}(z)=1$. We know that $|f_a(z_a)|=1$ because $z_a\in\Gamma_a$, so from
(\ref{E:fa-fab}) we obtain
$$
g_{a,b}(z_a) = \frac{z_a^2f_a(z_a)+b}{\overline{z_a^2f_a(z_a)}+\overline{b}},
$$
and $g_{a,b}(z_a) = 1$ becomes equivalent to $\im (z_a^2f_a(z_a)+b)=0$. On the other hand, the expression (\ref{E:fa}) for $f_a$ gives $z_a^2f_a(z_a)
= i\frac{a}{|a|}z_a$, so that $z_a$ is a root of $g_a(z)=1$ if and only if $\im b = -\im(i\frac{a}{|a|}z_a)$.

Now, assume that $b$ satisfies the condition $\im b = -\im(i\frac{a}{|a|}z_a)$, and let us compute $\lim_{r\ua1}m(rz_a)$. The first order Taylor
expansion of $\Delta_a(rz_a)$ at $r=1$ gives
$$
\Delta_a(rz_a) = K_1(1-r)+O((1-r)^2), \qquad K_1\neq0,
$$
which, using (\ref{E:fa}), yields
\begin{equation} \label{E:z2fa-exp-r}
(rz_a)^2f_a(rz_a) = i\frac{a}{|a|}z_a + K_2\sqrt{1-r} + O(1-r), \qquad K_2\neq 0.
\end{equation}
Inserting this into the relation
$$
g_{a,b}(z)-1 = \frac{(z^2f_a(z)+1)(z^2f_a(z)-1)+2i(\im b)z^2f_a(z)}{1+\overline{b}z^2f_a(z)},
$$
obtained from (\ref{E:fa-fab}), leads to
$$
g_{a,b}(rz_a)-1 = K \ts \re(i\frac{a}{|a|}z_a) \sqrt{1-r} + O(1-r), \qquad K\neq0.
$$
This proves that $\lim_{r\ua1}m(rz_a)=0$ when $\re(i\frac{a}{|a|}z_a)\neq0$. The equality $\re(i\frac{a}{|a|}z_a)=0$ would imply $|\im
b|=|\re(i\frac{a}{|a|}z_a)|=1$, which is not possible, thus we conclude that there is no mass point at $z_a$, even if it is a root of $g_{a,b}(z)=1$.

\subsubsection{Mass points on $\Gamma_a^0$} \label{SSS:mass-interior-Z}

At this point we know that the mass points of $\bs{\mu}_{a,b}^\omega$ are the roots of $g_{a,b}(z)=1$ in $\Gamma_a^0$. We can restrict our analysis
to the right arc $\Gamma_a^+=\{e^{i\theta}\in\Gamma_a^0:\cos\theta\geq0\}$ of $\Gamma_a^0$ because the mass points appear in pairs $\pm z$, one
belonging to $\Gamma_a^+$ and the opposite one lying on the left arc $\Gamma_a^-=\{e^{i\theta}\in\Gamma_a^0:\cos\theta\leq0\}$.

The study of the roots in $\Gamma_a^+$ is simplified under the change of variables
$$
\zeta=\zeta(z)=-z^2f_a(z),
$$
which maps $\Gamma_a^+$ one to one onto the arc (see figure \ref{F:z-zeta})
$$
\Sigma_a = \ts \{\frac{a}{|a|}e^{it} : \cos t < |a|\},
$$
and $\partial\Gamma_a^+=\{z_a,\overline{z}_a\}$ onto $\partial\Sigma_a=\{\zeta_a^-,\zeta_a^+\}$, $\zeta_a^\pm=\frac{a}{|a|}(|a|\pm i\rho_a)$. These
mapping properties can be inferred from the expression
$$
\zeta(e^{i\theta}) = -\frac{e^{i\theta}}{\overline{a}} \left(\sqrt{|a|^2-|\sin^2\theta|}+i\sin\theta\right), \qquad
e^{i\theta}\in\overline{\Gamma_a^+},
$$
obtained from (\ref{E:fa-T}), which shows that $\zeta(z_a)=\zeta_a^-$, $\zeta(\overline{z}_a)=\zeta_a^+$ and the argument of $\zeta(e^{i\theta})$ is
increasing in $\theta$ for $e^{i\theta}\in\overline{\Gamma_a^+}$. The inverse mapping is
$$
z=z(\zeta) = \frac{1-\overline{a}\zeta}{|1-\overline{a}\zeta|}.
$$
The arc $\Sigma_a$ can be alternatively described as
$$
\ts \Sigma_a = \{\zeta\in\T : \re(\overline{a}\zeta) < |a|^2\} = \{\zeta\in\T : |a-\frac{\zeta}{2}| < \frac{1}{2}\},
$$
a result that will be of interest later on.

\begin{figure}
\includegraphics[width=12cm]{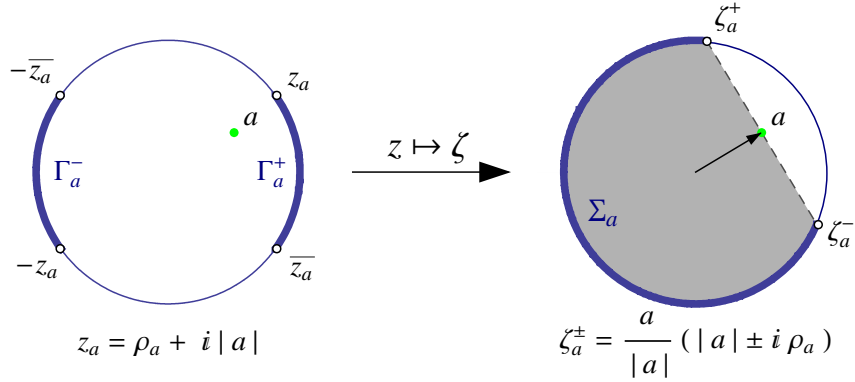}
\caption{The transformation $z \mapsto \zeta$ maps both $\Gamma_a^+$ and $\Gamma_a^-$ one to one onto $\Sigma_a$. $S(a)$ is the open set limited by
the arc $\Sigma_a$ and the straight line passing through $\zeta_a^+$ and $\zeta_a^-$, in grey color in the figure.} \label{F:z-zeta}
\end{figure}

Bearing in mind that $|f_a|=1$ in $\Gamma_a^+$, we find that
$$
g_{a,b}(z)=\frac{z_a^2f_a(z)+b}{\overline{z_a^2f_a(z)}+\overline{b}}, \qquad z\in\Gamma_a^+,
$$
so the translation of the equation for $z$ to the new variable $\zeta$ is
$$
g_{a,b}(z)=1, \quad z\in\Gamma_a^+ \quad \Leftrightarrow \quad \im b = \im \zeta, \quad \zeta\in\Sigma_a.
$$
Given $b\in\D$, the solutions $\zeta\in\T$ of the equation $\im b = \im \zeta$ are
$$
\zeta_\pm(b) = \pm\sqrt{1-\im^2 b}+i\im b.
$$
The values of $a$ which are compatible with $\zeta_\pm(b)$ are given respectively by any of the equivalent conditions
\begin{equation*} \label{E:mass+-}
\ts \zeta_\pm(b) \in \Sigma_a \kern5pt \Leftrightarrow \kern5pt \re(\overline{a}\zeta_\pm(b))<|a|^2 \kern5pt \Leftrightarrow \kern5pt
|a-\frac{1}{2}\zeta_\pm(b)|>\frac{1}{2}. \tag{$\mathbf{M}_\pm$}
\end{equation*}

Therefore, the measure $\bs{\mu}_{a,b}^\omega$ has mass points if and only if at least one of the conditions $\mathbf{M}_+$, $\mathbf{M}_-$ is
satisfied. For each of the conditions $\mathbf{M}_+$, $\mathbf{M}_-$ which is satisfied, there is a pair of mass points at $\pm z_+(a,b)$, $\pm
z_-(a,b)$ respectively, where
\begin{equation} \label{E:z+-ab}
z_\pm(a,b) = \frac{1-\overline{a}\zeta_\pm(b)}{|1-\overline{a}\zeta_\pm(b)|} \in \Gamma_a^+, \qquad -z_\pm(a,b)\in\Gamma_a^-.
\end{equation}
Hence, we have the following possibilities:
\begin{itemize}
\item[($\mathbf{M}^0$)] If none of $\mathbf{M}_\pm$ are satisfied, $\bs{\mu}_{a,b}^\omega$ has no mass point.
\item[($\mathbf{M}^2_+$)] If $\mathbf{M}_+$ is satisfied but $\mathbf{M}_-$ is not, $\bs{\mu}_{a,b}^\omega$ has 2 mass points:
\newline
$z_+(a,b)\in \Gamma_a^+$ and $-z_+(a,b)\in\Gamma_a^-$.
\item[($\mathbf{M}^2_-$)] If $\mathbf{M}_-$ is satisfied but $\mathbf{M}_+$ is not, $\bs{\mu}_{a,b}^\omega$ has 2 mass points:
\newline
$z_-(a,b)\in \Gamma_a^+$ and $-z_-(a,b)\in\Gamma_a^-$.
\item[($\mathbf{M}^4$)] If $\mathbf{M}_\pm$ are both satisfied, $\bs{\mu}_{a,b}^\omega$ has 4 mass points:
\newline
$z_\pm(a,b)\in \Gamma_a^+$ and $-z_\pm(a,b)\in\Gamma_a^-$.
\end{itemize}
The case of 2 mass points is characterized by $\mathbf{M}^2\equiv(\mathbf{M}^2_+\text{ or }\mathbf{M}^2_-)$, while
$\mathbf{M}\equiv(\mathbf{M}_+\text{ or }\mathbf{M}_-)$ is the condition for the existence of mass points.

\subsection{Localization pictures on $\Z$: dependence on $a$ and $b$} \label{SS:Loc-ab-Z}

The localization dichotomy for one defect on $\Z$ does not depend on $\omega$, but only on $a,b$. Hence, we can discuss two kind of problems: Given
$a$, which values of $b$ yield localization? Given $b$, which values of $a$ yield localization?

\subsubsection{From $b$ to $a$} \label{SSS:Loc-btoa-Z}

The last way of expressing $\mathbf{M}_\pm$ in ($\mathbf{M}_\pm$) above means that $a$ lies outside the closed disk $\cD_b^\pm$ of center
$\zeta_\pm(b)/2$ and radius $1/2$. Therefore, the different cases can be stated as (see figure \ref{F:Zba}):
\begin{itemize}
\item[($\mathbf{M}^0$)] $a\in\cD_b^+\cap\cD_b^- \Leftrightarrow \bs{\mu}_{a,b}^\omega$ has no mass point.
\item[($\mathbf{M}^2_+$)] $a\in\cD_b^-\setminus\cD_b^+ \Leftrightarrow \bs{\mu}_{a,b}^\omega$ has 2 mass points $\pm z_+(a,b)$.
\item[($\mathbf{M}^2_-$)] $a\in\cD_b^+\setminus\cD_b^- \Leftrightarrow \bs{\mu}_{a,b}^\omega$ has 2 mass points $\pm z_-(a,b)$.
\item[($\mathbf{M}^4$)] $a\notin\cD_b^+\cup\cD_b^- \Leftrightarrow \bs{\mu}_{a,b}^\omega$ has 4 mass points $\pm z_+(a,b),\pm z_-(a,b)$.
\end{itemize}
We conclude that, given a value of $b$, a QW with one defect on $\Z$ exhibits localization if and only if $a\notin\cD_b^+\cap\cD_b^-$.

\begin{figure}
\includegraphics[width=5cm]{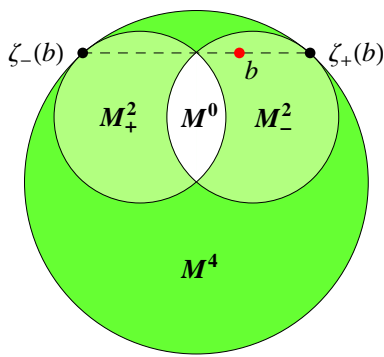}
\caption{{\bf Localization for one defect on $\Z$ (from $b$ to $a$).} In green color the values of $a$ giving localization for the choice of $b$ in
red. They only depend on $\im b$. In light green the values of $a$ with 2 mass points and in dark green those with 4 mass points.} \label{F:Zba}
\end{figure}

\subsubsection{From $a$ to $b$} \label{SSS:Loc-atob-Z}

Now we look at the first condition in ($\mathbf{M}_\pm$) above. To make it more explicit let us decompose $\Sigma_a=\Sigma_a^+\cup\Sigma_a^-$ into
its right and left parts $\Sigma_a^+=\{\zeta\in\Sigma_a:\re\zeta\geq0\}$, $\Sigma_a^-=\{\zeta\in\Sigma_a:\re\zeta\leq0\}$. Then, a choice of $a$
fixes $\Sigma_a^\pm$, and so the interval $\im\Sigma_a^\pm$ where $\im b$ must lie to fulfill $\mathbf{M}_\pm$. This means that localization for QWs
with one defect on $\Z$ only depends on $a$ and $\im b$, but not on $\re b$.

Taking into account that the angular amplitude of $\Sigma_a$ is bigger than $\pi$ (remember that we are considering $a\neq0$), we find three
possibilities for $\im\Sigma_a$ (see figure \ref{F:Zab}):
\begin{itemize}
\item[($\mathbf{A}$)] $(\re\zeta_a^+)(\re\zeta_a^-)\geq0 \kern7pt \Leftrightarrow \kern7pt (-1,1)\subset\im\Sigma_a$.
\item[($\mathbf{B}_+$)] $(\re\zeta_a^+)(\re\zeta_a^-)<0, \kern7pt \im a>0 \kern7pt \Leftrightarrow \kern7pt \im\Sigma_a=[-1,r_+), \kern7pt r_+<1$.
\item[($\mathbf{B}_-$)] $(\re\zeta_a^+)(\re\zeta_a^-)<0, \kern7pt \im a<0 \kern7pt \Leftrightarrow \kern7pt \im\Sigma_a=(r_-,1], \kern7pt r_->-1$.
\end{itemize}
The value $(\re\zeta_a^+)(\re\zeta_a^-) = \frac{|a|^4-\im^2 a}{|a|^2}$ turns these three cases into the following localization criteria:
\begin{itemize}
\item[($\mathbf{A}$)] Localization $\forall b \kern5pt \Leftrightarrow \kern5pt |\im a|\leq|a|^2 \kern5pt \Leftrightarrow \kern5pt
|a-\frac{i}{2}|\geq\frac{1}{2} \text{ or } |a+\frac{i}{2}|\geq\frac{1}{2}$.
\item[($\mathbf{B}_+$)] Localization for $\im b<r_+\in(0,1) \kern5pt \Leftrightarrow \kern5pt \im a>|a|^2 \kern5pt \Leftrightarrow \kern5pt
|a-\frac{i}{2}|<\frac{1}{2}$.
\item[($\mathbf{B}_-$)] Localization for $\im b>r_-\in(-1,0) \kern5pt \Leftrightarrow \kern5pt \im a<-|a|^2 \kern5pt \Leftrightarrow \kern5pt
|a+\frac{i}{2}|<\frac{1}{2}$.
\end{itemize}

\begin{figure}
\includegraphics[width=4.7cm]{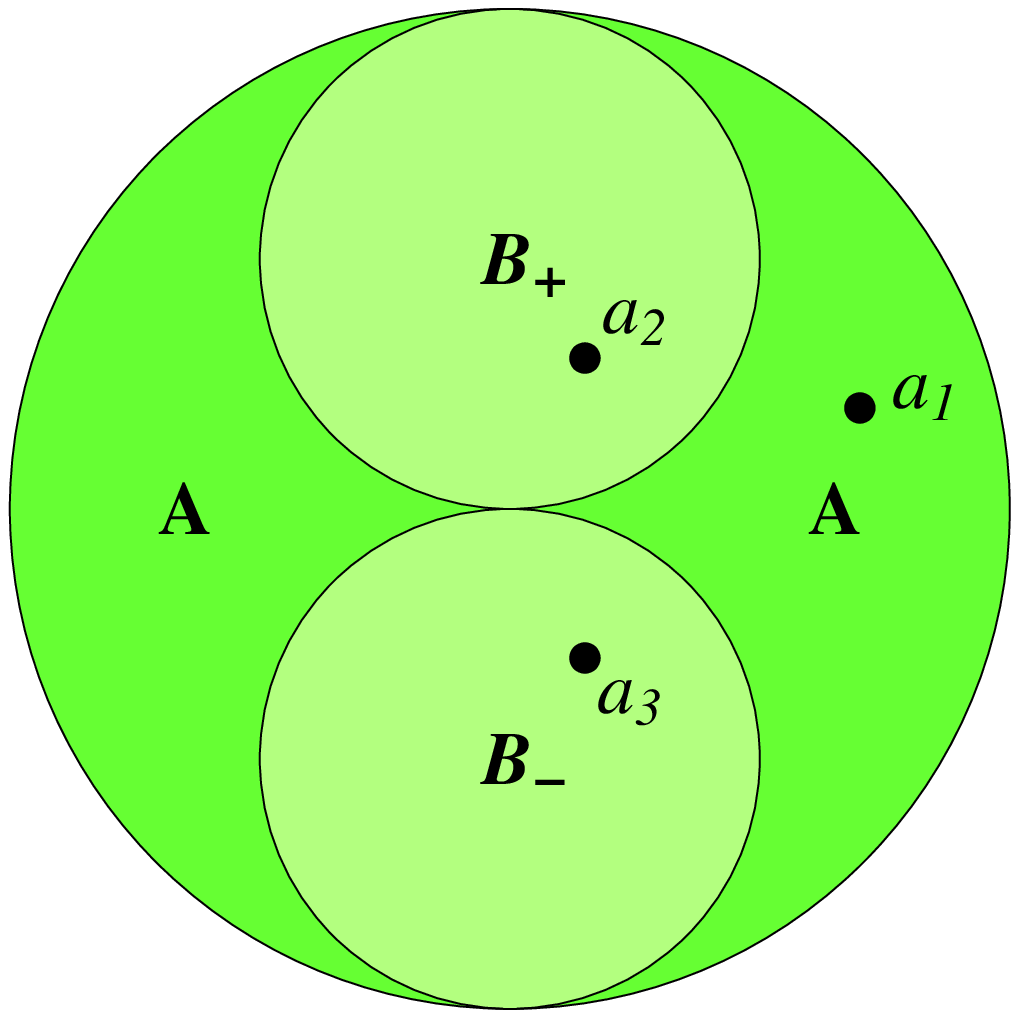}

\vspace*{.2cm}

\includegraphics[width=4.3cm]{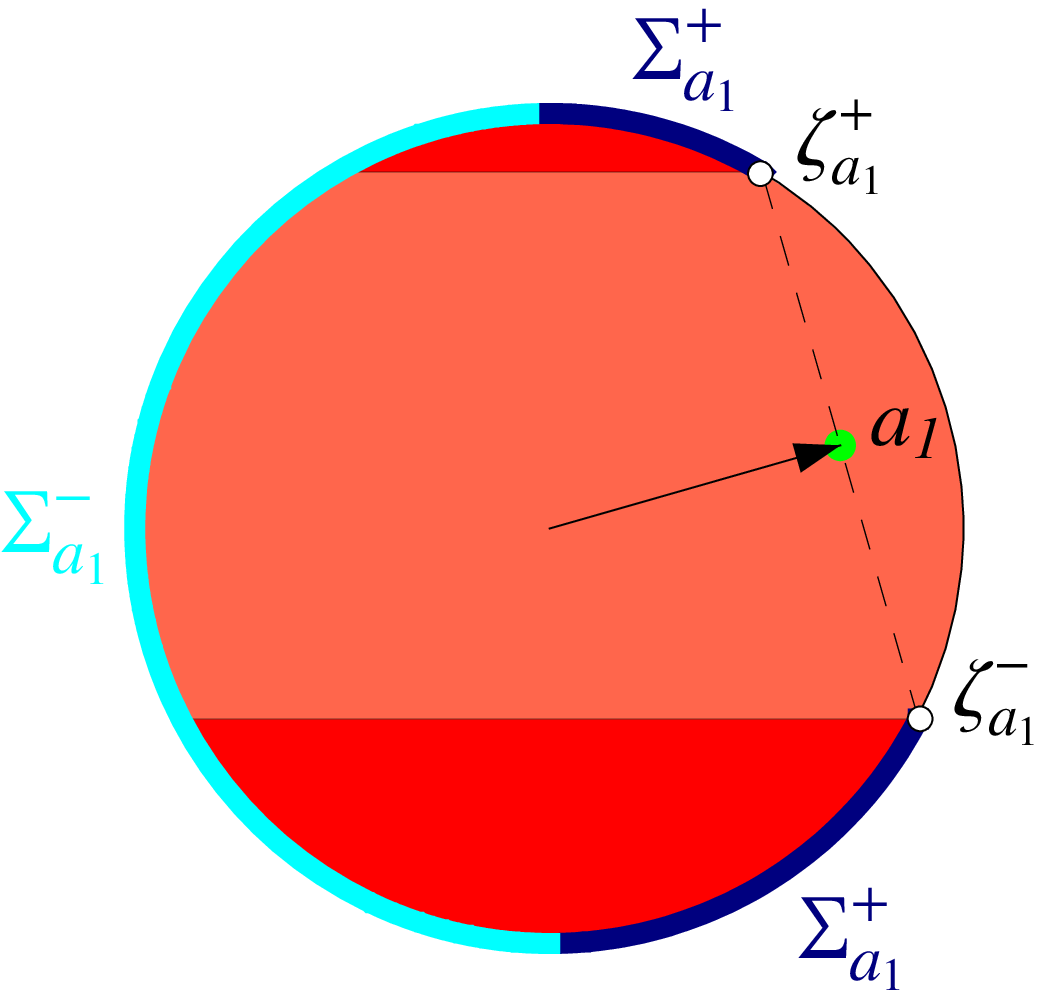}
\hspace{.3cm}
\includegraphics[width=4.3cm]{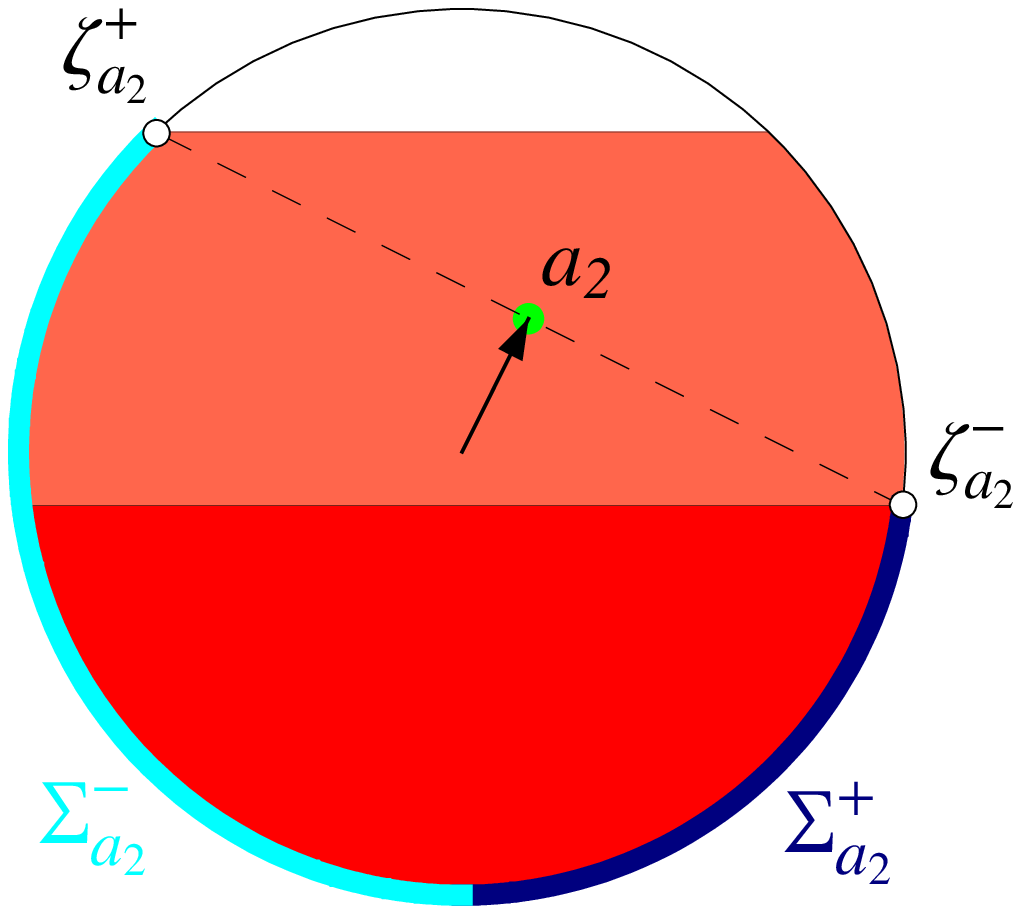}
\hspace{.18cm}
\includegraphics[width=4.3cm]{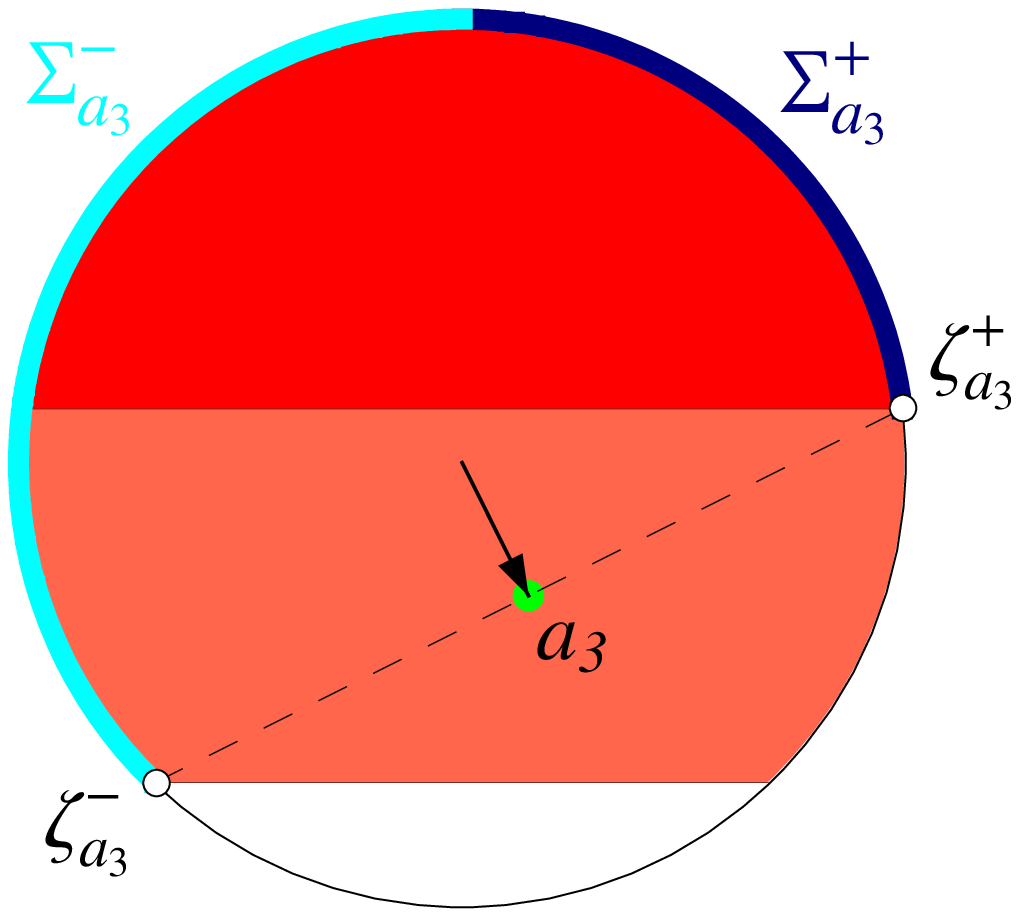}
\caption{{\bf Localization for one defect on $\Z$ (from $a$ to $b$).} The upper figure shows in dark green the values of $a$ giving localization for
any $b$. The upper (lower) circle in light green are the values of $a$ such that localization fails for $b$ lying on an upper (lower) band $\im b
\geq r_+$ ($\im b \leq r_-$). The lower figures represent in red color the values of $b$ giving localization for each of the three values of $a$
shown in the upper figure. They are characterized by $\im b\in\im\Sigma_a$. A pair of mass points appears for each of the conditions $\im
b\in\Sigma_a^\pm$ which is satisfied. Therefore, the dark red covers the values of $b$ with 4 mass points, while the light red covers those with 2
mass points.} \label{F:Zab}
\end{figure}

Roughly speaking, the values of $a$ split into three regions delimited by two circles with radius 1/2 centered at $\pm i/2$. Outside these circles
localization holds for any defect. Inside the upper or lower circle there is respectively an upper and lower bound for $\im b$ which delimits the
defects giving localization (see figure \ref{F:Zab}).

Notice that, for any $a\in\D$, localization holds at least for $b$ lying on the open set $S(a)$ limited by the arc $\Sigma_a$ and the straight line
joining $\zeta_a^+$ and $\zeta_a^-$ (see figure \ref{F:z-zeta}), that is,
\begin{equation} \label{E:S}
\ts S(a) = \{r\frac{a}{|a|}e^{it}:\cos t<|a|,r<1\}.
\end{equation}
Indeed, $S(a)$ yields exactly the values of $b$ giving localization when $a$ is imaginary because in that case $\im\zeta_a^+=\im\zeta_a^-$. Thus,
among the values of $a$ with the same modulus, the biggest region of values of $b$ without localization holds for $\re a = 0$. Since
$\im\Sigma_a^+=\im\Sigma_a^-$ for an imaginary value of $a$, it also ensures 4 mass points in case of localization.

For a fixed $a$, the bounds $r_\pm=r_\pm(a)$ are $r_+(a)=\max\im\{\zeta_a^+,\zeta_a^-\}$ and $r_-(a)=\min\im\{\zeta_a^+,\zeta_a^-\}$, i.e.,
$$
r_\pm(a)=\im a\pm\frac{\rho_a}{|a|}|\re a|.
$$
These bounds also permit to distinguish between the values of $b$ giving 2 or 4 mass points, once $a$ is chosen. There are 4 mass points when $\im b
\in \im\Sigma_a^+\cap\im\Sigma_a^-$, and only 2 mass points if $\im b\in\im\Sigma_a^+\setminus\im\Sigma_a^-$ or $\im
b\in\im\Sigma_a^-\setminus\im\Sigma_a^+$. Looking separately at the three previous possibilities we find that 2 mass points appear when $b$ lies on a
band limited by two horizontal lines passing through $\zeta_a^+$ and $\zeta_a^-$. Hence, the situation in the three cases above can be more precisely
described as follows (see figure \ref{F:Zab}):

\begin{itemize}

\item[($\mathbf{A}$)]
$\begin{cases}
\im b < r_-(a), & \text{ 4 mass points, }
\\
r_-(a) \leq \im b \leq r_+(a), & \text{ 2 mass points, }
\\
r_+(a) < \im b, & \text{ 4 mass points. }
\end{cases}$

\item[($\mathbf{B}_+$)]
$\begin{cases}
\im b < r_-(a), & \text{ 4 mass points, }
\\
r_-(a) \leq \im b < r_+(a), & \text{ 2 mass points, }
\\
r_+(a) \leq \im b, & \text{ no mass points. }
\end{cases}$

\item[($\mathbf{B}_-$)]
$\begin{cases}
\im b \leq r_-(a), & \text{ no mass points, }
\\
r_-(a) < \im b \leq r_+(a), & \text{ 2 mass points, }
\\
r_+(a) < \im b, & \text{ 4 mass points. }
\end{cases}$

\end{itemize}

\section{Asymptotic return probabilities: one defect on $\Z$} \label{S:Ret-defect-Z}

To compute the asymptotics of $p_{\alpha,\beta}^{(k)}(n)$ as in (\ref{E:ARP}) we need, not only the mass points (known from the previous results),
but also their masses and the OLP related to the site $k$. Let us see how to make the computations with the canonical representative
$d\hat{\bs{\mu}}=d\bs{\mu}_{a,b}^\omega$ instead of the actual measure $d\bs{\mu}(z)=d\hat{\bs{\mu}}(e^{-i\vartheta}z)$ of the QW.

Introducing (\ref{E:rot-Z}) in (\ref{E:psi}) we obtain the relation $\bs{\psi}_{\alpha,\beta}^{(k)}(z) =
\hat{\bs{\psi}}_{\hat\alpha,\hat\beta}^{(k)}(e^{-i\vartheta}z)$ between the corresponding functions for the state $|\Psi_{\alpha,\beta}^{(k)}\>$,
where
$$
\begin{cases}
\hat\alpha = \hat\lambda_{2j}^{(1)} \alpha, \quad \hat\beta = \hat\lambda_{2j+1}^{(2)} \beta, & \text{ if } k=j,
\\
\hat\alpha = \hat\lambda_{2j+1}^{(1)} \alpha, \quad \hat\beta = \hat\lambda_{2j}^{(2)} \beta, & \text{ if } k=-j-1,
\end{cases}
\quad j\geq0,
$$
and $\bs{\hat\lambda}_k=\text{diag}(\hat\lambda_k^{(1)},\hat\lambda_k^{(2)})$. In particular, $\bs{\psi}_{1,0}^{(k)}(z) = \kappa_j^{(1)}
\hat{\bs{\psi}}_{1,0}^{(k)}(e^{-i\vartheta}z)$ and $\bs{\psi}_{0,1}^{(k)}(z) = \kappa_j^{(2)} \hat{\bs{\psi}}_{1,0}^{(k)}(e^{-i\vartheta}z)$ with
$\kappa_j^{(l)}\in\T$. Hence, (\ref{E:ARP}) can be written as
\begin{equation} \label{E:ARP-hat}
p_{\alpha,\beta}^{(k)}(n) \underset{n}{\sim} \text{\SMALL
$\left|\sum_{z\in\T} z^n\hat{\bs{\psi}}_{\hat\alpha,\hat\beta}^{(k)}(z)\hat{\bs{\mu}}(\{z\})\hat{\bs{\psi}}_{1,0}^{(k)}(z)^\dag\right|^2 +
\left|\sum_{z\in\T} z^n\hat{\bs{\psi}}_{\hat\alpha,\hat\beta}^{(k)}(z)\hat{\bs{\mu}}(\{z\})\hat{\bs{\psi}}_{0,1}^{(k)}(z)^\dag\right|^2$}.
\end{equation}

For convenience, while performing the calculations we will omit the hat on $\hat\alpha,\hat\beta$ so that at the end of the computations we should
make the substitution $\alpha,\beta\to\hat\alpha,\hat\beta$. We also remember that $p_{\alpha,\beta}^{(k)}(2n-1)=0$, thus we only must consider
$p_{\alpha,\beta}^{(k)}(2n)$.

\subsection{Masses of $\hat{\bs{\mu}}=\bs{\mu}_{a,b}^\omega$} \label{SS:masses-Z}

There are 4 possible mass points: $\pm z_+(a,b)$, $\pm z_-(a,b)$. We only need to calculate the mass of the two points $z_\pm(a,b)$ given in
(\ref{E:z+-ab}) because the mass of the opposite points follow from Corollary \ref{C:sym}. We will make the calculations for a general point of the
form
\begin{equation} \label{E:z-gen}
z_0 = \frac{1-\overline{a}\zeta_0}{|1-\overline{a}\zeta_0|}, \qquad \zeta_0\in\Sigma_a.
\end{equation}
The mass of $z_\pm(a,b)$ is obtained setting $\zeta_0=\zeta_\pm(b)$.

Proposition \ref{P:antidiag} states that
\begin{equation} \label{E:mu-z0-Z}
\hat{\bs{\mu}}(\{z_0\}) = m(z_0) \begin{pmatrix} 1 & \eta(z_0) \\ \overline{\eta(z_0)} & 1 \end{pmatrix},
\qquad
\begin{aligned}
& \eta(z_0) = \omega z_0f_a(z_0),
\\
& m(z_0) = 1/z_0g'_{a,b}(z_0).
\end{aligned}
\end{equation}
In this case
\begin{equation} \label{E:eta-z0-Z}
\eta(z_0)=-\omega\overline{z_0}\zeta_0 = -\omega\frac{\zeta_0-a}{|\zeta_0-a|},
\end{equation}
because we know that inverting (\ref{E:z-gen}) yields $\zeta_0=-z_0^2f_a(z_0)$.

For the calculation of $g'_{a,b}(z_0)$, first perform the change $\zeta(z)=-z^2f_a(z)$ in $g_{a,b}(z)$,
$$
g_{a,b} = \zeta\frac{\zeta-b}{1-\overline{b}\zeta},
$$
so that
$$
\frac{g'_{a,b}}{g_{a,b}} = \frac{\zeta'}{\zeta} \left( 1 + \frac{\rho_b^2\zeta}{(1-\overline{b}\zeta)(\zeta-b)} \right),
$$
and
$$
g'_{a,b}(z_0) = \frac{\zeta'(z_0)}{\zeta_0} \left( 1 + \frac{\rho_b^2}{|\zeta_0-b|^2} \right)
= 2\frac{\zeta'(z_0)}{\zeta_0} \frac{1-\re(\overline{b}\zeta_0)}{|\zeta_0-b|^2}.
$$

It only remains to compute $\zeta'(z_0)$. From (\ref{E:fa}) we obtain
$$
\zeta'(z) = 2z\frac{\zeta(z)-a}{\sqrt{\Delta_a(z)}},
$$
hence
\begin{equation} \label{E:der-zeta}
\frac{\zeta'(z_0)}{\zeta_0} = \frac{1-a\overline{\zeta}_0}{\sqrt{|a|^2-\im^2 z_0}} =
\frac{|1-\overline{a}\zeta_0|}{|a|^2-\re(\overline{a}\zeta_0)}(1-a\overline{\zeta}_0).
\end{equation}

Combining the previous results we get
\begin{equation} \label{E:m-z0-Z}
m(z_0) = \frac{1}{2} \frac{|\zeta_0-b|^2}{|\zeta_0-a|^2} \frac{|a|^2-\re(\overline{a}\zeta_0)}{1-\re(\overline{b}\zeta_0)}
= \frac{1}{2} \frac{1-\frac{\rho_a^2}{|\zeta_0-a|^2}}{1+\frac{\rho_b^2}{|\zeta_0-b|^2}}.
\end{equation}

\subsection{Asymptotics of $p_{\alpha,\beta}^{(0)}(n)$ on $\Z$} \label{SS:ARP-0-Z}

The asymptotic return probability to the origin involves $\hat{\bs{\psi}}_{\alpha,\beta}^{(0)}=(\alpha,0)\hat{\bs{X}}_0+(0,\beta)\hat{\bs{X}}_1$. We
know that $\hat{\bs{X}}_0=1$, while $\hat{\bs{X}}_1$ can be obtained specializing the general expression (\ref{E:X1-Z}) for the coins
$$
\hat{C}_{-1} = \begin{pmatrix} \rho_a & \overline{\omega a} \\ -\omega a & \rho_a \end{pmatrix},
\qquad
\hat{C}_0 = \begin{pmatrix} \rho_b & -\overline\omega b \\ \omega\overline{b} & \rho_b \end{pmatrix},
\qquad
\rho_b=\sqrt{1-|b|^2},
$$
related to the CMV matrix $\hat{\bs{\cC}}$ of $\hat{\bs{\mu}}$. This gives
$$
\hat{\bs{X}}_1(z) = \begin{pmatrix} z^{-1}/\rho_a & -\omega a/\rho_a \\ -\overline\omega b/\rho_b & z^{-1}/\rho_b \end{pmatrix}.
$$
We finally find that
$$
\hat{\bs{\psi}}_{\alpha,\beta}^{(0)}(z) = \alpha(1,0) + \frac{\beta}{\rho_b}(-\overline\omega b,z^{-1}).
$$

\subsubsection{The case of 2 mass points} \label{SSS:ARP-0-Z-2mass}

Assume the case $\mathbf{M}^2\equiv(\mathbf{M}^2_+\text{ or }\mathbf{M}^2_-)$, so that there are exactly 2 mass points: $\pm z_+(a,b)$ for
$\mathbf{M}^2_+$, and $\pm z_-(a,b)$ for $\mathbf{M}^2_-$. For convenience, let us write $z_0=z_\pm(a,b)$ and $\zeta_0=\zeta(z_0)$ for
$\mathbf{M}^2_\pm$. Then, from (\ref{E:mu-z0-Z}) and (\ref{E:eta-z0-Z}) we find that
$$
\begin{aligned}
& \hat{\bs{\psi}}_{\alpha,\beta}^{(0)}(z_0) \hat{\bs{\mu}}(\{z_0\}) \hat{\bs{\psi}}_{1,0}^{(0)}(z_0)^\dag
= m(z_0) \left( \alpha - \beta \overline{\omega} \frac{\overline{\zeta}_0+b}{\rho_b} \right),
\\
& \hat{\bs{\psi}}_{\alpha,\beta}^{(0)}(z_0) \hat{\bs{\mu}}(\{z_0\}) \hat{\bs{\psi}}_{1,0}^{(0)}(z_0)^\dag
= - m(z_0) \, \omega \frac{\zeta_0+\overline{b}}{\rho_b} \left( \alpha - \beta \overline{\omega} \frac{\overline{\zeta}_0+b}{\rho_b} \right),
\end{aligned}
$$
where $m(z_0)$ is given in (\ref{E:m-z0-Z}).

The result for $-z_0$ is the same because $m(z)$ is even and $\eta(z)$ is odd, thus, according to (\ref{E:ARP-hat}),
$$
\lim_{n\to\infty} p_{\alpha,\beta}^{(0)}(2n) =
(2m(z_0))^2 \left( 1+\frac{|\zeta_0+\overline{b}|^2}{\rho_b^2} \right)
\left| \hat\alpha - \hat\beta \overline{\omega} \frac{\overline{\zeta}_0+b}{\rho_b} \right|^2.
$$

Setting $z_0=z_\pm(a,b)$, then $\zeta_0=\zeta_\pm(b)$ and
$$
1+\frac{\rho_b^2}{|\zeta_0-b|^2} = 1+\frac{|\zeta_0+\overline{b}|^2}{\rho_b^2} = \frac{2\sqrt{1-\im^2b}}{\sqrt{1-\im^2b}\mp\re b},
$$
which gives
$$
\lim_{n\to\infty} p_{\alpha,\beta}^{(0)}(2n) = p_{\alpha,\beta}^\pm(a,b,\omega) \quad \text{ for } \; \mathbf{M}^2_\pm,
$$
with
\begin{equation} \label{E:p+-}
\begin{aligned}
& \kern-6pt p_{\alpha,\beta}^\pm(a,b,\omega) = \frac{\left(1-\frac{\rho_a^2}{|\zeta_0-a|^2}\right)^2}{1+\frac{\rho_b^2}{|\zeta_0-b|^2}}
\left| \hat\alpha - \hat\beta \overline{\omega} \frac{\overline{\zeta}_0+b}{\rho_b} \right|^2
\\
& \kern-9pt = \frac{1}{2} \left(1-\frac{\rho_a^2}{|\zeta_\pm(b)-a|^2}\right)^2
\left\{ 1 \mp \frac{(|\hat\alpha|^2-|\hat\beta|^2) \re b + 2\rho_b \re(\overline{\omega\hat\alpha}\hat\beta)}{\sqrt{1-\im^2b}} \right\}\kern-2pt.
\end{aligned}
\end{equation}
Here we have used that $|\alpha|^2+|\beta|^2=1$.

We see that the asymptotic return probability $p_{\alpha,\beta}^\pm(a,b,\omega)$ to the origin depends on the coefficients $\alpha$ and $\beta$ of
the state, as well as on the parameters $a$, $b$, $\omega$ associated with the QW. Indeed, there is a state $\alpha|0\!\ua\>+\beta|0\!\da\>$ which
exhibits no localization, given by
\begin{equation} \label{E:state-noloc-2mass}
\hat\beta = \hat\alpha \, \omega \frac{\rho_b}{\re b \pm \sqrt{1-\im^2b}} \quad \text{ for } \; \mathbf{M}^2_\pm.
\end{equation}

\subsubsection{The case of 4 mass points} \label{SSS:ARP-0-Z-4mass}

In the case $\mathbf{M}^4$ there are 4 mass points: $\pm z_+$, where $z_+=z_+(a,b)$ is related to $\zeta_+=\zeta_+(b)$, and $\pm z_-$, where
$z_-=z_-(a,b)$ is related to $\zeta_-=\zeta_-(b)$. Therefore
$$
\begin{aligned}
p_{\alpha,\beta}^{(0)}(2n) & \underset{n}{\sim}
\left| \sum_{z=z_\pm} 2m(z) \left( \hat\alpha - \hat\beta \overline{\omega} \frac{\overline{\zeta}+b}{\rho_b} \right) z^{2n} \right|^2
\\
& + \left| \sum_{z=z_\pm} 2m(z) \frac{\zeta+\overline{b}}{\rho_b}
\left( \hat\alpha - \hat\beta \overline{\omega} \frac{\overline{\zeta}+b}{\rho_b} \right) z^{2n} \right|^2.
\end{aligned}
$$
The cross terms of both summands cancel each other because
$$
\frac{\overline{\zeta}_++b}{\rho_b} \frac{\zeta_-+\overline{b}}{\rho_b} = -1,
$$
hence,
$$
\lim_{n\to\infty} p_{\alpha,\beta}^{(0)}(2n) = p_{\alpha,\beta}^+(a,b,\omega) + p_{\alpha,\beta}^-(a,b,\omega) \quad \text{ for } \; \mathbf{M}^4,
$$
with $p_{\alpha,\beta}^\pm(a,b,\omega)$ given in (\ref{E:p+-}).

In other words, the 4 mass points $\pm z_+$, $\pm z_-$ contribute to the asymptotic return probability to the origin simply by adding the
contributions that they should have if considered as two independent cases with 2 mass points. As a consequence, the existence of 4 mass points
ensures that all the states at the origin exhibit localization because the two conditions in (\ref{E:state-noloc-2mass}) are incompatible for
$(\alpha,\beta)\neq(0,0)$.

Particularly simple is the case of an imaginary value of $a$ which, according to (\ref{E:ab-Z}), corresponds to a defect such that
$e^{i\tau}=e^{i\sigma}$. Then, localization appears if and only if $\im a>0,\im b$ or $\im a<0,\im b$, and in such a case there exist always 4 mass
points. The simplicity of the asymptotic return probability to the origin comes from the fact that, for an imaginary $a$, we have
$|\zeta_+-a|=|\zeta_--a|$ and thus
\begin{equation} \label{E:p-aim}
\lim_{n\to\infty} p_{\alpha,\beta}^{(0)}(2n) = \left(1-\frac{\rho_a^2}{|\zeta_\pm-a|^2}\right)^2
= \left(\frac{2\im a(\im a-\im b)}{1+\im^2a-2\,\im a\,\im b}\right)^2.
\end{equation}
In this case the asymptotic return probability to the origin does not depend on the state.

For instance, the model (\ref{E:Konno2}) gives $a=\frac{i}{\sqrt{2}}$, $b=\frac{ie^{i\phi}}{\sqrt{2}}$, $\omega=1$, which exhibits localization when
$\im b < \im a$, i.e., $e^{i\phi}\neq1$. The application of (\ref{E:p-aim}) to this model yields
$$
\lim_{n\to\infty} p_{\alpha,\beta}^{(0)}(2n) = \left(\frac{2(1-\cos\phi)}{3-2\cos\phi}\right)^2,
$$
which shows that the result obtained in \cite{K} for the special case $\alpha=\frac{1}{\sqrt{2}}$, $\beta=\frac{i}{\sqrt{2}}$ is indeed true for any
$\alpha,\beta$.

\subsection{Maximum asymptotic return probabilities on $\Z$} \label{SS:max-ARP-0-Z}

The previous results seem to indicate that the maximum values of
$$
p_{\alpha,\beta}^{(0)}=\lim_{n\to\infty}p_{\alpha,\beta}^{(0)}(2n)
$$
should be reached for $a$ close to the unit circle, which means that the non defective coin $C$ has an almost anti-diagonal shape. Let us analyze the
behaviour of $\max_{\alpha,\beta}p_{\alpha,\beta}^{(0)}$ when $|a|\to1$.

Given a value of $a$ with a fixed phase different from that of $\zeta_\pm(b)$, the localization pictures in subsection \ref{SS:Loc-ab-Z} show that
the measure $\bs{\mu}_{a,b}^\omega$ has 4 mass points as far as $|a|$ is close enough to 1. Take $\alpha_0$, $\beta_0$ such that
$\hat\beta_0=i\omega\hat\alpha_0$, so $|\alpha_0|=|\beta_0|=\frac{1}{\sqrt{2}}$. Then, we find from (\ref{E:p+-}) that, in the case of 4 mass points,
$$
\max_{\alpha,\beta} p_{\alpha,\beta}^{(0)} \geq p_{\alpha_0,\beta_0}^{(0)} =
\frac{1}{2} \left( 1-\frac{\rho_a^2}{|\zeta_+(b)-a|^2}\right) +  \frac{1}{2} \left( 1-\frac{\rho_a^2}{|\zeta_-(b)-a|^2}\right).
$$
Therefore,
$$
\lim_{a \to a_0} \max_{\alpha,\beta}p_{\alpha,\beta}^{(0)} = 1, \qquad a_0\in\T\setminus\{\zeta_\pm(b)\}.
$$
That is, if $\im a \neq \im b$ and $|a|$ is close enough to one, there exist qubits which asymptotically return to the origin with probability almost
one. According to (\ref{E:ab-Z}), given a defect $D$, this holds for almost any coin $C$ as long as its diagonal is close enough to zero.

These results become stronger when $a$ is imaginary. Then $p_{\alpha,\beta}^{(0)}$ is independent of $\alpha,\beta$ and (\ref{E:p-aim}) yields
$\lim_{a \to \pm i} p_{\alpha,\beta}^{(0)} = 1$ for any state. In other words, if $a$ is close enough to $i$ or $-i$, all the qubits asymptotically
return to the origin with probability almost one. Looking at (\ref{E:ab-Z}) we see that, given a defect $D$, this is the case of any coin $C$ which
is close enough to an anti-diagonal one provided that $e^{i\sigma}$ is close enough to $e^{i\tau}$.

\section{Localization: one defect on $\Z_+$} \label{S:Loc-defect-Z+}

We will study the localization for the coins (\ref{E:C-D}) in $\Z_+$. As in the case of $\Z$, this requires the analysis of the mass points of the
corresponding measure. Subsection \ref{SS:defect-Z+} shows that these models fall again into groups with the same localization behaviour because any
such a group has a unique measure up to rotations. Nevertheless, these groups are characterized now by only two parameters $a,b\in\D$ given in
(\ref{E:ab-Z+}). The measure $\hat{\mu}=\mu_{a,b}$ of the CMV matrix $\hat{\cC}=\cC(\hat{\alpha}_k)$ introduced in Subsection \ref{SS:defect-Z+}
serves as a canonical representative for the measures in a group. The corresponding weight and mass points are supported in $\T\setminus\Gamma_a$ and
$\Gamma_a$ respectively.

\subsection{Mass points of $\mu_{a,b}$} \label{SS:roots-Z+}

Concerning localization properties for one defect on $\Z_+$ we can restrict ourselves to the measures $\mu_{a,b}$ without loss of generality. The
corresponding mass points are the roots $z\in\T$ of $h_{a,b}(z)=zf_{a,b}(z)=1$ such that
\begin{equation} \label{E:muab}
\mu_{a,b}(\{z\}) = \lim_{r\ua1} \frac{1-r}{1-h_{a,b}(rz)} \neq 0.
\end{equation}
These roots must lie on $\Gamma_a$ and when they lie on $\Gamma_a^0$ condition (\ref{E:muab}) is always satisfied because $h_{a,b}$ is analytic in
$\Gamma_a^0$.

Moreover, the points of $\partial\Gamma_a$ can be roots of $h_{a,b}(z)=1$ but never mass points of $\mu_{a,b}$ because condition (\ref{E:muab}) is
not satisfied on $\partial\Gamma_a$. Consider for instance the point $z_a$ and assume that $h_{a,b}(z_a)=1$. Then, using (\ref{E:z2fa-exp-r}) we find
that
$$
h(rz_a)-1 = K'\sqrt{1-r}+O(1-r), \qquad K'\neq0,
$$
which, according to (\ref{E:muab}), implies that $\mu_{a,b}(\{z_a\})=0$. A similar proof works for the remaining points of $\partial\Gamma_a$.

Therefore, the mass points of $\mu_{a,b}$ are exactly the roots of $h_{a,b}(z)=1$ in $\Gamma_a^0$. To study these roots we will use the same chage of
variables as in the case of $\Z$. However, since the symmetry of the mass points with respect to the origin disappears in $\Z_+$, we must study
independently the roots in the right and left arcs $\Gamma_a^\pm$ of $\Gamma_a^0$.

The transformation
$$
\zeta = \zeta(z) = -z^2f_a(z)
$$
maps both arcs $\Gamma_a^\pm$ onto $\Sigma_a$ (see figure \ref{F:z-zeta}), with and inverse mapping given respectively by
$$
z = \pm z(\zeta)\in\Gamma_a^\pm, \qquad z(\zeta) = \frac{1-\overline{a}\zeta}{|1-\overline{a}\zeta|}.
$$
Therefore, the equation for $z\in\Gamma_a^\pm$ reads in terms of $\zeta\in\Sigma_a$ as
$$
\begin{gathered}
h_{a,b}(z)=1, \quad z\in\Gamma_a^\pm
\quad \Leftrightarrow \quad
\mp \frac{1-\overline{a}\zeta}{|1-\overline{a}\zeta|} \frac{\zeta-b}{1-\overline{b}\zeta} = 1, \quad \zeta\in\Sigma_a
\\
\Leftrightarrow \quad \frac{(\zeta-b)^2}{|\zeta-b|^2} = \mp \frac{\zeta-a}{|\zeta-a|}, \quad \zeta\in\Sigma_a
\quad \Leftrightarrow \quad \frac{(\zeta-b)^2}{\zeta-a}\in\R^\mp, \quad \zeta\in\Sigma_a.
\end{gathered}
$$

The last of the above equivalent conditions states that $b$ lies on a straight line passing through $\zeta$ in the direction given by
$i\sqrt{\zeta-a}$ or $\sqrt{\zeta-a}$ respectively. The first case is equivalent to the presence of a mass point at $z(\zeta)\in\Gamma_a^+$, while
the second case means that there is a mass point at $-z(\zeta)\in\Gamma_a^-$.

In other words, any $a\in\D$ defines two orthogonal one-parameter families of straight lines (see figure \ref{F:haz}): those $b_{a,\zeta}^-$ passing
through each $\zeta\in\Sigma_a$ in the direction of $\sqrt{\zeta-a}$, and those $b_{a,\zeta}^+$ passing through each $\zeta\in\Sigma_a$ in the
orthogonal direction $i\sqrt{\zeta-a}$. The points of $\D$ swept by the family $\{b_{a,\zeta}^\pm\}_{\zeta\in\Sigma_a}$ are the values of $b$ giving
mass points for $\mu_{a,b}$ at $\Gamma_a^\pm$ respectively. Hence, the values $b\in\D$ which yield mass points for $\mu_{a,b}$ are those swept by
$\{b_{a,\zeta}^+\}_{\zeta\in\Sigma_a}\cup\{b_{a,\zeta}^-\}_{\zeta\in\Sigma_a}$. Given $b$, the number of mass points of $\mu_{a,b}$ is equal to the
number of straight lines of both families $\{b_{a,\zeta}^\pm\}_{\zeta\in\Sigma_a}$ which pass through $b$. Moreover, each line $b_{a,\zeta}^\pm$
passing through $b$ provides the corresponding mass point $z = \pm z(\zeta)$ because it crosses the unit circle at $\zeta$ and $\overline{z}$. This
follows from
$$
w\in\T \quad \Rightarrow \quad \frac{(\zeta-w)^2}{\zeta-a} = -\frac{|\zeta-w|^2}{|\zeta-a|}wz(\zeta),
$$
which implies that
$$
w\in\T\setminus\{\zeta\}, \quad \frac{(\zeta-w)^2}{\zeta-a}\in\R^\mp \quad \Leftrightarrow \quad w = \pm \overline{z(\zeta)}.
$$

\subsection{The envelopes of $\{b_{a,\zeta}^\pm\}_{\zeta\in\Sigma_a}$} \label{SS:Env}

\begin{figure}
\includegraphics[width=8cm]{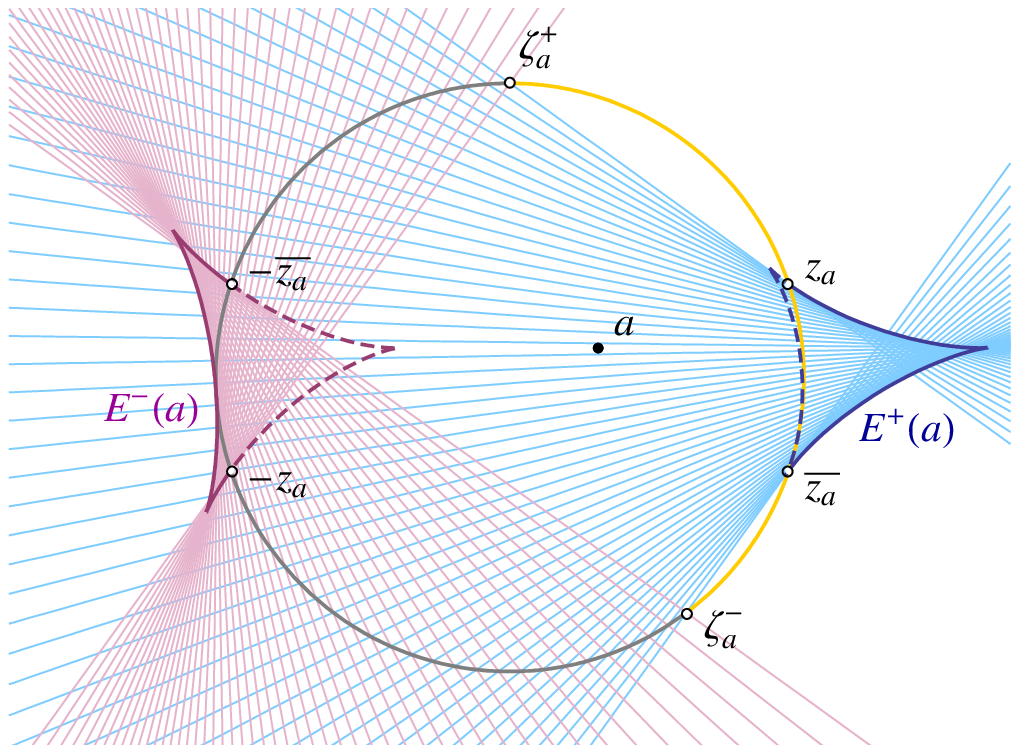}

\vspace{5pt}

\includegraphics[width=8cm]{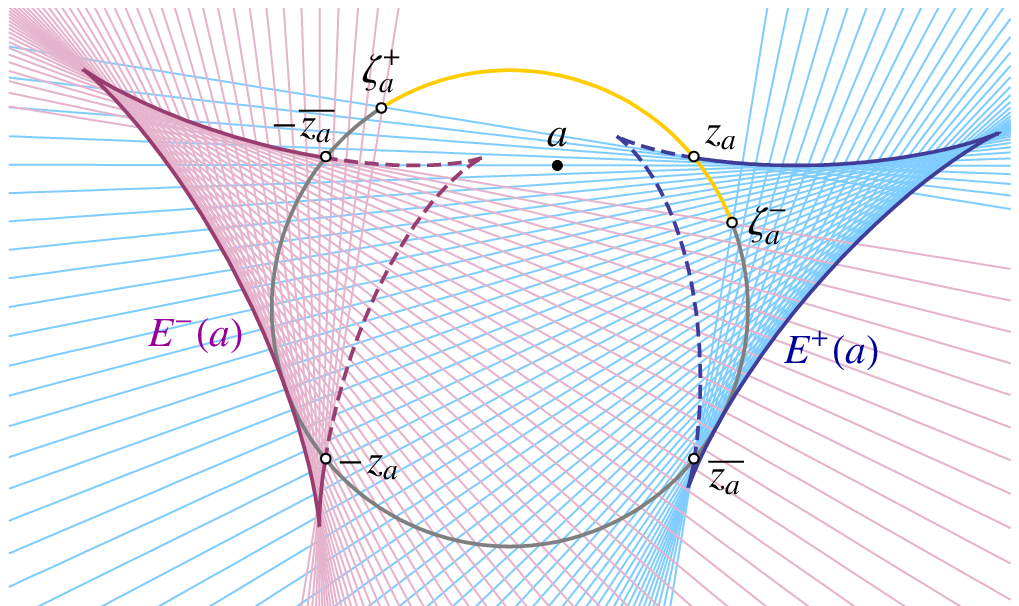}

\includegraphics[width=8cm]{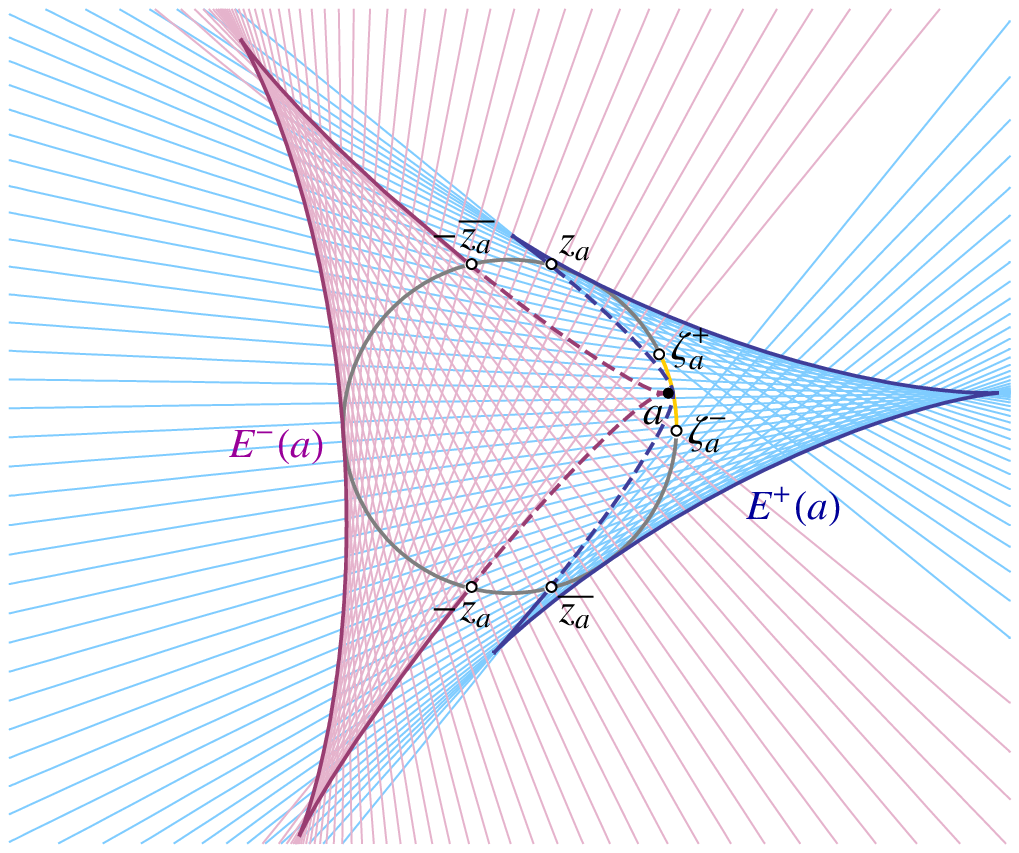}
\caption{For different choices of $a$, the families of straight lines $\{b_{a,\zeta}^+\}_{\zeta\in\Sigma_a}$ and
$\{b_{a,\zeta}^-\}_{\zeta\in\Sigma_a}$ in blue and purple color respectively. The corresponding envelopes $E_e^\pm(a)$ are the continuous curves in
the same dark color lying on the exterior of $\T$. The arc $\Sigma_a$ appears in grey color and $\T\setminus\Sigma_a$ in yellow. The dashed curves in
the interior of $\T$ are the envelopes $E_i^\pm(a)$ of the families $\{b_{a,\zeta}^\pm\}_{\zeta\in\T\setminus\Sigma_a}$ (which are not depicted
here). The tangent points to $\T$ split $E_e(a)$ into 3, 4 and 5 connected components respectively from the upper to the lower figure. The straight
lines corresponding to such components sweep different sectors of $\D$. The subset of $\D$ swept by any of these sectors is the region of values of
$b$ giving localization for the choice of $a$. When several sectors overlap, the corresponding values of $b$ yield as many mass points as overlapping
sectors cover $b$.} \label{F:haz}
\end{figure}

The envelopes of the two families $\{b_{a,\zeta}^\pm\}_{\zeta\in\Sigma_a}$ of straight lines can help us to determine the points of $\D$ swept by
them. For the computation of the envelopes it is convenient to rewrite the equations for the families in a different way. Denoting $A=\zeta-a$ and
$B=\zeta-b$,
$$
\frac{(\zeta-b)^2}{\zeta-a}\in\R^\mp \; \Leftrightarrow \; B \, \| \, \sqrt{\mp A} \; \Leftrightarrow \; B \, \bot \, \sqrt{\pm A} \; \Leftrightarrow
\; B \, \bot \, (|A| \pm A).
$$
Therefore, the equation for the family $b_{a,\zeta}^\pm$ can be written as
$$
\re[(A\pm|A|)\overline{B}]=0.
$$

Remember that $\zeta\in\Sigma_a$ is given by $\zeta=\frac{a}{|a|}e^{it}$, $\cos t <|a|$, so $A$ and $B$ can be considered functions of $t$ which
parametrizes the lines of the two families. Then, the envelope of each family is given parametrically with respect to $t$ by the equation of the
family together with its derivative with respect to $t$. This leads to the equations
\begin{equation} \label{E:env1}
\re[X_\pm \overline{B}]=0, \qquad \im[Y_\pm B + X_\pm \overline{\zeta}]=0,
\end{equation}
for the envelopes of $b_{a,\zeta}^\pm$ respectively, where
$$
X_\pm = A\pm|A|, \qquad Y_\pm = i\frac{d}{dt}\overline{X}_\pm = \overline{\zeta} \pm i\frac{\im(\overline{a}\zeta)}{|A|},
$$
only depend on $a$ and $t$, but not on $b$.

The system (\ref{E:env1}) can be solved in $B(t)$, thus in $b(t)=\zeta(t)-B(t)$, giving the envelopes $b^\pm_a(t)$ of the two families
$b_{a,\zeta}^\pm$,
\begin{equation} \label{E:env2}
b^\pm_a = \zeta + i \frac{\im(X_\pm\overline{\zeta})}{\re(X_\pm Y_\pm)}X_\pm.
\end{equation}

When we let $t\in[0,2\pi]$, then $\zeta$ runs over the whole unit circle and the two envelopes $b^\pm_a(t)$ obviously describe a closed curve because
$t$ enters in $b^\pm_a(t)$ only through $\zeta$. We will refer to
$$
E^\pm(a)=\{b_a^\pm(t):t\in[0,2\pi]\}
$$
as the full envelopes, to distinguish them from the original ones
$$
E^\pm_e(a)=\{b_a^\pm(t):\cos t<|a|\},
$$
in which $\zeta$ runs over $\Sigma_a$. The closure $\overline{E}^\pm_e(a)=\{b_a^\pm(t):\cos t\leq|a|\}$ allows $\zeta$ to run over the closed arc
$\overline\Sigma_a$, that is, it only adds to $E^\pm_e(a)$ the two limit points in $\partial E^\pm_e(a) = \{b_a^\pm(t):\cos t=|a|\}$. Apart from
being useful in some reasonings, the rest of the envelope $E^\pm_i(a)=\{b_a^\pm(t):\cos t>|a|\}$ has no interest for us because it comes from points
$\zeta\in\T\setminus\overline{\Sigma}_a$. When referring to the set of two $\pm$ envelopes we will use the notation $E(a) = E^+(a) \cup E^-(a)$,
$E_e(a) = E^+_e(a) \cup E^-_e(a)$ and so forth.

\smallskip

The following properties of the envelopes follow from (\ref{E:env2}) (see figure \ref{F:haz}):

\begin{itemize}

\item The full envelopes $E^\pm(a)$ are deformed deltoides, i.e., deformed triangles with concave curve sides joining at three cusps. Also,
$E^+(a) \cap E^-(a) = \emptyset$.

\item $E_e(a)\subset\C\setminus\D$, $E_i(a)\subset\overline{\D}$ and $\partial E_e(a)\subset\T$. Indeed,
$\partial E^\pm_e(a) = \partial\Gamma_a^\pm$. Hence, we will call $E_e(a)$ the exterior envelope, $E_i(a)$ the interior envelope and $\partial
E_e(a)$ the limit points of the envelope.

\item Two contiguous cusps lie on the closed exterior envelope $\overline{E}_e(a)$ if and only if there is a tangent point to $\T$ in the side
joining such cusps. In particular, a cusp lies on $\T$ if and only if it is a tangent point to $\T$, and this is also equivalent to stating that the
cusp is a limit point of the envelope.

\end{itemize}

Consider a given value of $a$.

If the 3 cusps of one of the two full envelopes $E^\pm(a)$ lie on the corresponding closed exterior $\overline{E}^\pm_e(a)$, then
$\overline{E}^\pm_e(a)$ becomes tangent to $\T$ at 2 points (see the lower image in figure \ref{F:haz}). In such a case the straight lines
corresponding to the points of $E^\pm_e(a)$ between the two tangent points sweep the whole unit disk $\D$ and, thus, any value $b\in\D$ gives a mass
point for $\mu_{a,b}$ .

In general, if $T(a)$ are the tangent points to $\T$ of the exterior envelope $E_e(a)$, then $E_e(a) \setminus T(a)$ splits into connected components
with a single cusp inside each component (see figure \ref{F:haz}). The straight lines corresponding to a given connected component do not intersect
in $\D$, so a value $b\in\D$ yields as many mass points as connected components have a straight line passing through $b$. The region of $\D$ swept by
the straight lines corresponding to a connected component is the open sector limited by the tangents to the two extreme points of the connected
component. If one of the extreme points of a connected components is tangent to $\T$ and the other one is a limit point of the envelope, the only
straight line which limits the related sector of $\D$ is that one tangent to the envelope at the limit point in question. If both extremes of the
connected component are limit points, then the related sector is limited by the two straight lines associated with such limit points. If a limit
point is simultaneously a tangent point to $\T$ of the closed exterior envelope $\overline{E}_e(a)$, then the corresponding straight line does not
provide any restriction for the related sector of values $b\in\D$.

In other words, each connected component of $E_e(a) \setminus T(a)$ has an associated sector of values $b\in\D$ to which it gives a mass point for
$\mu_{a,b}$. The elements deciding the sectors are the tangent points to $\T$ of $\overline{E}_e(a)$ and the straight lines corresponding to the
limit points of the envelope, which we will call the limit lines of the envelope. Let us have a closer look at such elements.

\subsubsection{Limit lines of the envelopes} \label{SSS:LimitLines}

The limit points $\partial E_e(a)$ of the envelopes correspond to setting $\zeta=\zeta_a^\pm=\frac{a}{|a|}(|a| \pm i\rho_a)$. The related straight
lines $b^-_{a,\zeta_a^\pm}$, $b^+_{a,\zeta_a^\pm}$ pass through such points in the directions $\sqrt{\zeta-a}$ and $i\sqrt{\zeta-a}$ respectively,
which are the orthogonal directions $\sqrt{\pm ia}$ and $\sqrt{\mp ia}$ respectively (notice that $\sqrt{\pm ia}$ points in the same direction as $1
\pm i\frac{a}{|a|}$).

More precisely, the limit lines of $E^+_e(a)$ are the orthogonal lines $b_{a,\zeta_a^\pm}^+$ and the limit lines of $E^-_e(a)$ are the orthogonal
lines $b_{a,\zeta_a^\mp}^-$, which are parallel to the previous ones. Thus the limit lines of the envelope are two pairs of parallel lines which are
orthogonal between themselves.

On the other hand, we know that $b_{a,\zeta}^\pm$ crosses the unit circle at $\zeta$ and $\pm\overline{z(\zeta)}$. Therefore, the limit lines are:
\begin{itemize}
\item $b_{a,\zeta_a^+}^\pm$, joining $\zeta_a^+$ to $\pm \overline{z(\zeta_a^+)} = \pm z_a$ respectively.
\item $b_{a,\zeta_a^-}^\pm$, joining $\zeta_a^-$ to $\pm \overline{z(\zeta_a^-)} = \pm\overline{z}_a$ respectively.
\end{itemize}

\subsubsection{Tangent points to $\T$ of the full envelopes} \label{SSS:TangentFull}

If the full envelope $E^\pm(a)$ is tangent to $\T$ at a point $\zeta$, then $b_{a,\zeta}^\pm$ passes through $\zeta$ in the direction orthogonal to
$\zeta$. This means that $\zeta \, \bot \, i\sqrt{\zeta-a}$ or $\zeta \, \bot \, \sqrt{\zeta-a}$, which is equivalent to $\zeta \, \parallel \,
\sqrt{\zeta-a}$ or $\zeta \, \parallel \, i\sqrt{\zeta-a}$. Thus, the tangency condition can be expressed as
\begin{equation} \label{E:azeta}
\frac{\zeta^2}{\zeta-a}\in\R.
\end{equation}
This condition is satisfied by the values of $a$ lying on a straight line $a_\zeta$ passing through $\zeta$ in the direction $\zeta^2$. The line
$a_\zeta$ picks up the values of $a$ with $E(a)$ having $\zeta$ as a common tangent point to $\T$.

The envelope of the family of lines $\{a_\zeta\}_{\zeta\in\T}$ will help us in counting the number of tangent points to $\T$ of $E(a)$ for any value
of $a$. Setting $\zeta=e^{it}$, $t\in[0,2\pi]$, this envelope is given by equation (\ref{E:azeta}) together with its derivative with respect to $t$,
$$
\im(\zeta-\overline{a}\zeta^2)=0, \qquad \re(\zeta-2\overline{a}\zeta^2)=0.
$$
The solution
$$
\ts a(t) = (\re(\overline{a}\zeta^2)-i\im(\overline{a}\zeta^2))\zeta^2 = (\frac{1}{2}\re\zeta - i\im\zeta) \zeta^2
= \frac{3}{4}e^{it}-\frac{1}{4}e^{3it}
$$
is the envelope of $\{a_\zeta\}_{\zeta\in\T}$, which is an epicycloid inscribed in the unit circle with two cusps at the points $\pm\frac{1}{2}$
(see figure \ref{F:cacahuete}).

\begin{figure}
\includegraphics[width=6.5cm]{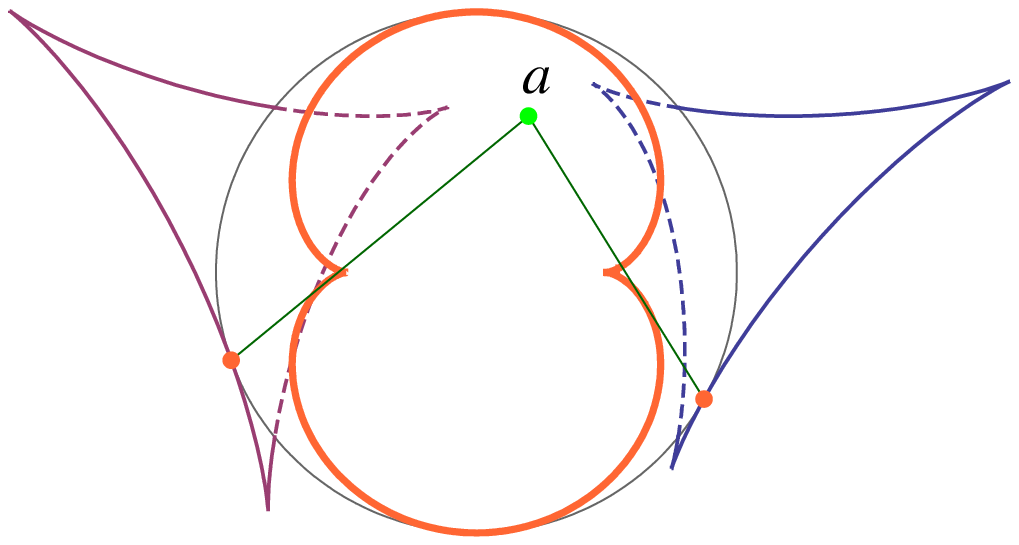} \includegraphics[width=6cm]{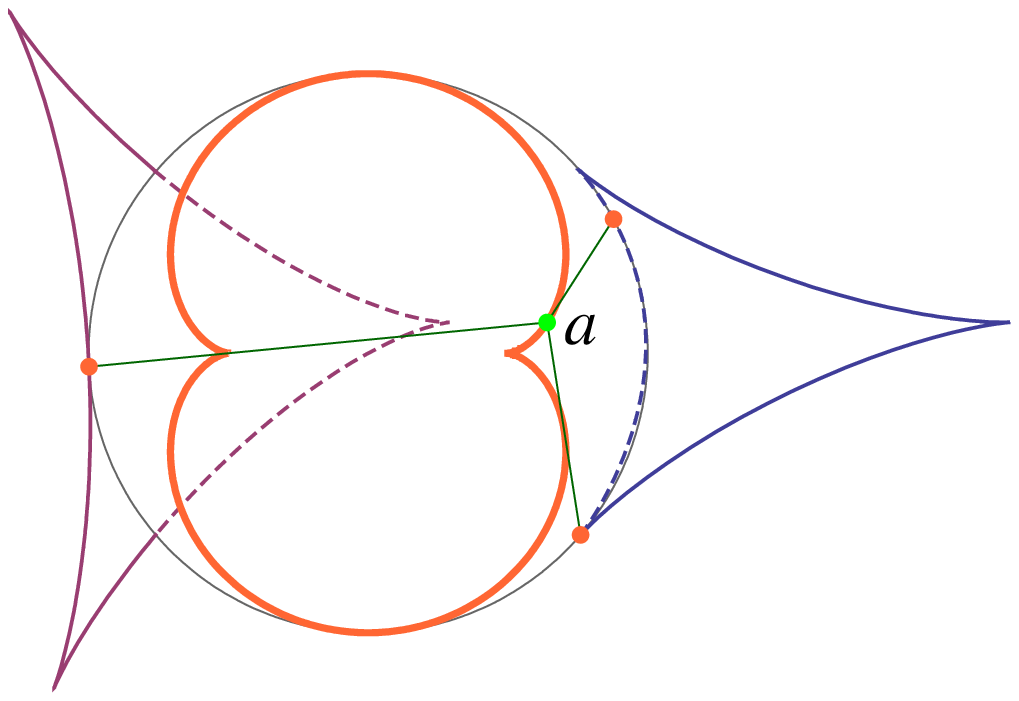} \includegraphics[width=7cm]{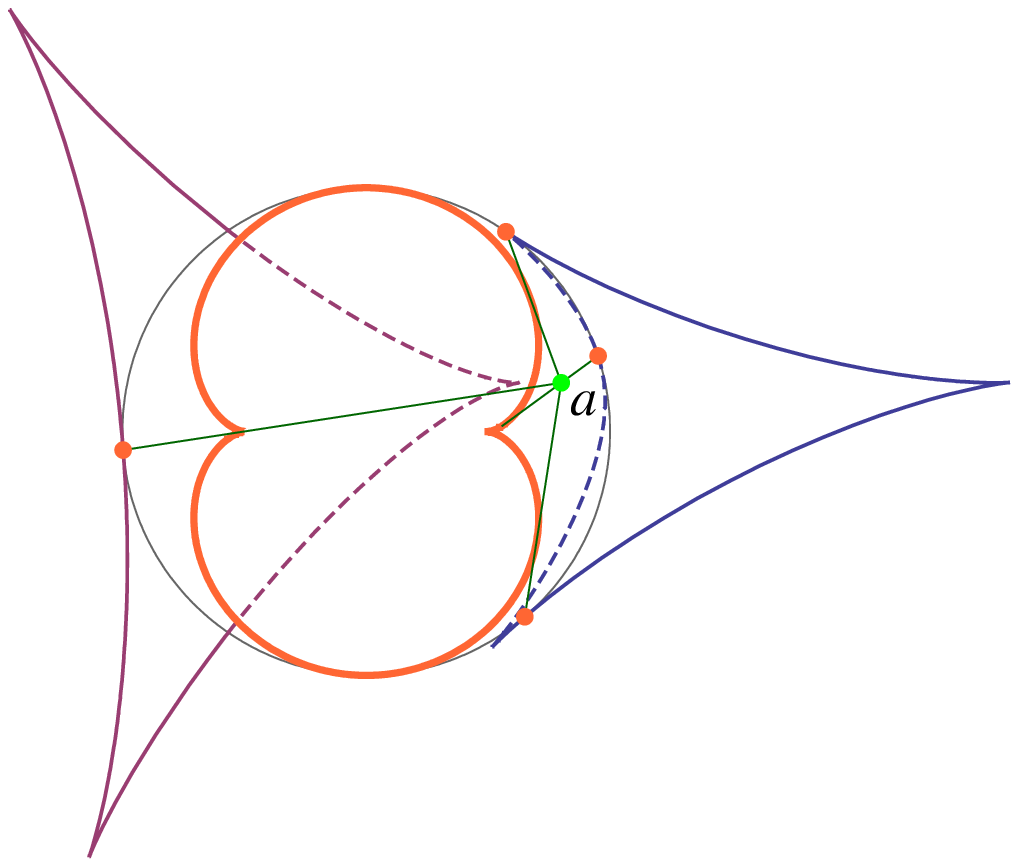}
\caption{The curve in orange is the epicycloid which delimits the values of $a$ with a different number of tangent points of $E(a)$ to $\T$. The
cusps are at $\pm\frac{1}{2}$. The tangents to the epicycloid passing through a given $a$ cross the unit circle exactly at the tangent points of
$E(a)$ to $\T$. Due to the shape of the epicycloid there are three possibilities according to the relative position of $a$ and the epicycloid. These
possibilities are shown in the figures above, where the orange points are the tangencies between $E(a)$ and $\T$.} \label{F:cacahuete}
\end{figure}

Given $a\in\D$, the number of tangents to the epycicloid passing through $a$ counts the number of points in $E(a)$ which are tangent to $\T$.
Therefore, due to the shape of the epycicloid we have the following possibilities for $a\in\D$ (see figure \ref{F:cacahuete}):
\begin{itemize}
\item[($\mathbf{T}^2$)] If $a$ lies inside the epycicloid, $E(a)$ has 2 tangent points to $\T$.
\item[($\mathbf{T}^3$)] If $a$ lies on the epycicloid, $E(a)$ has 3 tangent points to $\T$, except at the cusps $a=\pm\frac{1}{2}$,
where $E(a)$ has 2 tangent points to $\T$.
\item[($\mathbf{T}^4$)] If $a$ lies outside the epycicloid, $E(a)$ has 4 tangent points to $\T$.
\end{itemize}

\subsubsection{Tangent points to $\T$ of the closed exterior envelopes} \label{SSS:TangentExt}

To complete the picture of the mass points of $\mu_{a,b}$ we need to know the tangent points to $\T$ of $\overline{E}_e(a)$.

We know that every line $a_\zeta$, $\zeta\in\T$, includes all the values $a\in\D$ with $\zeta$ as a common tangent point of $E(a)$ to $\T$. However,
$\zeta$ lies on $E_e(a)$ if and only if $\zeta\in\Sigma_a$, i.e., $\re(\overline{a}\zeta)<|a|^2$. Bearing in mind that the parametric equation of
$a_\zeta$ is
$$
a_\zeta(\lambda) = \zeta + \lambda\zeta^2, \qquad \lambda\in\R,
$$
we find that the curves separating the points of the lines $\{a_\zeta(\lambda)\}_{\zeta\in\T}$ lying on the exterior and the interior envelopes are
given by
$$
\re(\overline{a_\zeta(\lambda)}\zeta)=|a_\zeta(\lambda)|^2 \; \Leftrightarrow \;
\begin{cases} \lambda=0, \\ \lambda=-\re\zeta. \end{cases}
$$
The curve corresponding to $\lambda=0$ is the unit circle, so it does not impose any limitation to the values $a\in\D$. Writing $\zeta=e^{it}$, the
remaining curve is given by $\lambda=-\cos t$, so it has the form
$$
\hat{a}(t) = e^{it} - e^{2it}\cos t = \frac{1}{2}e^{it} - \frac{1}{2}e^{3it},
$$
which is an epitrochoid inscribed on the unit circle with two loops and two self-intersections at the points $\pm\frac{1}{\sqrt{2}}$ (see figure
\ref{F:CC-CCC}).

The set $\hat{A}=\{\hat{a}(t):t\in[0,2\pi]\}$ is formed by all the values $a\in\D$ with $\overline{E}_e(a)$ tangent to $\T$ at some point of
$\partial E_e(a)$, i.e., at some limit point. Given $a\in\hat{A}$, the number of tangencies at the limit points $\partial E_e(a)$ is equal to the
number of lines $a_\zeta$, $\zeta\in\T$, passing through $a$, i.e., the number of tangent lines to the epitrochoid at the given point $a$. Therefore,
all the values of $a$ lying on the epitrochoid have a single limit point where $\overline{E}_e(a)$ is tangent to $\T$, except the self-intersections
$a=\pm\frac{1}{\sqrt{2}}$, in which case $\overline{E}_e(a)$ is tangent to $\T$ at two limit points (see figure \ref{F:CC-CCC} and the right column
of figure \ref{F:Z+ab}).

For any value $a\in\D\setminus\hat{A}$, $\overline{E}_e(a)$ has no tangent point to $\T$ on $\partial E_e(a)$, thus the tangent points to $\T$ of
$\overline{E}_e(a)$ must lie on $E_e(a)$. The set $\D\setminus\hat{A}$ splits into 6 open connected regions. By continuity, the number of points of
$\overline{E}_e(a)$ which are tangent to $\T$ must be constant on each of these connected regions. Also, continuity arguments together with the fact
that the envelopes $E^+(a)$ and $E^-(a)$ do not intersect, ensure that the distribution of tangent points between $\overline{E}^+_e(a)$ and
$\overline{E}^-_e(a)$ must be the same inside each of the above connected regions. Hence, the picture for the number of tangent points to $\T$ of the
closed exterior envelopes $\overline{E}^\pm_e(a)$ can be completed by simply calculating such number for one value of $a$ in each connected region.
This gives the following results (see figures \ref{F:haz}, \ref{F:CC-CCC} and the left column of figure \ref{F:Z+ab}):
\begin{itemize}
\item[($\mathbf{T}_e^{0+1}$)] If $a$ lies inside the loops of the epitrochoid, $\overline{E}_e(a)$ has 1 tangent point to $\T$
and it lies on one of the open envelopes $E^\pm_e(a)$.
\item[($\mathbf{T}_e^{1+1}$)] If $a$ lies inside the epitrochoid but outside the loops, $\overline{E}_e(a)$ has 2 tangent points to $\T$,
one in each open envelope $E^\pm_e(a)$.
\item[($\mathbf{T}_e^{1+2}$)] If $a$ lies outside the epitrochoid, $\overline{E}_e(a)$ has 3 tangent points to $\T$,
two of them in one of the open envelopes $E^\pm_e(a)$ and the other one in the remaining open envelope.
\end{itemize}

When $a\in\hat{A}$ we know that $\overline{E}_e(a)$ has one or two limit points tangent to $\T$, depending whether $a\neq\pm\frac{1}{\sqrt{2}}$ or
$a=\pm\frac{1}{\sqrt{2}}$. The complete picture for $a$ lying on the epitrochoid can be inferred by continuity from the previous results (see figure
\ref{F:CC-CCC} and the right column of figure \ref{F:Z+ab}):
\begin{itemize}
\item[$(\mathbf{T}_e^{1+\overline{1}})$] If $a\neq\pm\frac{1}{\sqrt{2}}$ lies on the loops of the epitrochoid, $\overline{E}_e(a)$ has 2 tangent points
to $\T$,
one in an open envelope $E^\pm_e(a)$ while the other one is a limit point of the remaining envelope.
\item[($\mathbf{T}_e^{1+\overline{2}}$)] If $a=\pm\frac{1}{\sqrt{2}}$, $\overline{E}_e(a)$ has 3 tangent points to $\T$,
one in an open envelope $E^\pm_e(a)$ while the other two are the limit points of the remaining envelope.
\item[($\mathbf{T}_e^{1+1\overline{1}}$)] If $a$ lies on the epitrochoid but outside the loops, $\overline{E}_e(a)$ has 3 tangent points to $\T$,
two of them in different open envelopes $E^\pm_e(a)$ while the other one is a limit point.
\end{itemize}

\subsection{Localization pictures on $\Z_+$: dependence on $a$ and $b$} \label{SS:Loc-ab-Z+}

The previous section shows that $\mathbf{T}_e^2 \equiv (\mathbf{T}_e^{1+2} \text{ or } \mathbf{T}_e^{1+\overline{2}} \text{ or }
\mathbf{T}_e^{1+1\overline{1}})$ picks up the values $a\in\D$ such that one of the closed exterior envelopes $\overline{E}^\pm_e(a)$ has 2 tangent
points to $\T$. The sector associated with the connected component of $E_e(a) \setminus T(a)$ ending in such tangent points fulfills $\D$ (see the
cases $a_3$, $a_5$ and $a_6$ in figures \ref{F:CC-CCC} and \ref{F:Z+ab}). Therefore, the region covered by $\mathbf{T}^2_e$, which is the closed
exterior of the epitrochoid $\hat{A}$, gives all the points $a$ with $\mu_{a,b}$ having mass points for any $b\in\D$.

Consider now a value $a\in\D$ satisfying $\mathbf{T}_e^{1+1}$. In this case $E_e(a) \setminus T(a)$ has 4 connected components, each of them ending
in a tangent point to $\T$ and a limit point. The associated sectors of $\D$ are all included in one of them (see the case $a_2$ in figures
\ref{F:CC-CCC} and \ref{F:Z+ab}). Therefore, such a dominant sector provides the values $b\in\D$ with $\mu_{a,b}$ having mass points. In the case
$\mathbf{T}_e^{1+\overline{1}}$ we have a similar conclusion, although only 3 connected components appear in $E_e(a) \setminus T(a)$ (see the case
$a_4$ in figures \ref{F:CC-CCC} and \ref{F:Z+ab}). Summarizing, the condition $\mathbf{T}_e^1 \equiv (\mathbf{T}_e^{1+1} \text{ or }
\mathbf{T}_e^{1+\overline{1}})$, which means that $a$ lies on the interior of $\hat{A}$ but not on the interior of the loops, ensures that
localization holds for all the values of $b$ bounded by a single limit line.

Finally, assume $\mathbf{T}_e^{0+1}$ for $a\in\D$. Then, there are 3 connected components in $E_e(a) \setminus T(a)$, whose related sectors are
included in that one bounded by two limit lines (see the case $a_1$ in figures \ref{F:CC-CCC} and \ref{F:Z+ab}). The points of this dominant sector
are the values of $b$ giving mass points for $\mu_{a,b}$.

\begin{figure}
\includegraphics[width=5.5cm]{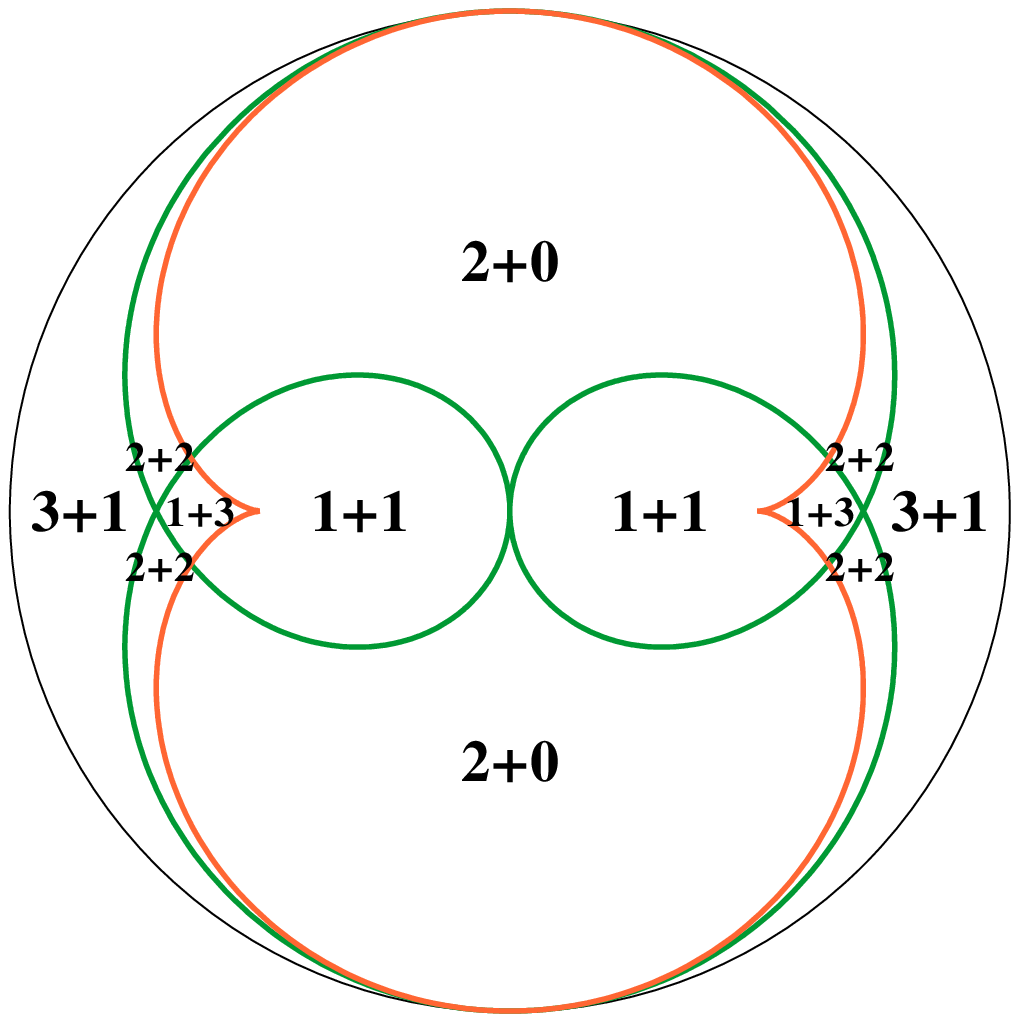} \hspace{.5cm} \includegraphics[width=5.5cm]{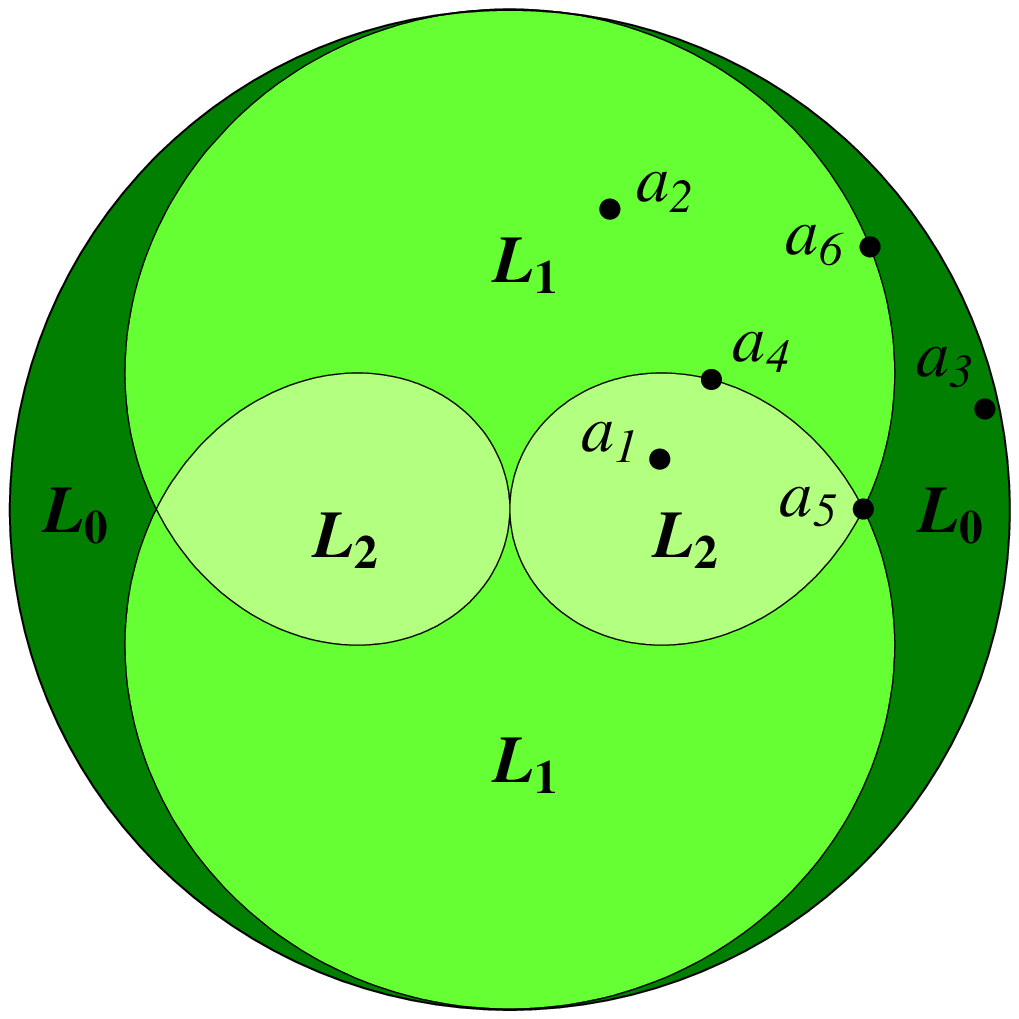}
\caption{The left figure represents the epicycloid (in orange) and the epitrochoid $\hat{A}$ (in green). The self-intersections of $\hat{A}$ are at
$\pm\frac{1}{\sqrt{2}}$. For all the values of $a$ in a given region enclosed by these cycloids the number of tangent points to $\T$ is constant for
both $E_e(a)$ (left number) and $E_i(a)$ (right number). Each crossing of the epicycloid from the exterior to the interior reduces the total number
of tangencies to $\T$ of $E_i(a)$ in 2, keeping invariant the number of tangencies of $E_e(a)$. On the other hand, any crossing of $\hat{A}$ from the
exterior to the interior changes a tangency of $E_e(a)$ into a tangency of $E_i(a)$. The number of tangencies to $\T$ of $E_e(a)$ is determined
exclusively by $\hat{A}$. This number, going from 1 to 3 as $a$ runs over the unit disk, is indicated by the darkness of the green color in the right
figure.} \label{F:CC-CCC}
\end{figure}

\begin{figure}
\hspace{-1cm} \includegraphics[width=5.1cm]{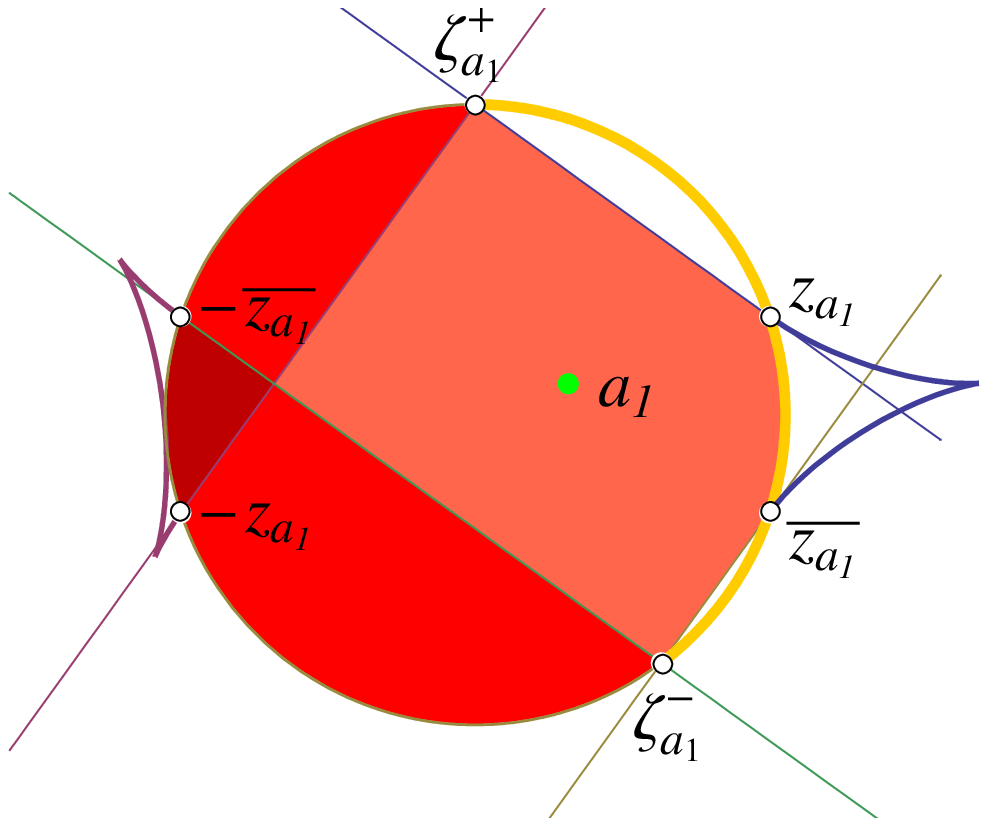} \hspace{.3cm} \includegraphics[width=5.6cm]{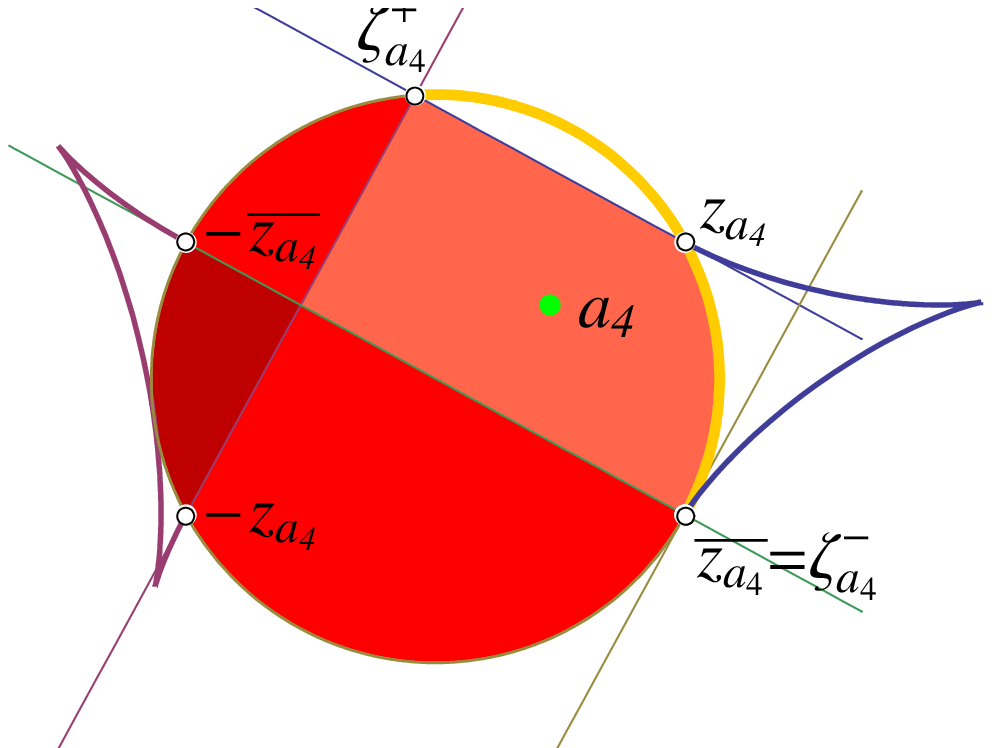}

\vspace{-.3cm}

\hspace{.4cm} \includegraphics[width=6.1cm]{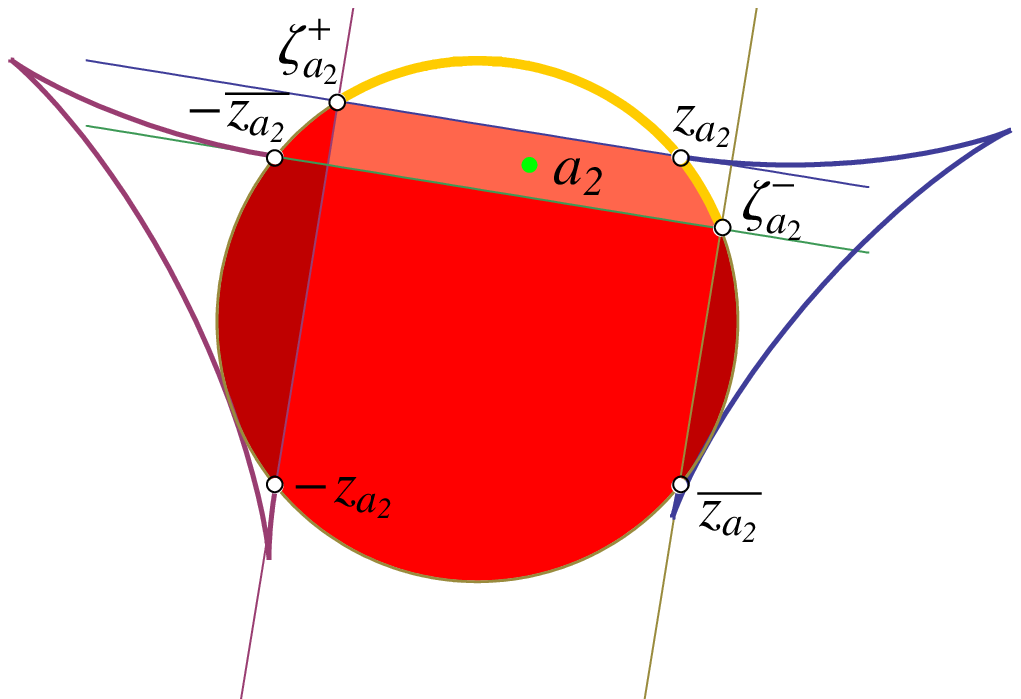} \includegraphics[width=7.3cm]{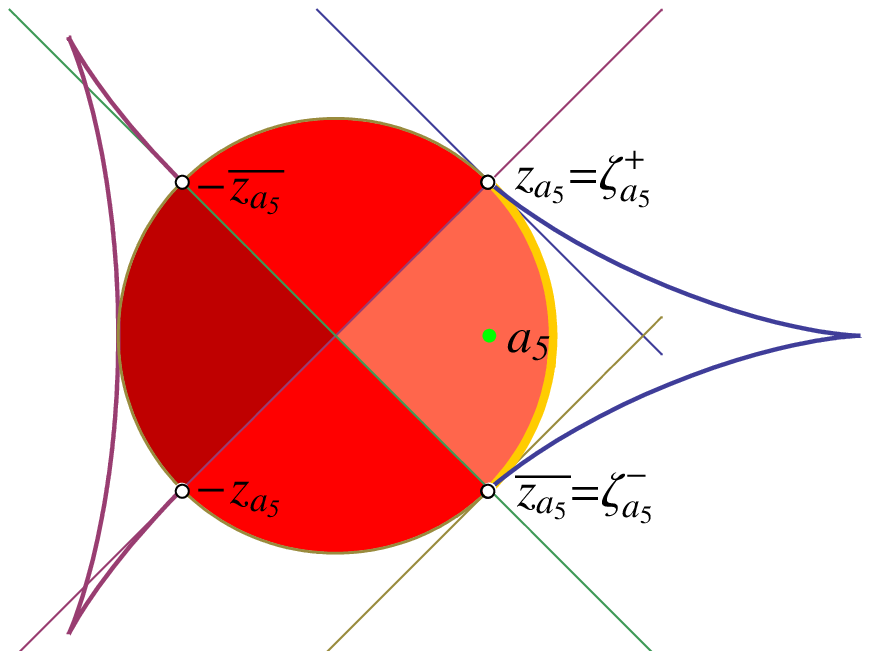}

\vspace{-.2cm}

\hspace{.7cm} \includegraphics[width=7.3cm]{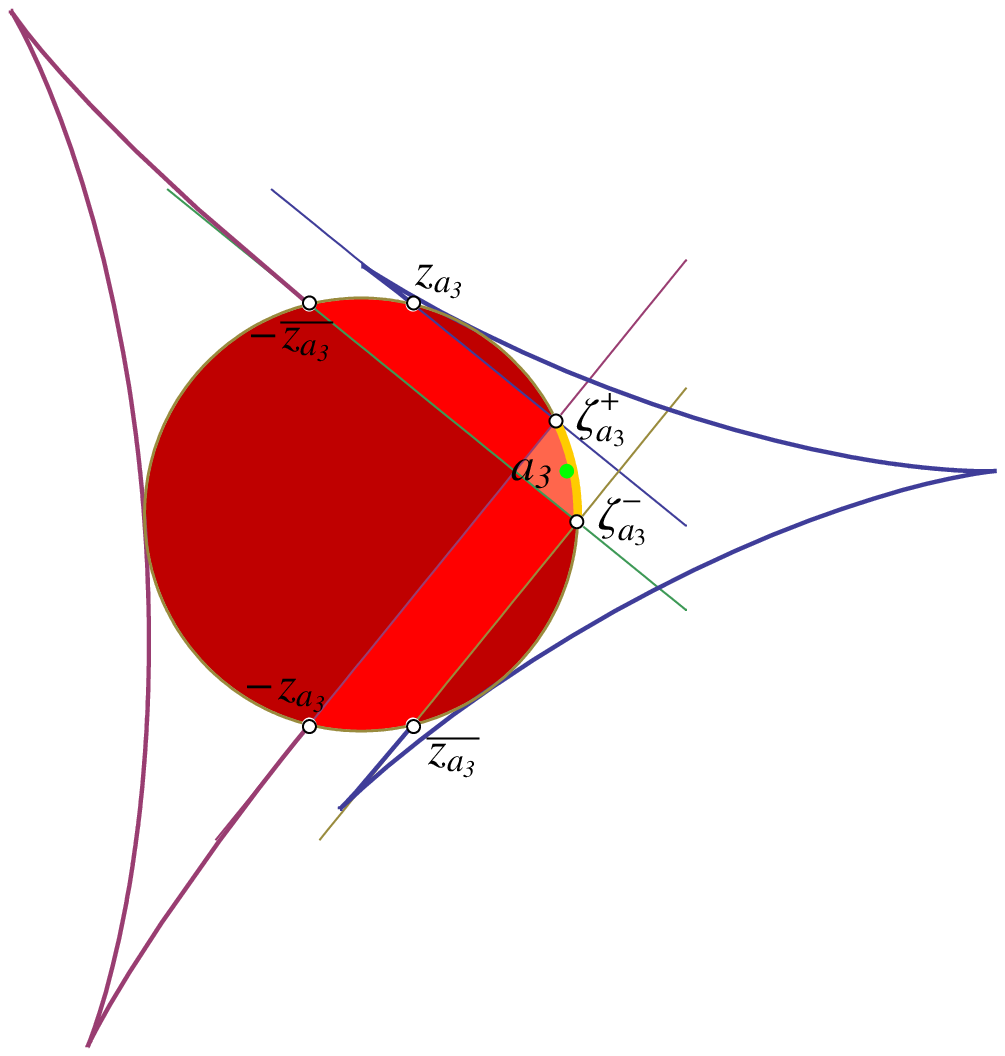} \hspace{-1.1cm} \includegraphics[width=7.6cm]{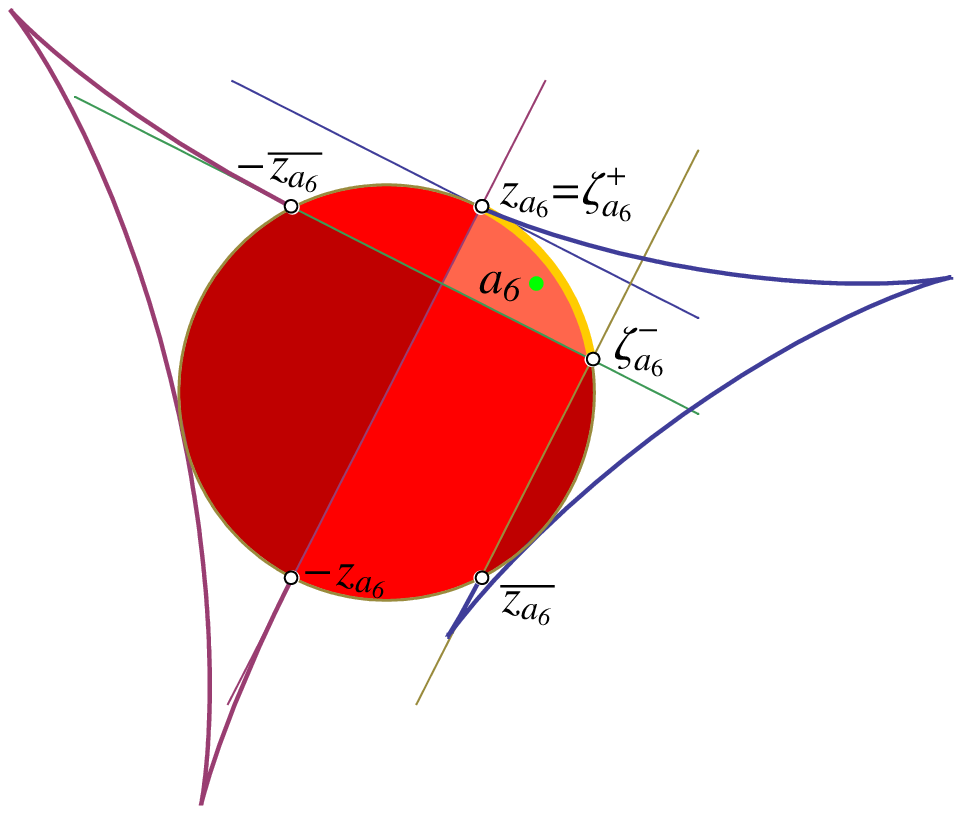}

\caption{{\bf Localization for one defect on $\Z_+$ (from $a$ to $b$).} In red the values of $b$ giving localization for each of the values of $a$ in
the right hand side of figure \ref{F:CC-CCC}. The values of $a$ in the first column are those already represented in figure \ref{F:haz}, and the
darkness of the color shows the number of overlapping sectors swept by the straight lines corresponding to the connected components of $E_e(a)
\setminus T(a)$, so it indicates the number of mass points (running from 1 to 3) for the related value of $b$. The right column represents similar
figures for different borderline situations corresponding to values of $a$ lying on the epitrochoid $\hat{A}$. The value $a_1$ leaves two
disconnected bands for $b$ without localization (condition $\mathbf{L}_2$). For $a_2$ and $a_4$ a unique band of values of $b$ is free of
localization (condition $\mathbf{L}_1$). The values $a_3$, $a_5$ and $a_6$ yield localization for any $b$ (condition $\mathbf{L}_0$). Given $a$, the
bands of $b$ which are localization free are bounded by the limit lines crossing the arc $\T\setminus\Sigma_a$ (in yellow).} \label{F:Z+ab}
\end{figure}

This provides the following localization criteria, which are the analogue of ($\mathbf{A}$) and ($\mathbf{B}_\pm$) for the case of $\Z_+$ at the end
of section \ref{S:Loc-defect-Z} (see figures \ref{F:CC-CCC} and \ref{F:Z+ab}): localization holds for the values $b\in\D$ bounded by
\begin{itemize}
\item[($\mathbf{L}_0$)] No limit line $\kern5pt \Leftrightarrow \kern5pt a$ lies on the closed exterior of $\hat{A}$.
\item[($\mathbf{L}_1$)] One limit line $\kern5pt \Leftrightarrow \kern5pt a$ lies on the interior of $\hat{A}$,
but not on the interior of the loops.
\item[($\mathbf{L}_2$)] Two limit lines $\kern5pt \Leftrightarrow \kern5pt a$ lies on the interior of the loops of $\hat{A}$.
\end{itemize}

An independent argument proves that, like in the case of $\Z_+$, for any $a\in\D$, there exists localization for all the values $b$ lying on the open
set $S(a)$ defined in (\ref{E:S}), limited by the arc $\Sigma_a$ and the straight line joining $\zeta_a^+$ and $\zeta_a^-$ (see figure
\ref{F:z-zeta}). The reason for this is that, when $\zeta$ runs over $\overline{\Sigma}_a$, the phase of $\zeta-a$ performs a rotation of an angle
$\pi$, so the orthogonal straight lines $b_{a,\zeta}^\pm$ rotate by an angle $\pi/2$. Then, geometric arguments show that the pair of orthogonal
families $\{b_{a,\zeta}^\pm\}_{\zeta\in\Sigma_a}$ sweep a region of $\D$ limited by at most two of the limit lines and that this region includes the
set $S(a)$.

As a consequence, there are at most two limit lines crossing the open arc $\T\setminus\overline{\Sigma}_a$, and these limit lines define in any case
the sector of values $b\in\D$ giving localization. Hence, the number of limit lines crossing $\T\setminus\overline{\Sigma}_a$ is $k$ in the case
$\mathbf{L}_k$ (see figures \ref{F:CC-CCC} and \ref{F:Z+ab}). In any of the cases $\mathbf{L}_k$ the measure $\mu_{a,b}$ can have 1, 2 or 3 mass
points, but only in the cases $\mathbf{L}_1$ and $\mathbf{L}_2$ it can have no mass points.

In the borderline case $a\in\hat{A}$ the envelope $E_e(a)$ is tangent to $\T$ at a limit point, so that a limit line is tangent to $\T$ at such a
point. Then $\partial\Sigma_a$ and $\partial\Gamma_a$ must have in common such a limit point (see figures \ref{F:CC-CCC} and \ref{F:Z+ab}).

When $a$ is imaginary the limit lines are parallel to the real and imaginary axes because $\zeta_a^\pm-a\in\R$, and the two limit lines which are
parallel to the real axis coincide because $\im\zeta_a^+=\im\zeta_a^-$. In such a case the region of values of $b$ giving localization becomes
exactly $S(a)$. Therefore, as in the case of $\Z$, among the values of $a$ with the same modulus the imaginary ones provide the largest region for
values of $b$ without localization.

\section{Asymptotic return probabilities: one defect on $\Z_+$} \label{S:Ret-defect-Z+}

Like in the case of $\Z$, we can perform the computation of the asymptotics of $p_{\alpha,\beta}^{(k)}(n)$ using (\ref{E:ARP}) and the canonical
representative $d\hat{\mu}=d\mu_{a,b}$ instead of the measure $d\mu(z)=d\hat{\mu}(e^{-i\vartheta}z)$ of the QW. From (\ref{E:psi}) and
(\ref{E:rot-Z+}) we find that the corresponding functions for the state $|\Psi_{\alpha,\beta}^{(k)}\>$ are related by
$\bs{\psi}_{\alpha,\beta}^{(k)}(z) = \hat{\bs{\psi}}_{\hat\alpha,\hat\beta}^{(k)}(e^{-i\vartheta}z)$ with $\hat\alpha=\hat\lambda_{2k}\alpha$ and
$\hat\beta=\hat\lambda_{2k+1}\beta$. Hence, $\bs{\psi}_{1,0}^{(k)}(z) = \hat\lambda_{2k} \hat{\bs{\psi}}_{1,0}^{(k)}(e^{-i\vartheta}z)$,
$\bs{\psi}_{0,1}^{(k)}(z) = \hat\lambda_{2k+1} \hat{\bs{\psi}}_{0,1}^{(k)}(e^{-i\vartheta}z)$ and (\ref{E:ARP}) can be expressed as
\begin{equation} \label{E:ARP-hat2}
\kern-5pt p_{\alpha,\beta}^{(k)}(n) \underset{n}{\sim} \text{\SMALL
$\left|\sum_{z\in\T} z^n\hat{\bs{\psi}}_{\hat\alpha,\hat\beta}^{(k)}(z)\hat{\mu}(\{z\})\overline{\hat{\bs{\psi}}_{1,0}^{(k)}(z)}\right|^2 +
\left|\sum_{z\in\T} z^n\hat{\bs{\psi}}_{\hat\alpha,\hat\beta}^{(k)}(z)\hat{\mu}(\{z\})\overline{\hat{\bs{\psi}}_{0,1}^{(k)}(z)}\right|^2$}.
\end{equation}
Again, we will omit the hat on $\hat\alpha,\hat\beta$ making the substitution $\alpha,\beta\to\hat\alpha,\hat\beta$ at the end of the calculations.

\subsection{Masses of $\hat{\mu}=\mu_{a,b}$} \label{SS:masses-Z+}

Any mass point of $\mu_{a,b}$ has the form
\begin{equation} \label{E:+-z-gen}
z_0 = \pm\frac{1-\overline{a}\zeta_0}{|1-\overline{a}\zeta_0|}\in\Gamma_a^\pm,
\qquad \lambda_0=\mp\frac{(\zeta_0-b)^2}{\zeta_0-a}>0,
\qquad \zeta_0\in\Sigma_a.
\end{equation}
Since $h_{a,b}(z_0)=z_0f_{a,b}(z_0)=1$, the corresponding mass is given by
\begin{equation} \label{E:mu-z0-Z+}
\hat{\mu}(\{z_0\}) = \frac{1}{z_0h'_{a,b}(z_0)} = \frac{1}{1+z_0^2f'_{a,b}(z_0)}.
\end{equation}

Performing the change of variables $\zeta(z)=-z^2f_a(z)$ we obtain
\begin{equation} \label{E:der-fab}
f_{a,b} = -\frac{\zeta-b}{1-\overline{b}\zeta},
\qquad
f'_{a,b} = -\zeta' \frac{\rho_b^2}{(1-\overline{b}\zeta)^2}.
\end{equation}
The expression of $\zeta'(z_0)$ remains as in (\ref{E:der-zeta}) for $z_0\in\Gamma_a^+$, but has opposite sign for $z_0\in\Gamma_a^-$. Therefore,
$$
f'_{a,b}(z_0) = \mp
\frac{\rho_b^2|1-\overline{a}\zeta_0|}{|a|^2-\re(\overline{a}\zeta_0)}\frac{1-a\overline{\zeta}_0}{(1-\overline{b}\zeta_0)^2}\zeta_0, \qquad
z_0\in\Gamma_a^\pm,
$$
which finally yields
\begin{equation} \label{E:m-z0-Z+}
\begin{aligned}
\hat{\mu}(\{z_0\}) & = \frac{1}{1 + \ds \frac{\rho_b^2|1-\overline{a}\zeta_0|}{|a|^2-\re(\overline{a}\zeta_0)}\frac{1}{\lambda_0}} =
\frac{1}{1 + \ds \frac{|\zeta_0-a|^2}{|a|^2-\re(\overline{a}\zeta_0)} \frac{\rho_b^2}{|\zeta_0-b|^2}}
\\
& = \frac{1}{1 + \ds 2 \frac{\frac{\rho_b^2}{|\zeta_0-b|^2}}{1-\frac{\rho_a^2}{|\zeta_0-a|^2}}}.
\end{aligned}
\end{equation}

\subsection{Asymptotics of $p_{\alpha,\beta}^{(0)}(n)$ on $\Z_+$} \label{SS:ARP-0-Z+}

We need the function $\hat{\bs{\psi}}_{\alpha,\beta}^{(0)}=\alpha\hat{X}_0+\beta\hat{X}_1$, where $\hat{X}_0(z)=1$ and
$\hat{X}_1(z)=(z^{-1}-b)/\rho_b$ follows from the the first two equations of $\hat{\cC}\hat{X}(z)=z\hat{X}(z)$. Hence,
$$
\hat{\bs{\psi}}_{\alpha,\beta}^{(0)}(z)=\alpha+\frac{\beta}{\rho_b}(z^{-1}-b).
$$

Any mass point $z_0$ satisfies $z_0f_{a,b}(z_0)=1$, which using (\ref{E:der-fab}) gives
$$
\overline{z_0}-b = -\frac{\rho_b^2}{\overline{\zeta_0}-\overline{b}}.
$$
Therefore,
$$
\begin{aligned}
& \hat{\bs{\psi}}_{\alpha,\beta}^{(0)}(z_0) \hat{\mu}(\{z_0\}) \overline{\hat{\bs{\psi}}_{1,0}^{(0)}(z_0)} =
\hat{\mu}(\{z_0\}) \left(\alpha-\beta\frac{\rho_b}{\overline{\zeta_0}-\overline{b}}\right),
\\
& \hat{\bs{\psi}}_{\alpha,\beta}^{(0)}(z_0) \hat{\mu}(\{z_0\}) \overline{\hat{\bs{\psi}}_{0,1}^{(0)}(z_0)} =
- \hat{\mu}(\{z_0\}) \frac{\rho_b}{\zeta_0-b} \left(\alpha-\beta\frac{\rho_b}{\overline{\zeta_0}-\overline{b}}\right).
\end{aligned}
$$

The cases with more than one mass point give in general a non convergent return probability $p^{(0)}_{\alpha,\beta}(n)$ due to the different factors
$z^n$ appearing in (\ref{E:ARP-hat2}). Nevertheless, the case with only one mass point $z_0$ yields
$$
\begin{aligned}
\lim_{n\to\infty} p^{(0)}_{\alpha,\beta}(n) & =
\hat{\mu}(\{z_0\})^2 \left(1+\frac{\rho_b^2}{|\zeta_0-b|^2}\right) \left|\hat\alpha-\hat\beta\frac{\rho_b}{\overline{\zeta_0}-\overline{b}}\right|^2
\\
& = \frac{1+\frac{\rho_b^2}{|\zeta_0-b|^2}}{\left(1 + 2 \frac{\frac{\rho_b^2}{|\zeta_0-b|^2}}{1-\frac{\rho_a^2}{|\zeta_0-a|^2}}\right)^2}
\left|\hat\alpha-\hat\beta\frac{\rho_b}{\overline{\zeta_0}-\overline{b}}\right|^2.
\end{aligned}
$$
Then, all the states at the origin exhibit localization except that one defined by
$$
\hat\beta = \hat\alpha \frac{\overline{\zeta_0}-\overline{b}}{\rho_b}.
$$

\end{document}